\documentclass[11pt,nofootinbib,notitlepage,eqsecnum]{revtex4-1}
\usepackage[english]{babel}
\usepackage{bbm}
\linespread{1.3}

\newcommand{\para}[1]{\begin{center}\textit{#1}\end{center}}



\usepackage{subfig}
\usepackage{graphicx,tikz}
\usetikzlibrary{arrows,shapes.geometric,positioning}
\usepgfmodule{shapes}
\usetikzlibrary{decorations.markings,decorations.text,decorations,patterns,calc}
\usepackage{pgfkeys}
\captionsetup{margin=15pt,font=small,labelfont=bf}
\captionsetup[figure]{name=FIG.}


\usepackage{amsmath,amssymb,amsthm,bbold,MnSymbol}

\usepackage{hyperref}

\newcommand{\secref}[1]{section \ref{#1}}

\newcommand{\figref}[1]{Fig. \ref{#1}}
\newcommand{\defref}[1]{definition \ref{#1}}
\newcommand{\lemref}[1]{lemma \ref{#1}}

\newtheorem{lem}{Lemma}
\newtheorem{theorem}{Theorem}
\newtheorem{definition}{Definition}
\newtheorem{notation}{Notation}


\def\dif{\mathrm{d}}
\def\pathdif{\mathcal{D}}


\def\Z{\mathbb{Z}}
\def\R{\mathbb{R}}
\def\C{\mathbb{C}}

\newcommand{\group}[1]{\mathrm{#1}}
\newcommand{\algebra}[1]{\mathfrak{#1}}

\def\SU{\mathrm{SU}}

\def\SO{\mathrm{SO}}

\def\Spin{\mathrm{Spin}}


\newcommand{\threej}[6]{\begin{pmatrix}  #1&#3&#5\\#2&#4&#6\end{pmatrix}}
\newcommand{\sixj}[6]{\begin{Bmatrix}  #1&#3&#5\\#2&#4&#6\end{Bmatrix}}

\newcommand{\fusion}[1]{f^{#1}_{{#1}^+{#1}^-}}

\def\hilbert{\mathcal{H}}

\newcommand{\sqi}[1]{\mathrm{L}^2\!\left(#1\right)}


\newcommand{\ket}[1]{|#1\rangle}
\newcommand{\bra}[1]{\langle #1|}
\newcommand{\scal}[1]{\langle #1\rangle}
\newcommand{\escal}{\langle\cdot|\cdot\rangle}

\newcommand{\mrix}[1]{\mathbbm{#1}}
\renewcommand{\exp}[1]{\operatorname{exp}\left(#1\right)}

\newcommand{\tr}{\operatorname{Tr}}


\def\beq{\begin{equation}}
\def\eeq{\end{equation}}
\def\bq{\begin{equation*}}
\def\eq{\end{equation*}}

\def\kin{\text{kin}}
\def\phys{\text{phys}}
\def\inv{\text{inv}}
\def\Inv{\mathrm{Inv}}
\def\time{\mathcal{T}}
\newcommand{\n}{\mathfrak{n}}
\renewcommand{\l}{\mathfrak{l}}

\begin{document}
\title{Linking covariant and canonical LQG II: Spin foam projector}

\author{Antonia Zipfel}
\email{antonia.zipfel@gravity.fau.de}
\affiliation{Institut  f\"ur Quantengravitation, Department Physik
Universit\"at Erlangen\\ Staudtstrasse 7, D-91058 Erlangen, Germany}

\author{Thomas Thiemann}
\email{thomas.thiemann@gravity.fau.de}
\affiliation{Institut  f\"ur Quantengravitation, Department Physik
Universit\"at Erlangen\\ Staudtstrasse 7, D-91058 Erlangen, Germany}

\begin{abstract}
\begin{center}
{\bf Abstract}\\
\end{center}
In a seminal paper, Kaminski, Kisielowski an Lewandowski for the first time 
extended the definition of spin foam models to arbitrary boundary graphs. This is a prerequisite
in order to make contact to the canonical formulation of Loop Quantum Gravity (LQG) 
whose Hilbert space contains all these graphs. This makes it finally possible to investigate 
the question whether any of the presently considered spin foam models yields a rigging map 
for any of the presently defined Hamiltonian constraint operators. 

In the moment, a description of the KKL extension in terms of Group Field Theory (GFT) is out of technical reach because the interaction part of a GFT
Lagrangian dictates the possible valence of a dual graph and so far is 
geared to duals of simplicial triangulations. To get rid of this restriction one would have to allow all possible interaction terms based on certain invariant polynomials of arbitrarily many 
gauge group elements what is currently out of technical control.
Therefore one has to define
the sum over spin foams with given boundary spin networks in an independent fashion
using natural axioms, most importantly a gluing property for 2-complexes. These axioms are motivated by the requirement that spin foam amplitudes should define a rigging map (physical inner product) induced by the Hamiltonian constraint. 
This is achieved by constructing a spin foam operator $\hat{Z}[\kappa]$ based on abstract 2-complexes $\kappa$ (rather than embedded ones) that acts on the gauge invariant kinematical Hilbert space $\hilbert_{0}$ of Loop Quantum Gravity by identifying the spin nets induced on the boundary graph of $\kappa$ with states in $\hilbert_{0}$. 

In the analysis of the resulting object we are able to identify an elementary spin foam transfer matrix $\hat{Z}$ that allows to generate any finite foam as a finite power of the transfer matrix. It transpires that the sum over spin foams $\kappa$, as written, does not define a projector on the physical Hilbert space.
This statement is independent of the concrete spin foam model and Hamiltonian constraint.
However, the transfer matrix potentially contains the necessary ingredient in order to 
construct a proper rigging map in terms of a modified transfer matrix.
\end{abstract}
\maketitle

\newpage

 \tableofcontents

\newpage

\section{Motivation}
To quantize a Field Theory one can either choose a canonical approach, quantize the Hamiltonian and solve the Schr\"odinger equation, or a covariant one, which rests on the path integral description going back to  Feynman's famous PhD thesis \cite{Feynman:1948}. In Loop Quantum Gravity (LQG), a background independent quantization of General Relativity, the canonical formulation \cite{lqgcan3,lqgcan2}  originates from a reformulation of the ADM action \cite{Arnowitt:1959ah} in terms of gauge connections by Ashtekar and Barbero \cite{Ashtekar:1986yd} while the covariant or spin foam model \cite{baez, lqgcov,Rovelli}, was initiated by Reisenberger's and Rovelli's `sum over histories' \cite{ReisenbergerRovelli97}. In both approaches many technical and structural difficulties arise from the constrained nature of GR deeply rooted in the diffeomorphism invariance of the theory. Particularly, the non-polynomial Hamiltonian constraint, although a quantization has been known for a while (see \cite{Thiemann96a}), is challenging and up to today the physical Hilbert space $\hilbert_{\phys}$ cannot be determined satisfactorily. On the other hand, spin foam models suffer from second class constraint which cannot be implemented strongly. The covariant model has matured a lot but the correct treatment of the constraints is still under debate (see e.g. \cite{Alexandrov:2010un}).

Even though both approaches differ significantly it was often emphasized in the past that they should converge to the same theory. Heuristically, the discrete time-evolution of a spin network on a spatial hypersurface, which defines a basis state in the gauge invariant kinematical Hilbert space of canonical LQG, leads to a colored 2-complex that is the main building block of spin foams. Therefore the partition functions defined by the latter can be either understood as propagator between two 3D geometries or as a rigging map, a generalized projector onto $\hilbert_{\phys}$. This paper will especially focus on the latter train of thoughts.

The subsequent analysis will be mainly based on \cite{Engle:2007uq} (EPRL-model) and  \cite{Kaminski:2009fm} (KKL-model). Closely related to these is the FK-approach \cite{FK}. The boundary space of the EPRL/KKL-model can be formally identified with subspaces of $\hilbert_{0}$ which will be used here in order to define a spin foam operator $\hat{Z}[\kappa]$ for the canonical theory. 
Even if the operators $\hat{Z}[\kappa]$ are equipped with appropriate properties so that the sum $\sum_{\kappa} \hat{Z}[\kappa]$ has a chance to define a projector into $\hilbert_{\phys}$, the object we obtain does not provide a rigging map.
This conclusion is independent of the details of a spin foam model or of a Hamiltonian constraint. To prove that a method to split the operator into smaller building blocks is developed. This splitting procedure is also interesting from a purely technical point of view since it gives a better handle on the sum over all complexes $\kappa$ in the EPRL/KKL-partition function. On the positive side, the splitting 
property just mentioned allows to extract a spin foam transfer matrix which, if proper regularized, defines a modified transfer matrix that potentially yields a proper rigging map.

The paper is organized as follows:
The mathematical foundations for later manipulations of graphs and 2-complexes will be laid in section \ref{sec:2-compl}. For the sake of self-containedness section \ref{sec:model} summarizes and compares a generalization of the EPRL-model with the KKL-model. Furthermore, we will review so-called projected spin networks \cite{Alexandrov:2002br,Dupuis:2010jn} which provide a link between canonical and covariant LQG.
In section \ref{ssec:road} a general framework for merging both theories will be developed guided by the concept of rigging maps or group averaging methods for simpler constrained systems. On this basis, a list of properties that the operator $\hat{Z}[\kappa]$ should satisfy will be deduced. In section \ref{ssec:operfoam} a spin foam operator will be proposed that displays all the features worked out before.
Section \ref{sec:main} contains the proof that each operator $\hat{Z}[\kappa]$ can be split into simple blocks $\hat{Z}$ based on 2-complexes which only contain a small number of internal vertices all connected to an initial spin net (see section \ref{ssec:time}). This result can be used to show that the proposed projector is not of the required form (section \ref{ssec:ttproj}). 
However, $\hat{Z}$ may still contain the necessary information in order to construct 
a spin foam model using a modification of $\hat{Z}$ with the properties of a rigging map.
We conclude by summarizing and discussing the results in section \ref{sec:conclusion}.
\section{Foams and graphs}
\label{sec:2-compl}
The first part of this section gives a short review of the kinematical Hilbert space used in LQG focussing on spin network functions and will be followed by an introduction of piecewise linear complexes.
\subsection{Spin networks}
\label{ssec:spinnets}
The kinematical Hilbert space $\hilbert_{\kin}$ of canonical LQG is the space of complex valued, square-integrable functions $\Psi[A]$ of (generalized) connections $A$ on a spatial hypersurface $\Sigma$ embedded in space-time $\mathcal{M}$. A connection on a manifold can be reconstructed from the set of holonomies
\beq
\label{eqn:holo}
h_p(A)=\mathcal{P}\exp{\int_p A}
\eeq
along all (semianalytic) paths\footnote{A path is an equivalence class of curves under reparametrization and retracing.} $p$ where $\mathcal{P}$ denotes path ordering. Likewise, holonomies provide a map from the groupoid of paths into $\SU(2)$. Instead of evaluating a holonomy along a single path one can also use finite systems of path:
\begin{definition}
A \underline{semianalytic graph} $\gamma$ embedded in $\Sigma$ is a finite set of oriented \footnote{Taking the holonomy along a path always implies an orientation of the path.}semianalytic paths (links $\l$) which intersect at most in their endpoints (nodes $\n$).

A graph is called \underline{closed} if every node is the endpoint of at least two links and it is called \underline{connected} if it cannot be written as the disjoint union of two graphs.

In the following, $E(\gamma)$ and $V(\gamma)$ will denote the set of all links and nodes in $\gamma$, respectively.
\end{definition}
The Hilbert space $\hilbert_{\kin}$ is spanned by cylindrical functions 
\beq
\Psi_{\gamma}(A):=\psi(h_{\l_1}(A),\dots,h_{\l_n}(A))
\eeq
where $\psi$ is a function on $\SU(2)^n$ and the scalar product is given by the Ashtekar-Lewandowski measure $\mu_{AL}$ which reduces to the Haar measure $\mu_H(\l)$ of $\SU(2)$ on every link $\l\in\gamma$ (compare with \cite{Ashtekar:1994mh}). More precisely, for a fixed graph $\gamma$ with $n$ links $\hilbert_{\kin,\gamma}$ is isomorphic to $\sqi{\SU(2)^n\!,\mu_H}$. Let $j:=\{j_{\l}\}$ be a labeling of the links by irreducible representations $\hilbert_{j_{\l}}$ of dimension $d_{j_{\l}}:=2j_{\l}+1$ and $m:=\{m_{\l}\}$ and $n:=\{n_{\l}\}$ be magnetic indices associated to the target $t(\l)$ and source $s(\l)$ of $\l\in E(\gamma)$. Since the matrix elements of the Wigner matrices $R^{j_{\l}}(g_{\l})$, $g_{\l}\in\SU(2)$, define an orthogonal basis of $\hilbert_{j_{\l}}$ the functions
\beq
\label{eqn:nongauge}
T_{\gamma,j,m,n}(\{g_{\l}\})=\prod_{\l\in E(\gamma)}\sqrt{d_{j_{\l}}}\;R^{j_{\l}}_{m_{\l} n_{\l}}(g_{\l})~,
\eeq
build an orthonormal basis of $\sqi{\SU(2)^n\!,\mu_H}$. To restore gauge invariance one needs to assign an intertwiner to each node $\n$, that is a group homomorphism $\iota:V_1\to V_2$. At the node $\n$ the space $V_1$ is formed by the tensor product of all irreducible representations $\hilbert_{j_{\l_i}}$ assigned to the outgoing links $\l_i$ at $\n$ and $V_2$ equals the tensor product of all irreducible representations $\hilbert_{j_{\l'_i}}$ assigned to the ingoing links $\l'_i$:
\beq
\iota_{\n}:\hilbert_{j_{\l_1}}\otimes\cdots\otimes\hilbert_{j_{\l_k}}\to\hilbert_{j_{\l'_1}}\otimes\cdots\otimes\hilbert_{j_{\l'_r}}~.
\eeq
The space of all intertwiners, $\iota_{\n}$ constitutes a Hilbert space $\hilbert_{\n,\inv}$ when equipped with a scalar product $(\cdot,\cdot)$
\begin{equation*}
(\tilde{\iota_{\n}},\iota_{\n})
	=(\tilde{\iota}_{\n}^{\dagger}) ^{n_{\l'_1}\cdots n_{\l'_r}}_{\;\;\;m_{\l_1}\cdots m_{\l_k}} \;
	(\iota_{\n})^{m_{\l_1}\cdots m_{\l_k}}_{\;\;\;n_{\l'_1}\cdots n_{\l'_r}}
	:=\tr(\tilde{\iota}^{\dagger}_{\n}\iota_{\n})~.
\end{equation*}	
defined by the natural contraction of magnetic indices $m_{\l_i},n_{\l'_i}$ where $\dagger$ denotes hermitian conjugation. Due to the compatibility conditions of recoupling theory (see appendix \ref{app:HA}) $\hilbert_{\n,\inv}$ is finite dimensional. Equivalently we could define $\iota_{\n}$ to be an invariant tensor
 \beq
\iota_{\n}: \bigotimes_{\l' \text{ incoming}}\hilbert^{\ast}_{j_{\l'}}\otimes\bigotimes_{\l \text{ outgoing}}\hilbert_{j_{\l}}\to\C~.
\eeq 
where $\hilbert_j^{\ast}$ is the contragredient representation. Therefore, we often will identify $\hilbert_{\n,\inv}$ with the space of invariant tensors
\beq 
\label{eqn:intbasic}
\Inv\left(\bigotimes_{\l' \text{ incomming}}\hilbert^{\ast}_{j_{\l'}}\otimes\bigotimes_{\l \text{ outgoing}}\hilbert_{j_{\l}}\right)\;.
\eeq
equipped with the trace as inner product. 

We are now ready to give an explicit definition of the gauge invariant kinematical Hilbert space $\hilbert_{0}$, which will be mostly referred to as kinematical space since non-invariant elements will not be considered. The space is spanned by so called \emph{spin network functions} 
 \begin{align}
\label{eqn:gauginv}
\begin{split}
T_{\gamma,j,\iota}(\{g_{\l}\})
		&:=\prod_{\l\in E(\gamma)}\sqrt{d_{j_{\l}}}\;[R^{j_{\l}}(g_{\l})]^{n_{s(\l)}}\!_{ m_{t(\l)}}\prod_{\n\in V(\gamma)} (\iota_{\n})^{\{m_{t(\l)=\n}\}}\!_{\{n_{s(\l')=\n}\}}\\
		&:=\tr\left[\prod_{\l\in E(\gamma)}\sqrt{d_{j_{\l}}}\;R^{j_{\l}}(g_{\l})\prod_{\n\in V(\gamma)}  \iota_{\n}\right]~.
\end{split}
\end{align}
This function is truly gauge invariant if all magnetic indices are contracted or equivalently if the graph $\gamma$ is \emph{closed}.

An intertwiner depends in general on the ordering\footnote{Different orderings can be related by a change of basis in the intertwiner space.} of the tensor product \eqref{eqn:intbasic} which is why an orientation of the nodes has to be introduced indicating the order of the links.
\begin{definition}
A \underline{spin network} (short: spin net)  $(\gamma,j,\iota)$ consists of an oriented semianalytic graph, a labeling of links by irreps $j:=\{j_{\l}\}$ and an assignment of intertwiners $\iota:=\{\iota_{\n}\}$ to the nodes (see \figref{fig:labelingspinnet}).
\end{definition}
\begin{figure}[t]
 \begin{center}
	\includegraphics{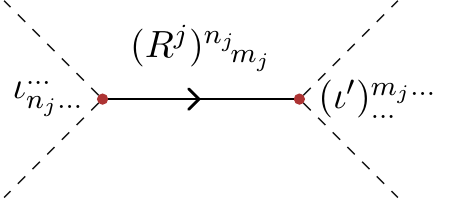}
 \end{center}
\caption{A link of a Spin-network with intertwiner $\iota\in\Inv\left(\cdots\otimes\hilbert_j\otimes\cdots\right)$ associated to the source and $\iota'\in\Inv\left(\cdots\otimes\hilbert^{\ast}_j\otimes\cdots\right)$ associated to the target.}
\label{fig:labelingspinnet}
\end{figure}
In order that $(\gamma,j,\iota)$ labels a linearly independent set of states we require $j_{\l}\neq0$ for all $\l\in E(\gamma)$ and exclude 2-valent nodes whose adjacent links have co-linear tangents\footnote{Not excluded are two-valent intertwiners whose tangents are not co-linear}.
The complex conjugate of a spin network $T_{\gamma,j,\iota}$ can be obtained by reversing the orientation of all links of $\gamma$ since $\overline{R^{j}_{mn}(g)}=R^{j}_{nm}(g^{-1})$.

The \emph{trace} of a spin net is a map  $(\gamma,j,\iota)\to\C$,
\beq
\label{eqn:sntr}
\tr (\gamma,j,\iota):=\tr\left[\prod_{\n} \iota_{\n}\right]
\eeq
defined by contracting the intertwiners. 
\subsection{Complexes}
Since piecewise linear cell complexes are fundamental for the construction of the covariant model, they will be briefly reviewed in the sequel to clarify the notation and set-up the ground for the later considerations. Good introductions to piecewise linear topology can be found for example in \cite{pl1} and \cite{pl2}.
\begin{definition}[\cite{pl1}]\mbox{}
\label{def:complex}
\vspace*{-5pt}
\begin{itemize}
\item A compact \underline{$n$-cell} in $\R^m$, $m\geq n$, is the convex hull of a finite set of affine independent points, called \underline{vertices}, which span an n-dimensional affine subspace.
\item Let $A$, $B$ be compact cells and $P$ be the hyperplane of dimension $m$ spanned by $B$. If $P\cap A=B$ and $P\cap(A\,\backslash\, B)=\emptyset$, then $B$ is an \underline{m-face} of $A$. It is called \emph{proper} if the dimension of $B$ is strictly lower than the dimension of $A$. The set of all proper faces of $A$ is called the frontier $\dot{A}$ of $A$. 
\item An \underline{n-complex} $\mathcal{C}$ is a finite union of compact $m$-cells, $m\leq n$, with at least one compact $n$-cell such that the following two conditions hold:
\begin{enumerate}
\item If $A\in\mathcal{C}$ then all faces of $A$ are in $\mathcal{C}$.
\item If  $A,B\in\mathcal{C}$ then either $A\cap B=\emptyset$ or $A\cap B\in\mathcal{C}$ is a common face of $A$ and $B$.
\end{enumerate}
\item The union of all cells of $\mathcal{C}$ is called the \underline{underlying polyhedron} $\overline{\mathcal{C}}$.
\end{itemize}
\end{definition}
A complex is a collection of all building blocks together with their gluing relations along common faces while the underlying polyhedron is the whole object glued together. If not necessary we will not make this explicit distinction to simplify the notation but it should be kept in mind that these are in principle different objects. For instance $\overline{\mathcal{C}}$ is a topological space while $\mathcal{C}$ itself is just a set. 

A compact n-cell is homeomorphic to an n-ball and the frontier homeomorphic to an $(n-1)$-sphere (for a proof see e.g. \cite{pl1,pl2}). This can be understood by an easy example: Let $f$ be a 2-cell with a vertex $v$ in its interior and an edge $e$ joining $v$ and another vertex of $f$ (see \figref{fig:noface}). If $(e,v)$ would be in the frontier of $f$ then there would exist a straight line $P$ with $P\cap f=e$. But such line would  divide $f$ into two separate faces (figure on the right). Therefore the figure on the left of \figref{fig:noface} is not a convex cell. On the other hand, it is also not a 2-complex since $f\cap e=e$ is not a face of the 2-cell $f$. This is summarized by 
\begin{lem}
\label{lem:v2f}
Every vertex of a 2-cell $f\in\mathcal{C}$ is contained in exactly two 1-cells in the frontier of $f$.
\end{lem}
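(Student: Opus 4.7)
The plan is to exploit the fact that a 2-cell is a convex polygon in a 2-dimensional affine plane, together with the face conditions of \defref{def:complex}.

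First, I would note that since $f$ is the convex hull of finitely many points spanning a 2-dimensional affine subspace, $f$ is a compact convex polygon in that plane. Its vertices are the extreme points of this convex hull and can be cyclically ordered as $v_1,\dots,v_k$ along $\partial f$, with adjacent edges $e_i=[v_i,v_{i+1}]$ where the indices are taken mod $k$.

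Next, I would identify the 1-cells in the frontier $\dot{f}$. A 1-cell $B\subset f$ belongs to $\dot{f}$ iff the affine line $P$ containing $B$ satisfies $P\cap f=B$ and $P\cap(f\setminus B)=\emptyset$, i.e., $P$ meets $f$ only along $B$. Since $f$ is a planar convex body, such a line must be a supporting line of $f$, and $P\cap f$ is the corresponding face of the polygon in the classical sense. For the intersection to be 1-dimensional, it must equal one of the edges $e_i$. Conversely, the line supporting any $e_i$ clearly satisfies both conditions. Hence the 1-cells in $\dot{f}$ are precisely $e_1,\dots,e_k$.

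The lemma then follows from the elementary fact that each vertex $v_i$ of a convex polygon lies in exactly two of its edges, namely $e_{i-1}$ and $e_i$. The main subtlety, and the step that requires care, is ruling out any additional 1-face through $v_i$: any line through $v_i$ other than those supporting $e_{i-1}$ and $e_i$ either cuts into the interior of $f$ (so that $P\cap(f\setminus B)\neq\emptyset$) or meets $f$ only at $v_i$ (so that $P\cap f$ is 0-dimensional), and neither case produces a 1-cell of $\dot{f}$ containing $v_i$. This exhausts the possibilities and completes the argument.
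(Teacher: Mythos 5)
Your argument is correct in substance, but it reaches \lemref{lem:v2f} by a different route than the paper. The paper does not argue via supporting lines at all: it appeals to the quoted fact (\cite{pl1,pl2}) that the frontier of a compact $n$-cell is homeomorphic to an $(n-1)$-sphere, so that $\dot{f}$ is a circle assembled from the 1-cells and vertices of $f$, whence each vertex meets exactly two 1-cells; the discussion around \figref{fig:noface} only illustrates why a segment ending at an interior point cannot be a face. Your proof is instead a self-contained piece of elementary planar convex geometry: you identify the proper 1-faces of the polygon with the edges cut out by its supporting lines and then count edges at each vertex. What your route buys is independence from the cited PL-topology theorem, staying entirely within \defref{def:complex}; what the paper's route buys is generality, since the sphere statement works verbatim for cells of any dimension without the planar polygon picture. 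One caveat worth noting: the face condition of \defref{def:complex} as literally written ($P\cap f=B$ with $P$ the affine hull of $B$) is also satisfied by a chord of $f$ whose endpoints lie on the topological boundary, and the line of such a chord is \emph{not} a supporting line; so your step ``such a line must be a supporting line'' is really an invocation of the standard convex-geometry notion of face (intersection with a supporting hyperplane), which is clearly what the paper intends --- indeed the lemma itself would fail under the literal reading, since every vertex lies on infinitely many chords. With that (intended) definition understood, your identification of the 1-cells of $\dot{f}$ with the polygon edges, and hence the count of exactly two per vertex, is sound.
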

The reader might be concerned that convexity is to strong if 2-complexes shall describe the time evolution of a spin-network. Indeed, for a semianalytic link $\l$ the `time-evolved' face $\l\times[0,1]$ will certainly not define a 2-cell. Even if the link is approximated by p.l. 1-cells. Nevertheless, it is, of course, possible to approximate $\l\times[0,1]$ by a collection of convex faces which itself defines a 2-complex. Since such an approximation is somewhat arbitrary, the final model should be independent of this. Let us finally remark that the above lemma is still valid if we drop convexity as long as a face has no self-intersections, i.e. is homeomorphic to a 2-ball. The latter will be always assumed! Also all following assertions and theorems can be formulated and proven without using explicitly convexity. It is just convenient to keep it for the moment while it has to be relaxed later on\footnote{A more appropriate choice would be to define the model on ball rather than p.l.- complexes (see section \ref{ssec:absvembed}).}.    
\begin{figure}[t]
 \begin{center}
	\subfloat{\includegraphics{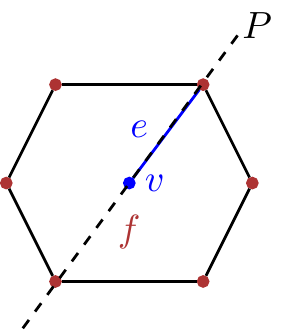}}
	\hspace{2.5 cm}
	\subfloat{\raisebox{.18\height}{\includegraphics{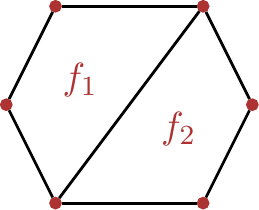}}}
 \end{center}
\caption{A face with a single internal vertex and edge is not a 2-complex. By contrast, the right picture is a 2-complex consisting of two faces glued together along their common 1-cell.}
\label{fig:noface}
\end{figure}
Let us now continue with the description of n-complexes. To efficiently characterize their local properties we introduce the following notations: 
\begin{notation}\nonumber\mbox{}
\begin{itemize}
\item A cell $c$ is called \underline{adjacent} to a different cell $c'$ if $c\cap c'\neq\emptyset$
\item The set of n-cells of $\kappa$ is denoted by $\kappa^{(n)}$.
\item The \underline{interior} $c_{int}$ of $c$ contains all points $p\in c$ which are not contained in any proper face of $c$
\item The \underline{vicinity} $\mathcal{V}(c)$ of a cell $c$ is the set of all cells $b$ for which $c\in\dot{b}$.\footnote{Note, $\mathcal{V}(c)$ is not a complex itself, since the faces of a cell $b\in\mathcal{V}(c)$ are only contained in $\mathcal{V}(c)$ if they are adjacent to $c$. Whereas the frontier of an $n$-cell is an $(n-1)$-complex.}
\item The total number of cells in some set $S$ is denoted by $|S|$
\end{itemize}
\end{notation}
\begin{definition}\mbox{}
\begin{enumerate}
\item A complex $\mathcal{C}$ is called \underline{connected} if its underlying polyhedron is connected. Thus for any two sub-complexes $\mathcal{C}_1$, $\mathcal{C}_2$ such that $\mathcal{C}=\mathcal{C}_1\bigcup\mathcal{C}_2$ there exist at least one cell $c$ satisfying $\dot{c}\cap\mathcal{C}_1\neq\emptyset\neq\dot{c}\cap\mathcal{C}_2$
\item If $\mathcal{C}_1$ and $\mathcal{C}_2$ are two complexes then $\mathcal{C}_1$ is called a \underline{subdivision} of $\mathcal{C}_2$ iff $\overline{\mathcal{C}_1}=\overline{\mathcal{C}_2}$ and every cell of $\mathcal{C}_1$ is a subset of some cell of $\mathcal{C}_2$. A subdivision is called \underline{proper} if $|\mathcal{C}_1|>|\mathcal{C}_2|$.
\end{enumerate}
\end{definition}
For LQG only a special kind of 1- and 2-complexes, graphs and foams, are of interest. An abstract graph is a 1-complex without isolated vertices while a foam is a 2-complex whose boundary graph is closed (see below). For convenience 1-cells are called edges and 2-cells faces. Furthermore, vertices in a graph will be mostly called `nodes' and labeled by $n$ while edges in a graph will be called mostly `links' and labeled by $l$ to distinguish between graphs and 2-complexes. 

A priori we also want to work with complexes without specifying an embedding. Thus, all attributes like orientation and coloring of a complex must be defined in a way independent of the embedding.
\begin{definition}
\label{def:Or1}
\mbox{}
\begin{itemize}
\item The orientation of an edge $e$ determines source $s(e)$ and target $t(e)$ vertex of $e$.
\item Suppose $\dot{f}$ consist of $n$ edges then define a one-to-one map $\mathcal{Z}_f:\{1,\dots,n\}\to\{e|e\in\dot{f}\}$ so that $\mathcal{Z}_f(i)\mapsto e_i$ and $e_i\cap e_{i+1}=v_i$ is a vertex of $\dot{f}$ for all $i<n$ and $e_1\cap e_n=v_0$.
\item The face orientation is the equivalence class of $\mathcal{Z}_f$ under cyclic permutations.
\end{itemize}
\end{definition}
Because $\dot{f}$ constitutes a closed loop (\lemref{lem:v2f}) there exist exactly two inequivalent orientations (cyclic/anticyclic) of a 2-cell $f$. Furthermore, $\mathcal{Z}_f$  induces an edge orientation choosing $s(e_i)=e_{i-1}\cap e_i$ and $t(e_i)=e_{i}\cap e_{i+1}$. This orientation is not unique if the edge is contained in the frontier of more than one face, i.e. the induced orientation of $f$ can be opposite to that of $f'$ on the common edge $e$. In this case the orientation of $f$ is \emph{antidromic} to that of $f'$, otherwise it is \emph{dromic} (see \figref{fig:faceorient}). Due to convexity $f$ and $f'$ intersect at most in one edge so that this definition is consistent. Even in the more general case, when faces are allowed to intersect in more than one edge but the frontiers $\dot{f}$, $\dot{f'}$ are still homeomorphic to $S^1$, the induced orientation on all common edges are either all opposed or all equal.

\begin{figure}[t]
 \begin{center}
	\subfloat{\includegraphics{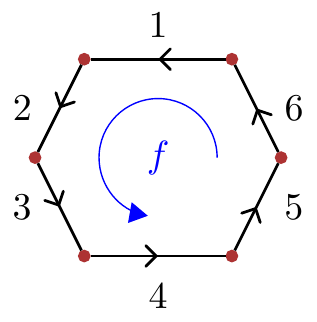}}
	\hspace{2 cm}
	\subfloat{\includegraphics{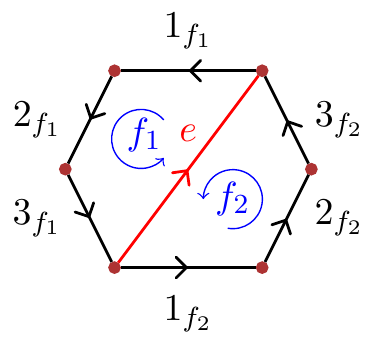}}
 \end{center}
\caption{A 2-cell can be oriented by successively counting the edges $e_i$ in $\dot{f}$. This induces an orientation (black arrows) on $e_i$ with $s(e_1)=e_1\cap e_6$. When the face is subdivided by an edge $e$ (red) then the faces $f_1,f_2$ are oriented such that the induced orientation on $e\in\dot{f}$ is preserved. Thus $f_1$ and $f_2$ are oriented antidromic and here $f_1$ is ingoing to while $f_2$ is outgoing of the red edge.} 
\label{fig:faceorient}
\end{figure}
Independently from the face orientation one can still assign an edge orientation. If the induced orientation of $f$ agrees with this independent orientation then $f$ is \emph{ingoing} otherwise it is called \emph{outgoing} with respect to the given edge. 

Besides the above, the labeling by intertwiners (see below) requires an ordering:
\begin{definition}\mbox{}\\
Let $c$ be an n-cell of the complex $\mathcal{C}$ and $\mathcal{V}^{(n+1)}(c)$ the set of all $(n+1)$ cells in the vicinity of $c$ then the bijection
\beq
\mathcal{Z}_c:\{1,\dots,m=|\mathcal{V}^{(n+1)}(c)|\}\to\mathcal{V}^{(n+1)}(c) 
\eeq
is called an \underline{ordering} of $c$. Two orderings are equivalent if they only differ by cyclic permutations. 
\end{definition}
In contrast to face orientations there exist more than just two inequivalent orderings, for instance a four valent internal edge has six inequivalent orderings.
\subsection{Foams}
\label{ssec:foams}
As mentioned above, not all 2-complexes can be used in LQG. For example, if one adds a single vertex, which is not contained in any edge or face, to a given 2-complex then this is still a well-defined 2-complex but does not give rise to a well-defined spin foam amplitude. To link canonical and covariant LQG we additionally need a method how to associate graphs and 2-complexes. 
 \begin{definition}\mbox{}
 \label{def:foam}
 \begin{itemize}
 \item The interior $\kappa_{int}$ of a 2-complex $\kappa$ is the set of all faces, all edges, which are contained in more than one face, and all vertices contained in more than one internal edge.  
 \item The boundary graph $\partial\kappa$ of a 2-complex $\kappa$ is the set of all edges (links) contained in only one face and vertices (nodes) contained in only one internal edge $e\in\kappa^{(1)}_{int}$. 
\item A graph $\gamma$ is said to \underline{border} $\kappa$ iff there exists a one-to-one (affine) map $c:\gamma\times[0,1]\to \kappa$ mapping each face $\l\times[0,1]$ and each edge $\n\times[0,1]$ of $\gamma\times [0,1]$ to a unique face and a unique internal edge in $\kappa$ respectively.
\item A 2-complex $\kappa$ whose boundary graph $\partial\kappa$ is the disjoint union of connected graphs $\gamma$ bordering $\kappa$ is called a \underline{foam}.
\end{itemize}
\end{definition}
We alert the reader that by definition a graph has no faces.

In the literature the boundary graph of a foam is often defined by either just the combinatorial definition (see e.g. \cite{Rovelli}) or just by bordering graphs (see appendix of \cite{lqgcov}). Neither of this is sufficient since for example $\partial\kappa$ is in general not a well-defined graph. Particularly, if the intersection point $\n$ of two or more boundary links is contained in several internal edges then $\n\notin\partial\kappa$ and consequently $\partial\kappa$ is not even a 1-complex.  On the other hand, a graph bordering $\kappa$ does not have to be closed.
\begin{lem}
\label{lem:boundary}
Let $\kappa$ be a foam then the boundary graph is the disjoint union of closed connected graphs. A face $f\in\kappa$ intersects a connected graph $\gamma\subset\partial\kappa$ at most in one link $l_f$.
\end{lem}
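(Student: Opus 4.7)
The lemma splits into two claims: (i) each connected component of $\partial\kappa$ is a \emph{closed} graph, and (ii) any face meets a given connected component in at most one link.

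For (i), the disjoint-union and connectedness parts are immediate from the definition of a foam; only closedness requires work. My plan is to fix a node $\n\in\partial\kappa$ and show it carries $\geq 2$ incident links. By definition of $\partial\kappa$, the node $\n$ is contained in exactly one internal edge $e$, and because $e$ is internal it belongs to at least two faces $f_1,\dots,f_k$ with $k\geq 2$. Applying \lemref{lem:v2f} to each $f_j$, the vertex $\n$ is incident to exactly two edges of $\dot{f}_j$; one is $e$, the other I will call $e'_j$. I then argue two points: each $e'_j$ is itself a boundary edge (were it internal, $\n$ would sit in two distinct internal edges $e$ and $e'_j$, violating $\n\in\partial\kappa$); and the $e'_j$ are pairwise distinct (a coincidence $e'_j=e'_{j'}$ would place that edge into both $f_j$ and $f_{j'}$, making it internal and reintroducing the same contradiction). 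This yields $k\geq 2$ distinct boundary edges at $\n$. Because $\partial\kappa$ is a disjoint union of \emph{graphs}, the endpoints of each $e'_j$ are genuine nodes of $\partial\kappa$, so the $e'_j$ all lie in the connected component containing $\n$, proving closedness.

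For (ii), I would exploit the injectivity of the bordering map directly. Suppose a face $f\in\kappa$ meets a connected component $\gamma\subset\partial\kappa$ in two distinct links $l_1,l_2$. Identifying $\gamma$ with its image under the bordering map $c:\gamma\times[0,1]\to\kappa$, write $l_i=c(\tilde l_i\times\{0\})$ for links $\tilde l_i$ of the abstract graph $\gamma$. The face $c(\tilde l_i\times[0,1])$ contains $l_i$ on its frontier; since $l_i\in\partial\kappa$ lies in only one face, that face must be $f$. Hence $c(\tilde l_1\times[0,1])=f=c(\tilde l_2\times[0,1])$, and the one-to-one property of $c$ gives $\tilde l_1=\tilde l_2$, contradicting $l_1\neq l_2$.

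The delicate point throughout is bookkeeping the distinction between being a \emph{boundary edge} of $\kappa$ (the combinatorial condition of lying in exactly one face) and being a \emph{link of $\partial\kappa$ as a graph} (which additionally requires the endpoints to be genuine nodes of $\partial\kappa$). The authors stressed in the paragraph preceding the lemma that for a generic 2-complex these two notions can diverge; the foam hypothesis—together with the fact that $\partial\kappa$ is already given as a disjoint union of graphs—is precisely what allows the local argument in (i) to upgrade ``boundary edges at $\n$'' to ``links in the same component as $\n$,'' and it is also what makes the appeal to injectivity of the collar in (ii) legitimate.
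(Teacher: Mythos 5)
Your proof is correct, and for the second assertion (a face meets a connected component of $\partial\kappa$ in at most one link) it coincides with the paper's argument: both reduce the claim to the injectivity of the bordering map $c:\gamma\times[0,1]\to\kappa$, since the collar face over each link is the unique face containing that link. For closedness, however, you take a genuinely different route. The paper argues by contradiction and again leans on the bordering map: if a node $\n$ had only one incident link $\l$, then the internal edge $e_{\n}=c(\n\times[0,1])$ would lie in only the single face generated by $\l\times[0,1]$, making it a boundary edge and contradicting its being internal. You instead argue directly and purely combinatorially: the unique internal edge $e$ at $\n$ lies in $k\geq 2$ faces, \lemref{lem:v2f} produces a second frontier edge $e'_j\neq e$ at $\n$ in each such face, and the boundary/interior dichotomy forces each $e'_j$ to be a boundary link and the $e'_j$ to be pairwise distinct. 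What your version buys is that closedness is seen to rely only on the combinatorial definitions of $\kappa_{int}$ and $\partial\kappa$ together with the hypothesis that $\partial\kappa$ is a genuine graph, not on the collar at all; it also makes explicit the two small steps (the second edge is a boundary link; the edges from different faces are distinct) that the paper's shorter argument leaves implicit, and it yields the slightly stronger local statement that the boundary valence of $\n$ is at least the number of faces through $e$. What the paper's version buys is brevity, at the price of implicitly identifying all faces through $e_{\n}$ with collar faces. Your closing remark on the distinction between ``boundary edge of $\kappa$'' and ``link of the graph $\partial\kappa$'' is exactly the point the authors flag before the lemma, and you invoke the foam hypothesis in the right places.
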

\begin{proof} 
Suppose $\gamma\in\partial\kappa$ is not closed then there is at least one node $\n$ adjacent to one and only one link $l$ in the boundary graph. Since $\gamma$ is bordering $\kappa$, $\n$ is also an endpoint of an internal edge $e_{\n}$. But $e_{\n}$ is contained in only one face, namely the face generated by $[0,1]\times \l$ and consequently $e_{\n}\in\partial\kappa$. $\lightning$.

Since whenever a connected graph $\gamma\in\partial\kappa$ borders $\kappa$ there exists a one-to-one affine map $\gamma\times[0,1] \to\kappa$, this  implies that a face $f$ cannot intersect $\gamma$ in more than one link. 
\end{proof}
Note, \lemref{lem:boundary} does not exclude faces intersecting the boundary graph in several disconnected graphs $\gamma,\gamma'\in\partial\kappa$, $\gamma\cap\gamma'=\emptyset$.
\begin{lem}
\label{lem:intvert}
Let $v$ be an internal vertex of the foam $\kappa$ then all edges $e\in\mathcal{V}(v)$ are internal.
\end{lem}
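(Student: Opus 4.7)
The plan is to argue by contradiction. Suppose there were an edge $e\in\mathcal{V}(v)$ that is \emph{not} internal. Unwinding the definition of $\kappa_{int}$, ``not internal'' for a 1-cell means $e$ is contained in at most one (hence exactly one, since $e\in\mathcal{V}(v)$ implies $e$ exists as an edge adjacent to a vertex of the complex) face. By definition this puts $e$ into the boundary graph $\partial\kappa$.

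Next I would invoke \lemref{lem:boundary}: because $\kappa$ is a foam, $\partial\kappa$ is the disjoint union of \emph{closed} connected graphs. In particular $\partial\kappa$ is itself a well-defined 1-complex whose 0-cells are the boundary vertices of $\kappa$. Since $e$ is a 1-cell of $\partial\kappa$ and $v\in\dot{e}$, the vertex $v$ must also belong to $\partial\kappa$ (otherwise the link $e$ would have a dangling endpoint, contradicting closure of its connected component, exactly as in the proof of \lemref{lem:boundary}). By the definition of $\partial\kappa$, any vertex lying in $\partial\kappa$ is contained in only one internal edge of $\kappa$. This directly contradicts the hypothesis that $v$ is internal, which requires $v$ to lie in more than one internal edge.

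The only delicate point is the step ``$v\in\partial\kappa$''. The pathology discussed right after Definition \ref{def:foam} shows that in a generic 2-complex the endpoint of a boundary link can fail to be a boundary vertex, which is exactly why $\partial\kappa$ need not be a 1-complex there. The foam hypothesis rules this out through \lemref{lem:boundary}, so recognizing that this is the \emph{only} obstruction, and that it has already been dispatched, is the main conceptual content of the argument; no further calculation is needed.
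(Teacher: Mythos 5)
Your argument is essentially the paper's: the paper also assumes $e\in\mathcal{V}(v)$ lies in $\partial\kappa$ and derives the contradiction that, since the internal vertex $v$ cannot be a boundary node, $\partial\kappa$ would fail to be a graph, violating the foam definition; you simply run the same contradiction in the other direction (the foam hypothesis forces $v\in\partial\kappa$, which is incompatible with $v$ lying in more than one internal edge). The only cosmetic difference is that the decisive point is complex-hood of $\partial\kappa$ (endpoints of its links must be among its nodes) rather than closedness per se, but your appeal to \lemref{lem:boundary} covers this and no gap results.
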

\begin{proof}
Suppose $e\in\mathcal{V}(v)$ is an element of $\partial\kappa$ but since $v\notin\partial\kappa$ then $\partial\kappa$ is not a graph.~$\lightning$
\end{proof}
Lemma \ref{lem:boundary} and \lemref{lem:intvert} also show that every face has at least two internal edges.
\begin{definition}
\label{def:vertexgr}\mbox{}\\
Subdivide all edges $e$ adjacent to an internal vertex $v\in\kappa$ by a vertex $m(e)$ in the interior of $e$ and all faces $f\in\mathcal{V}(v)$ by an edge $e(f)$ with endpoints $m(e)$ and $m(e')$ whenever $e,e'\in\dot{f}$ and $e,e'\in\mathcal{V}(v)$. This yields a 1-complex $\gamma_v=\{m(e),e(f)|e,f\in\mathcal{V}(v)\}$ called vertex boundary graph.
\end{definition}
Since every $e\in\mathcal{V}(v)$ is contained in at least two faces (\lemref{lem:intvert}) and because of \lemref{lem:v2f},  $\gamma_v$ is the disjoint union of \emph{closed} connected graphs! 
\begin{figure}[h]
 \begin{center}
	\includegraphics{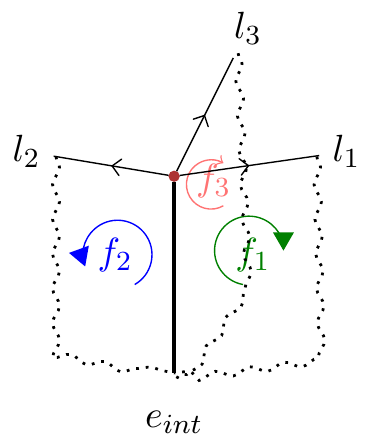}
 \end{center}
\caption{The face orientation induces an orientation on the boundary links $l_i$ while the ordering on the internal edge $\mathcal{Z}_{e_{int}}:\{1,2,3\}\to f_1,f_2,f_3$ induces the ordering on the boundary node (red) such that $l_i$ is the unique link contained in $f_i$.} 
\label{fig:edgeordering}
\end{figure}
\begin{definition}
\label{def:orfoam}
An \underline{oriented foam} is a foam $\kappa$ whose edges and faces are oriented such that all faces $f$ touching the boundary graph $\partial\kappa$  
are ingoing to $\l_f=f\cap\partial\kappa$.
Furthermore, all internal edges $e$ carry an ordering $\mathcal{Z}_{e}$ which induces an ordering $\mathcal{Z}_{\n}$ on the boundary nodes $\n$ by $\mathcal{Z}_{\n}(\l_f)=\mathcal{Z}_{e_{\n}}(f_{\l})$ where $e_{\n}$ is the unique internal edge with $\n\in\dot{e}_{\n} $ and $f_{\l}$ is the unique face containing the wedge spanned by $e_{\n}$ and the boundary link $\l_f$ (see \figref{fig:edgeordering}) \footnote{An ordering of internal vertices is not necessary.}.

Since $\partial\kappa$ borders $\kappa$, internal edges $e$ intersecting the boundary graph in a connected graph $\gamma$ are either all in- or all outgoing of $\gamma$ corresponding to the embedding $\gamma\times[0,1]$ respectively $\gamma\times[-1,0]$. If all internal edges are outgoing of $\gamma$ it is called \underline{initial} and otherwise \underline{final}.  
\end{definition}
Below, subdivisions of oriented foams play a major role for example in order to construct vertex graphs or to analyze equivalence classes within the model. A subdivision of a foam should again yield a well defined foam, e.g. it is not allowed to split a boundary link without splitting the ingoing face as well. Moreover, the orientation of $\kappa$ should be preserved: Suppose we split an edge $e\in\kappa$ by a vertex $v_0$, then the new edges $e_1\cup e_2=e$ obey $s(e_1)=s(e)$, $t(e_1)=v_0=s(e_2)$ and $t(e_2)=t(e)$, if $e$ is internal then $e_1$,$e_2$ inherit the order $\mathcal{Z}_e$ of $e$. If $e\in\partial\kappa$ then $v_0$ is adjacent to only two boundary links and the order is unique.

Let $v,v'\in\dot{f}$ be two vertices such that linking $v$ and $v'$ by an edge $e_0$ in $f$ yields two new faces $f_1\cup f_2=f$. The new faces inherit the orientation of $f$ so that the induced orientation on all old edges is preserved while on $e_0$ the orientations of $f_1$ and $f_2$ are antidromic. Therefore, the direction of $e_0$ can be chosen freely (see \figref{fig:faceorient}).

For example, the induced as well as the free edge orientation on the half-edges $e_{m(e)},e'_{m(e')}$ connecting $v\in\kappa_{int}$ and vertices $m(e),m(e')$ of a vertex graph $\gamma_v$ (see \defref{def:vertexgr}) is preserved whereas $e(f)$ in $\gamma_v$ is oriented such that the wedge spanned by $e_{m(e)},e'_{m(e')}$ is outgoing (see \figref{fig:vertexboundary}). 

\begin{figure}[t]
 \begin{center}
	\subfloat{\includegraphics{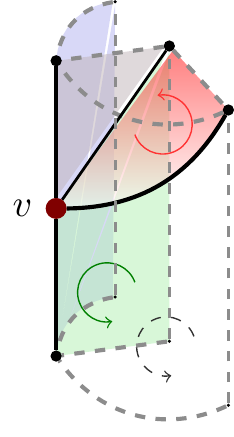}}
	\hspace{1.5 cm}
	\subfloat{\raisebox{1.2\height}{\Huge{$\gamma_v=$}\hspace{.25 cm}{\raisebox{-.4\height}{\includegraphics{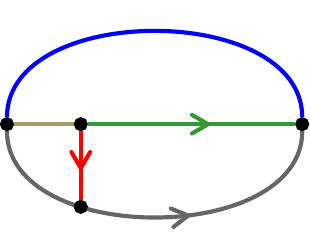}}}}}
 \end{center}
\caption{The vertex boundary graph can be constructed by cutting out the vertex $v$ along the faces adjacent to $v$. Here, the dotted lines indicate the splitting edges $e(f)$ and the bold black points the vertices $m(e)$. The edges $e(f)$ are oriented, such that the corresponding wedge is outgoing. } 
\label{fig:vertexboundary}
\end{figure}
Another important example is gluing of (non-oriented) foams along common closed components of their boundary graphs: Suppose $\gamma_1\in\partial\kappa_1$ is isomorphic to $\gamma_2\in\partial\kappa_2$ then a new foam $\kappa'$ can be constructed by identifying $\gamma_1=\gamma_2=\gamma$ defining a subdivision of $\kappa_1\sharp\kappa_2$ where $\gamma$ is removed. The same can be done for oriented complexes if their orientations match so that $\kappa'$ is an oriented subdivision of $\kappa_1\sharp\kappa_2$. Consequently, the orientations of faces glued together must be antidromic and if the internal edge $e\in\kappa_1$ is ingoing to $\n\in\gamma$ then the corresponding edge $e'\in\kappa_2$ must be outgoing of $\n$ (see \figref{fig:gluing}).\footnote{Since boundary links inherit the orientation of the faces intersecting $\partial\kappa$ this implies that the orientation of $\gamma_1$ in $\kappa_1$ is opposite to that of $\gamma_2$ in $\kappa_2$ and strictly speaking they are not isomorphic. But since in a subdivision the orientation of splitting edges is not determined the gluing is still well-defined when assuming that $\gamma$ is not oriented.} 

\begin{figure}[t]
 \begin{center}
	\subfloat{\includegraphics{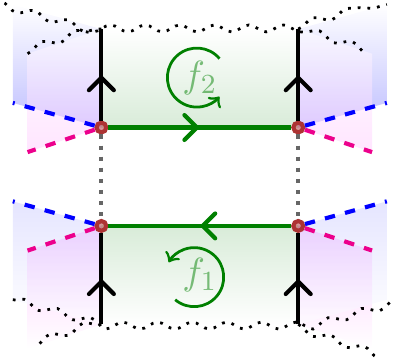}}
	\hspace{2 cm}
	\subfloat{\raisebox{.5 cm}{\includegraphics{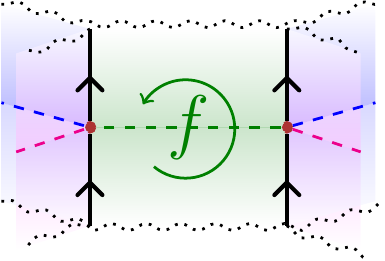}}}
 \end{center}
\caption{Two foams can be glued together along a common closed component of their boundary graphs if the face orientation (green arrows) and internal edge orientations (black arrows) match.} 
\label{fig:gluing}
\end{figure}

\subsection{Spin foams}
\label{ssec:spinf}
Similar to the coloring of graphs in \secref{ssec:spinnets}, foams will be labeled by representation data of a gauge group $G$. In LQG we are especially interested in the cases $G=\SO(3,1)$ respectively $G=\SO(4)$. Since $\SO(4)$ is a compact semisimple Lie group the representation theory is comparably easy and therefore we will focus on the latter.

A spin foam $(\kappa,\{\hilbert_f\},\{Q_e\})$ consists of an oriented foam $\kappa$ and an assignment of a Hilbert space $f\to\hilbert_f$ (irreducible representation space of $\SO(4)$) to every face $f\in\kappa$. This induces a Hilbert space\footnote{The total order of the tensor product in \eqref{eqn:edgespace} is determined by the edge order.} on every edge
\beq
\label{eqn:edgespace}
e\;\mapsto\hilbert_e:=\!\!\bigotimes_{f' \text{ ingoing to } e} \hilbert_{f'}\;\;\otimes\!\!\bigotimes_{f \text { outgoing of } e} \hilbert^{\ast}_{f}\;.
\eeq
and an invariant subspace $\hilbert_{e,Inv}$ spanned by intertwiners $\iota:\hilbert_{e}\to\C$. To each internal edge we associate an operator $Q_e:\hilbert_{e,Inv}\to\hilbert_{e,Inv}$ in such a way that the domain of $Q_e$ is associated to the source of $e$ and the image of $Q_e$ is associated to the target of $e$.

Let $n_i$ ($n_o$) be the total number of faces ingoing to (outgoing from) the edge $e$ and $(x_1,\dots,x_{d_f})$ be a basis of of $\hilbert_f$ with $d_f:=\mathrm{dim}\hilbert_f$ then  $\iota_e\in\hilbert_{e,\inv}$ is a tensor of rank $(n_i,n_o)$ 
\beq
\label{eqn:indexnot0}
\iota_e=\left(\iota_e\right)^{\qquad A_{f'_1},\dots,A_{f'_{n_i}}}_{A_{f_1},\dots,A_{f_{n_o}}}\; \;x^{A_{f_1}}\otimes\cdots\otimes x^{A_{f_{n_o}}}\otimes x_{A_{f'_1}}\otimes\cdots\otimes x_{A_{f'_{n_i}}}~.
\eeq
The expansion of $Q_e$ in a basis $\{\iota_e\}$ of $\hilbert_{e,\inv}$ reads 
\beq
\label{eqn:indexnot}
Q_e:=(Q_e)^{\;\;\;\iota_{t(e)}}_{\iota_{s(e)}}\;\;\iota_{s(e)}^{\dagger}\otimes\iota_{t(e)}~.
\eeq 
and, following the above, the dual $\iota_{s(e)}^{\dagger}$ is attached to the source and $\iota_{t(e)}$ to the target of $e$.

\begin{figure}[t]
 \begin{center}
 	\hspace{-2 cm}
	\subfloat{\includegraphics{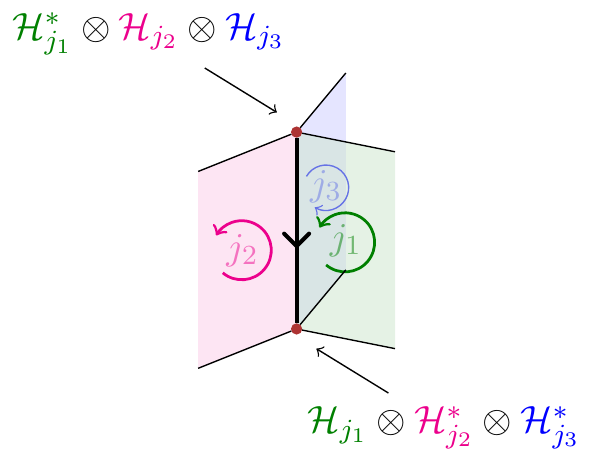}}
	\hspace{-2 cm}
	\subfloat{\includegraphics{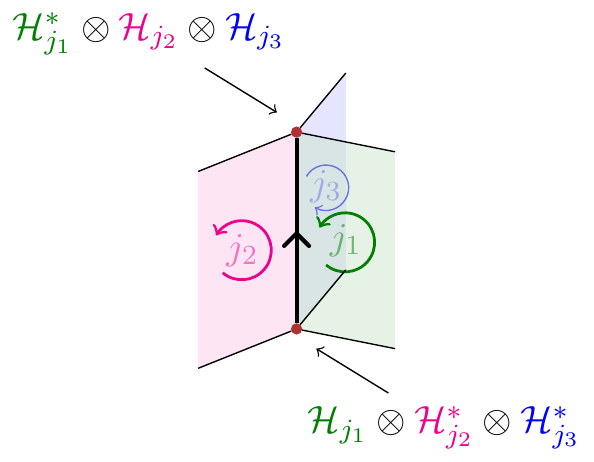}}
	\subfloat{\raisebox{.6cm}{\includegraphics{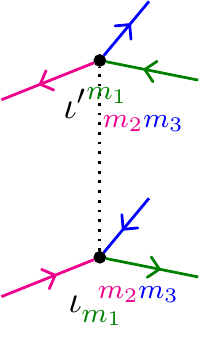}}}
 \end{center}
\caption{The intertwiner space associated to target/source of an internal edge is independent of the edge orientation and depends only on the face orientations since $\Inv(\hilbert_{j_1}^{\ast}\otimes\hilbert_{j_2}\otimes\hilbert_{j_3} )^{\ast}=\Inv(\hilbert_{j_1}\otimes\hilbert_{j_2}^{\ast}\otimes\hilbert_{j_3}^{\ast} )$ (compare the two figures on the left). The convention for assigning the Hilbert spaces and orientations of boundary and vertex graphs is chosen such that it agrees with the convention for spin nets. Compare the figures on the right with \figref{fig:vertexboundary} and \figref{fig:labelingspinnet}}
\label{fig:internaledge}
\end{figure}
The marking $(\hilbert_f,Q_e)$ of the bulk $\kappa_{int}$ induces a spin net structure on $\partial\kappa$: A boundary link $\l_f$ contained in the unique face $f$ is labeled by $\hilbert_f$ and a node $\n_e\in\partial\kappa $ is labeled by $\iota_e$, if the internal edge $e$ adjacent to $\n_e$ is ingoing, and by the dual intertwiner if $e$ is outgoing. By \lemref{lem:boundary} each boundary link $\l_f\in\dot{f}$ in $\kappa$ is adjacent to exactly two internal edges $e,e'$ which are either both ingoing to or both outgoing of $\partial\kappa$ and therefore, if $f$ is ingoing to $e$ it is outgoing of $e'$. In both cases $\hilbert_f$ is associated to $t(\l_f)$ while the dual is associated to the source (see \figref{fig:internaledge}). In fact, whether the dual or the original Hilbert space is associated to a node only depends on the face orientation and the whole model can be formulated without specifying edge orientations (see \cite{Kisielowski:2011vu}). However, in the subsequent discussion it is more convenient to keep all orientations as defined above.

Similarly, the coloring of $\kappa$ induces a spin net on vertex boundary graphs $\gamma_v$,
see definition \ref{def:vertexgr}.
This yields a natural contraction of the intertwiners by
\beq
\label{eqn:verttr}
\mathcal{A}_v(\{\iota_{e_v}\})
=\tr\left[\prod_{e\in\mathcal{V}(v)} \iota_{e_v}\right]
\eeq
where $e_v:=e_{v,e}$ is the half-edge of $e$ adjacent to $v$ and we assumed that all edges $e\in\mathcal{V}(v)$ are incoming to $v$. Note, that all intertwiners which are not assigned to boundary nodes can be  contracted in this way defining the spin foam trace
\beq
\label{eqn:sftr}
\tr(\kappa,\hilbert_f,Q_e):=\left[\prod_{e\in\kappa^{(1)}_{int}} \sum_{\iota_e}(Q_e)^{\;\;\iota_{e_{t(e)}}}_{\,\iota'_{e_{s(e)}}}\!\prod_{v\in\kappa^{(0)}_{int}} \mathcal{A}_v(\{\iota_{e_v}\})\right]\;\;\bigotimes_{\n\in(\partial\kappa)^{(0)}}\iota_{e_{\n}}~.
\eeq
To simplify the notation we did not display whether $\iota$ is a dual intertwiner or not and we will continue to do so if not explicitly necessary. When, in addition, group elements $g_{\l}\in G$ are attached to all boundary links $\l$ then one obtains the \emph{spin foam partition function} 
\beq
\label{eqn:abstractpart}
\begin{split}
Z[\kappa](\{g_{\l}\})
:=\!\!\sum_{\{\iota_{e_v}\},\{\rho_f\}}& \left[\prod_{e\in\kappa_{int}} (Q_e)^{\;\;\iota_{e_{t(e)}}}_{\,\iota'_{e_{s(e)}}}\!\prod_{v\in\kappa_{int}} \mathcal{A}_v(\{\iota_{e_v}\})\right]\\
&\times \underbrace{\tr\left[ \prod_{\l_f\in(\partial\kappa)^{(1)}} R^{j_{f_{\l}}}(g_{\l_f})\prod_{\n_e\in(\partial\kappa)^{(0)}} \iota_{e_{\n}}\right]}_{T_{\partial\kappa,\iota_{\n},j_{\l}}(\{g_{\l_f}\})\prod\limits_{\l_f\in\partial\kappa}\frac{1}{\sqrt{d_{\rho_{\l_f}}}}}~.
\end{split}
\eeq

Notice that no claim about convergence of \eqref{eqn:abstractpart} is made at this point
for generic  $\kappa$ which therefore may only define a `distributional' linear 
functional on the boundary space $\hilbert_{\partial\kappa}$ spanned by spin nets based on $\partial\kappa$. To fix one's intuition,  consider the following easy but important example:
\begin{definition}
\label{def:trivevol}
The \underline{trivial evolution} $\kappa_0$ is an oriented foam which has no internal vertices and whose boundary graph $\partial\kappa$ is the disjoint union of two graphs $\gamma_1$ and $\gamma_2$ such that there exist a (non-oriented) isomorphism $\gamma_1\simeq\gamma\simeq\gamma_2$.
\end{definition}
Since by definition the boundary links of $\partial\kappa_0$ inherit the orientation of the face in
which they are contained and since for every face $f$ there are two links $\l_f\in\gamma_1$, $\l'_f\in\gamma_2$ and $\l_f,\l'_f\in\dot{f}$ it follows that  the orientation of $\gamma_1$ is opposite to the one of $\gamma_2$. Moreover, each internal edge $e$ is adjacent to two nodes in the boundary graph, w.l.o.g. fix $s(e)\in\gamma_1$ and $t(e)\in\gamma_2$, so that the spin net on $\gamma_1$ is dual to the one induced on $\gamma_2$. Concluding,
\beq
\begin{split}
Z[\kappa_0](\{g_{\l}\})
= &\sum_{\{\iota_v\},\{\rho_f\}}\left[\prod_f\frac{1}{d_{\rho_f}}\prod_{e\in\kappa_{0,int}} (Q_e)^{\;\;\iota_{t(e)}}_{\;\iota_{s(e)}}\right]\\
&T_{\gamma_2,\iota_{t(e)},j_{\l'_f}}(\{g_{\l'_f}\})\;\otimes\;(T_{\gamma_1,\iota_{s(e)},j_{\l_f}}(\{g_{\l_f}\}))^{\dagger}~.
\end{split}
\eeq
The partition function \eqref{eqn:abstractpart} is invariant if one adds or removes faces labeled by the trivial  representation. Later on we will also include additional face amplitudes such that $Z[\kappa]$ is also invariant under colored subdivisions defined in the following
\begin{definition}
\label{def:coloredsubdivision}
A colored subdivision of a spin foam $(\kappa,\hilbert_f,Q_e)$ is an oriented subdivision of $\kappa$ such that for the new colored foam $(\kappa',\hilbert_f',Q_e')$ holds
\begin{enumerate}
\item $\hilbert_f=\hilbert_{f'_1}=\cdots=\hilbert_{f'_n}$ if $f\in\kappa$; $f'_,\dots,f'_n\in\kappa'$ and $f'_1\cup\cdots\cup f'_n=f$
\item if $e'\notin\kappa$ then $Q_{e'}=\mathbb{1}$ and $\iota_{e'}$ is a two-valent intertwiner
\item  if $e'_1,\dots,e'_n\in\kappa'_{int}$ such that $e'_1\cup\cdots\cup e'_n=e\in\kappa_{int}$ then $Q_{e'_1}\circ\cdots\circ Q_{e'_n}=Q_e$.
\end{enumerate}
\end{definition}
Two spin foams $(\kappa_1,\{\hilbert_f\},\{Q_e\})$ and 
$(\kappa_2,\{\hilbert_{f'}\},\{Q_{e'}\})$ can be glued together along a common graph $\gamma\in\partial\kappa_{1/2}$ if the orientation matches and the induced spin network functions on $\gamma$ are mutually conjugated. Then $\hilbert_{f_{\l}\sharp f'_{\l}}=\hilbert_{f_{\l}}\equiv\hilbert_{f'_{\l}}$, where $\l\in\gamma$ is contained in $f_{\l}\in\kappa_1$ and  $f'_{\l}\in\kappa_2$ respectively,
and $Q_{e\sharp e'}=Q_e\circ Q_{e'}$ where $e\in \kappa,e'  \in \kappa'$.

\subsection{Triangulations and foams}
\label{sec:trian}
Before we conclude the mathematical part and give a physical motivation for the above model we will briefly discuss triangulations of 4-manifolds and relations to foams as defined in \defref{def:foam}. 
One of the main ingredients of covariant LQG is the truncations of degrees of freedom by introducing a triangulation of space-time. A triangulation of a smooth compact n-manifold $\mathcal{M}$ is a triple $(\mathcal{M},\Delta,f)$ where $\Delta$ is a (simplicial) complex and $f:\overline{\Delta}\to\mathcal{M}$ a piecewise differential homeomorphism (see appendix \ref{app:triangulation} for details and an extension to non-compact manifolds). In 1940 Whitehead \cite{whitehead:1940} proved\footnote{Originally Whitehead proved the assertion in the $C^1$ category but already extended it to $C^k$-triangulations. To ensure uniqueness up to p.l. homeomorphisms and ensure that $\overline{\Delta}$ is a p.l. manifold the embedding map $f$ must be sufficiently smooth, i.e. $C^1$ is not enough (for a counter example see \cite{cannon1979}).} that any smooth manifold $\mathcal{M}$ has an essentially unique triangulation up to p.l. homeomorphisms. Moreover, the underlying polyhedron $\overline{\Delta}$ is a p.l. manifold which means that any point in the interior of $\overline{\Delta}$ has a neighborhood which is p.l. homeomorphic to an $n$-simplex. Thus, any $(n-1)$-cell in the interior of $\overline{\Delta}$ is a proper face of two $n$-cells and the set $\partial\Delta$ of all $(n-1)$-cells contained in only one $n$-cell induces a proper triangulation of the boundary of $\mathcal{M}$  whereupon $\partial^2\mathcal{M}=\emptyset$ implies that any lower dimensional ($\leq n-2$) cell must be contained in at least two higher dimensional cells.

Let $\Delta:=\{A^{(i)}_j|i=0,\dots n;j=1\dots q_i\}$ be a (simplicial) n-complex triangulating $\overline{\Delta}$ where $i$ labels the dimension of the cell and $q_i$ is the number of $i-$cells. Let $a^{(i)}_j$ denote the barycenter of $A^{(i)}_j$. The barycenters of n-cells define the dual vertices. The one-cell dual $\ast[A^{(n-1)}_k]$ to $A^{(n-1)}_k=A^{(n)}_i\cap A^{(n)}_j$ is the union of the edge joining $a^{(n)}_i$ and $a^{(n-1)}_k$ and the edge joining $a^{(n-1)}_k$ and $a^{(n)}_j$.  Inductively, the dual cell of $A^{m}_i$ is defined to be the $(n-m)$-dimensional subset of all points $x$ for which there exist $\lambda, \mu>0$, $\lambda+\mu=1$, such that $x=\lambda\, a^{(m)}_i+\mu\, b$ where $b$ is a point in some $\ast[A^{(m+1)}_j]$ dual to a cell $A^{(m+1)}_j$ in the vicinity of $A^{(m)}_i$ (see \figref{fig:dual}). The set of all dual cells is the \emph{dual complex} $\ast\Delta$ of $\Delta$.

\begin{figure}[t]
 \begin{center}
	\subfloat{\includegraphics{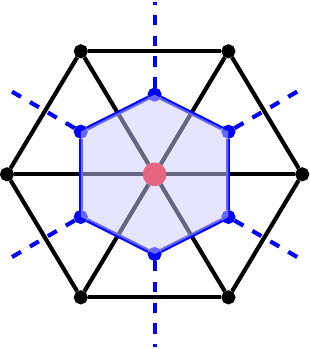}}
 \end{center}
\caption{In two dimensions the complex dual to the black triangulation can be constructed by joining the barycenter (blue vertices) of the triangles where the blue face is dual to the red vertex.}
\label{fig:dual}
\end{figure}
In general $a^{(n)}_i,\,a^{(n-1)}_k$ and $a^{(n)}_j$ are not collinear and thus dual cells are not convex but compact polyhedra.
\begin{lem}[\cite{pl1}]
\label{lem:dual}
If $\overline{\Delta}$ is a p.l. n-manifold and $A\in\Delta$ an m-cell then $\ast A$ is a p.l. $(n-m)$-ball (or equivalently: p.l. homeomorphic to an $(n-m)$-simplex). If $A\in\partial\Delta$ then the cell $\sharp A$ dual to $A$ in the subcomplex $\partial\Delta$ is an $(n-m-1)$-ball in the frontier of $\ast A$.
\end{lem}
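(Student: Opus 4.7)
The plan is to reduce the assertion to the well-known structure of links in a PL manifold via the barycentric subdivision. Let $\Delta'$ denote the barycentric subdivision of $\Delta$. From the recursive definition of $\ast A$ one checks that $\ast A$ coincides with the subcomplex of $\Delta'$ spanned by all ascending chains of barycenters $a^{(m)}_j = a_0 < a_1 < \ldots < a_k$ where $a_i = a^{(m_i)}_{j_i}$ labels a cell $A^{(m_i)}_{j_i}$ strictly containing $A$. Separating off the apex $a_0$, one recognises $\ast A$ as the join $a^{(m)}_j \ast L_A$, i.e.\ the cone with apex $a^{(m)}_j$ over the subcomplex $L_A$ built from the "upward" chains. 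A short argument using the definition identifies $L_A$ with the link $\mathrm{lk}(A,\Delta')$, or more precisely with its natural combinatorial model sitting above $A$ inside the star of $A$.

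At this point I would import the standard PL-manifold result: for a PL $n$-manifold $\overline{\Delta}$, $\mathrm{lk}(A)$ is PL-homeomorphic to an $(n-m-1)$-sphere when $A$ is interior, and to an $(n-m-1)$-ball when $A \in \partial\Delta$. This is proved by induction on $n$ using that every point has a neighbourhood PL-homeomorphic to a simplex, together with the regular neighbourhood theorem, and it is really the heart of the whole statement. Granting this, the first claim follows immediately: the cone on a PL $(n-m-1)$-sphere is a PL $(n-m)$-ball, and likewise the cone on a PL $(n-m-1)$-ball is a PL $(n-m)$-ball, so in either case $\ast A$ is a PL $(n-m)$-ball as asserted.

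For the second part I would exploit that $\partial\Delta$ is itself a PL $(n-1)$-manifold without boundary. Applying the first part of the lemma inside $\partial\Delta$ to the $m$-cell $A \in \partial\Delta$ produces a dual cell $\sharp A$ that is a PL $(n-m-1)$-ball. Combinatorially, $\sharp A$ is spanned by exactly those chains from the description of $\ast A$ whose higher barycenters all lie in $\partial\Delta$, i.e.\ it is the cone with apex $a^{(m)}_j$ over the link of $A$ computed in $\partial\Delta$. The latter is precisely the boundary portion of $\mathrm{lk}(A,\Delta)$; since $\mathrm{lk}(A,\Delta)$ is a PL $(n-m-1)$-ball, its frontier is a PL $(n-m-2)$-sphere, and one verifies that the corresponding sub-cone of $\ast A$ lies entirely in the frontier $\dot{\ast A}$, placing $\sharp A$ there as required.

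The decisive step is clearly the PL link structure for manifolds; everything else is bookkeeping with barycentric subdivisions together with iterated use of the fact that cones on PL balls or PL spheres of matching dimension are again PL balls.
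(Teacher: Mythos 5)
The paper does not actually prove this lemma --- it is quoted from Hudson's book \cite{pl1} --- and your argument is essentially the standard proof found there: identify $\ast A$ with the cone from the barycentre $a^{(m)}_j$ over (the barycentric subdivision of) $\mathrm{lk}(A,\Delta)$, invoke the PL-manifold theorem that this link is an $(n-m-1)$-sphere for interior $A$ and an $(n-m-1)$-ball for $A\in\partial\Delta$, and use $\mathrm{lk}(A,\partial\Delta)=\partial\,\mathrm{lk}(A,\Delta)$ together with $\partial(a\ast B)=B\cup\left(a\ast\partial B\right)$ to place $\sharp A$ in the frontier of $\ast A$; quoting the link theorem is legitimate here, since the lemma itself is a textbook citation. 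The one phrase worth tightening is your identification of the upward-chain complex $L_A$ with $\mathrm{lk}(A,\Delta')$: it is simplicially isomorphic to $(\mathrm{lk}(A,\Delta))'$ via $\hat B\mapsto$ barycentre of the face of $B$ opposite $A$, rather than a link computed in $\Delta'$, but this is exactly the ``natural combinatorial model'' you allude to, so the argument stands.
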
   
Obviously, $\ast\Delta$ is generically not a cell-complex in the strict sense of \defref{def:complex}. Yet, it is a ball complex\footnote{For a proof see \cite{pl1}}, that is a collection  $\{B_j|j=1,\dots,r\}$ of m-balls, $m\leq n$, which obey
\begin{enumerate}
\item $\overline{\Delta}=\bigcup\limits_{j=1}^r B_j$
\item $\mathring{B}_j\cap\mathring{B}_i=\emptyset$, if $i\neq j$ where $\mathring{B}$ is the interior of $B$
\item $\dot{B}_i$ is a finite union of balls of lower dimension in $\ast\Delta$ and every dual m-ball, $m<n$, lies in the frontier of at least one $m+1$-ball. 
\end{enumerate}
From the third property and \lemref{lem:dual} follows immediately that the subset $\ast\partial\Delta\subset\ast\Delta$, containing all $(n-1)$-balls $B^{(n-1)}_k$, which are adjacent to only one $n$-ball, and all balls in their frontier $\dot{B}^{(n-1)}_k$, is dual to the subcomplex $\partial\Delta$. 
\begin{definition}
Let $\Delta$ be a triangulation of a compact $4$-manifold then the \underline{dual 2-complex} $\kappa$ is the set obtained by removing all balls of dimension greater than two from $\ast\Delta$ and additionally all 2-balls from $\ast\partial\Delta$.
\end{definition} 
Since property two and three listed above still hold every 1-ball $e$ in $\kappa$ is contained in at least one 2-ball $f$. A 1-ball $e$ is adjacent to exactly one 2-ball $f$ if and only if $e\subset\partial\overline{\Delta}$ by \lemref{lem:dual}. As above we will call 1-balls contained in more than one $2$-ball internal, otherwise it is called external. Again by lemma \ref{lem:dual}, every vertex of $\kappa$ dual to a 4-cell must be the intersection of several internal edges. By the above construction every node in the boundary is the barycenter of a 3-cell and therefore the endpoint of exactly one internal 1-ball. Besides that, the dual 1-complex of $\partial\Delta$ is closed (every node of $\partial\kappa$ must be contained in at least two 1-balls), otherwise $\partial^2\Delta$ would not be empty, and bordering $\kappa$. This proves the first part of
\begin{theorem}
\label{theorem1}
If $\kappa_{\Delta}$ is the 2-complex dual to a triangulation $\Delta$ of a compact 4-manifold then $\kappa_{\Delta}$ is combinatorially equivalent to a foam $\kappa$, i.e. there exists a bijection\footnote{This map is defined on the complexes not on the underlying polyhedra! Furthermore, $\kappa$ is a p.l. complex in the strict sense of definition \ref{def:complex}.} $g:\kappa_{\Delta}\to\kappa$ mapping each n-cell of $\kappa_{\Delta}$ to an n-cell of $\kappa$ preserving the gluing relations (if $A$ is a common face of $B$ and $C$ then $g(A)$ is a common face of $g(B)$ and $g(C)$). Moreover, $\overline{\kappa_{\Delta}}$ is p.l. homeomorphic to $\overline{\kappa}$.
\end{theorem}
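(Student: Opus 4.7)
\vspace{6pt}

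The plan is to build the p.l. foam $\kappa$ by replacing each ball of $\kappa_{\Delta}$ by a convex cell of the same dimension, and then to verify (i) that $\kappa$ is a $2$-complex in the strict sense of \defref{def:complex}, (ii) that the obvious cell-to-cell correspondence yields a bijection $g$ preserving the gluing relations, (iii) that $\kappa$ is a foam, and (iv) that $\overline{\kappa_{\Delta}}$ and $\overline{\kappa}$ are p.l. homeomorphic. Concretely, I embed $\kappa$ in some $\R^{M}$ with $M$ large and choose, for each barycenter $a^{(4)}_{i}$ of a $4$-simplex of $\Delta$ (i.e.\ each vertex $v_{i}$ of $\kappa_{\Delta}$), a point of $\R^{M}$ in general position; for each boundary node of $\kappa_{\Delta}$ (barycenter of a boundary $3$-simplex) an auxiliary point is chosen in the same way. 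Each internal $1$-ball $e^{*}\in\kappa_{\Delta}$, dual to a $3$-simplex shared by $4$-simplices with barycenters $a^{(4)}_{i},a^{(4)}_{j}$, is replaced by the straight segment $[v_{i},v_{j}]$; each boundary $1$-ball by the straight segment joining $v_{i}$ to the associated auxiliary vertex. Each $2$-ball of $\kappa_{\Delta}$, dual to some $2$-simplex, is bounded (by \lemref{lem:dual}, its frontier is a p.l.\ $S^{1}$) by a cycle of such $1$-balls; it is replaced by the convex hull of the corresponding vertices in $\R^{M}$, which in sufficiently high $M$ is a convex polygon lying in a $2$-plane.

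The combinatorial bijection $g:\kappa_{\Delta}\to\kappa$ is now the map that sends each $n$-ball to its convex replacement. Incidence is preserved tautologically: a $1$-ball of $\kappa_{\Delta}$ lies in the frontier of a $2$-ball iff the corresponding segment is an edge of the corresponding convex polygon. The two axioms of \defref{def:complex} — closure under faces and intersection-as-common-face — are immediate for the first axiom and, for the second, follow from general position of the chosen vertices in $\R^{M}$: any two dual balls meet only along a common lower-dimensional dual ball, and the straight-segment/convex-hull construction transports this to the strict intersection property. That $\kappa$ is a foam in the sense of \defref{def:foam} is essentially contained in the discussion preceding the theorem: every vertex of $\kappa_{\Delta}$ is dual to a $4$-simplex and is therefore in the frontier of several internal $1$-balls only (so all its adjacent edges are internal, cf.\ \lemref{lem:intvert}); every boundary node is dual to a boundary $3$-simplex and is the endpoint of exactly one internal edge (so $\gamma\times[0,1]\to\kappa$ is well-defined on a neighbourhood of each boundary node); and the dual $1$-complex of $\partial\Delta$ is closed (because $\partial^{2}\Delta=\emptyset$ forces every $2$-simplex in $\partial\Delta$ to be a face of at least two $3$-simplices in $\partial\Delta$) and borders $\kappa_{\Delta}$ through the ``collar'' formed by the $2$-balls dual to boundary $2$-simplices. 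All these properties transport to $\kappa$ via $g$.

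For the p.l.\ homeomorphism $\overline{\kappa_{\Delta}}\to\overline{\kappa}$ I proceed skeleton by skeleton. On $0$-cells the map is prescribed. On each $1$-ball, \lemref{lem:dual} says the cell in $\kappa_{\Delta}$ is a p.l.\ $1$-ball and its image is a straight segment (also a p.l.\ $1$-ball) with matched endpoints; any two such $1$-balls with identified boundaries are p.l.\ homeomorphic. On each $2$-ball, we now have a p.l.\ homeomorphism of the frontier $S^{1}$'s already defined on the $1$-skeleton, and by Alexander's coning trick this extends to a p.l.\ homeomorphism of the $2$-balls themselves; the extensions glue consistently because they all restrict to the same prescribed map on overlaps in the $1$-skeleton. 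The main technical obstacle is the second verification in step (ii): guaranteeing that the naive convex replacement really produces a cell complex in the strict sense, with no extra incidences created by the embedding. This is the standard price of passing from ball complexes to p.l.\ complexes, and is paid by taking $M$ large enough and the vertices in generic position — the finiteness of $\Delta$ makes the relevant non-degeneracy conditions a finite list, each generic.
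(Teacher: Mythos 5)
Your overall architecture is close to the paper's: the foam/combinatorial properties of the dual complex are taken from the structural discussion preceding the theorem, and the p.l. homeomorphism is produced cell-by-cell by coning — the paper does exactly this, coning every dual face of $\kappa_{\Delta}$ from a suitable interior point to obtain a simplicial subdivision $\kappa'_{\Delta}$, building the analogous subdivision $\kappa'$ of the convex complex $\kappa$, and extending the resulting vertex correspondence linearly, which is your skeleton-wise Alexander extension in different words. The genuine gap lies in your explicit construction of $\kappa$, a step the paper simply takes as given (it starts from a convex-cell complex $\kappa$ combinatorially equivalent to $\kappa_{\Delta}$ and only proves the homeomorphism of the polyhedra). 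You choose the images of the vertices ``in general position'' in $\R^{M}$ and then replace each $2$-ball by the convex hull of the corresponding points, claiming this is a convex polygon in a $2$-plane. These two requirements contradict each other: if a face has $n\geq 4$ boundary vertices and their images are in general position, they are affinely independent, so their convex hull is an $(n-1)$-simplex, not a planar polygon; conversely, coplanarity is a strongly non-generic condition. Since a dual face in a $4$-manifold triangulation has one vertex for every $4$-simplex around the triangle it is dual to, $n>3$ is the typical case, and the construction as written does not produce $2$-cells at all.

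Once genericity is given up to impose coplanarity, the two points you dismiss as ``tautological''/``immediate'' must actually be argued: (a) on each face the coplanar points must lie in convex position in the prescribed cyclic order — simultaneously for all faces, although faces share vertices and edges — so that the convex hull really has the dual edges, and only those, as its frontier; and (b) the intersection axiom of \defref{def:complex} (cells meet in a common face, no unwanted incidences) can no longer be attributed to general position and needs an independent verification. This simultaneous realization of all dual $2$-cells as convex plane polygons on a common vertex set is precisely the content of the theorem's footnote (that $\kappa$ is a p.l. complex in the strict sense of \defref{def:complex}) and is the step your argument leaves unpaid. The remainder of your proposal — transporting the foam properties (closedness and bordering of the boundary graph, \lemref{lem:intvert}-type statements) and the skeleton-wise p.l. extension over the $1$- and $2$-cells — is sound and essentially coincides with the paper's subdivision argument.
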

\begin{proof}
To prove that $\overline{\kappa_{\Delta}}$ and $\overline{\kappa}$ are p.l. homeomorphic we construct the following subdivision $\kappa'_{\Delta}$ and $\kappa'$: Since dual cells are by construction the underlying polyhedra of cell-complexes p.l. homeomorphic to m-balls, we can fix a point $x$ in the interior of a dual face $f\in\kappa_{\Delta}$ in such a way that the straight lines connecting $x$ and any barycenter $a^i_{(n-1)}\subset\dot{f}$ or any vertex of $f$ lies in $f$. When splitting every face in that way we obtain a simplicial complex $\kappa'_{\Delta}$ which is a subdivision of $\kappa_{\Delta}$. On the other hand, cells in $\kappa$ are already convex so that one can choose any point in the interior of each face $\tilde{f}\in\kappa$ and each edge $e\in\kappa$. By joining the points as above one can find a simplicial subdivision of $\kappa$ which is combinatorially equivalent to $\kappa'_{\Delta}$. Define $h:\overline{\kappa_{\Delta}}\to\overline{\kappa}$ by $h(x_i)=y_i$, if $x_i$ is a vertex of $\kappa'_{\Delta}$ and $y_i$ the corresponding vertex of $\kappa'$,  and extend it linearly. This gives the desired p.l. homeomorphism mapping $n$-cells of $\kappa'_{\Delta}$ to $n$-cells of $\kappa'$.    
\end{proof}
\section{Covariant Quantum Gravity}
\label{sec:model}
\subsection{BF-theory and EPRL-model}
The covariant quantization of GR is based on the observation that gravity is closely related to topological BF-theories. These theories are defined on the principal $G$-bundle over a smooth $D$-dimensional manifold $\mathcal{M}$ with connection $A$. The basic fields are the curvature $F[A]=\mathrm{d} A+A\wedge A$ and a (Lie) algebra $\mathfrak{g}$-valued $(D-2)$-form $B$. Classically, the BF-action  
\beq
\label{eqn:bf}
\mathcal{S}_{BF}=\int_{\mathcal{M}}\tr (B\wedge F[A])
\eeq
for four dimensions with gauge group $G=\SO(4)$ in euclidean models respectively $G=\SO(3,1)$ in lorentzian ones is equivalent to the Holst action \cite{HolstAction} iff the $B$-field can be expressed in terms of tetrads $E$ and the Hodge dual $\star$
\beq
\label{eqn:simp}
B=\star(E\wedge E)+\frac{1}{\beta} E\wedge E\;.
\eeq
The wedge product is taken with respect to the external indices, the trace in \eqref{eqn:bf} contracts the internal indices and $\beta$ is the Barbero-Immirzi parameter. The variation of \eqref{eqn:bf} with respect to the $B$-field constrains the curvature to vanish and formally the path integral is given by
\begin{align}
\begin{split}
\label{eqn:BFpart}
Z_{BF}(\mathcal{M}):=&\int\mathcal{D}\!A\int\mathcal{D}\!B\, \exp{i\int_{\mathcal{M}}\tr( B\wedge F)}\\
=&\int\mathcal{D}\!A\,\delta(F)
\end{split}
\end{align}
To obtain a covariant model of LQG we will first discretize,then quantize $\mathcal{Z}_{BF}$ and finally implement the \emph{simplicity constraints} \eqref{eqn:simp}.
\subsubsection{Discretized BF-theory}
\label{sec:discBF}
If $\Delta$ is a simplicial triangulation of a closed manifold $\mathcal{M}$ then the vector space $C^n(\Delta)$ of formal linear combinations of $n$-cells in $\Delta$ equipped with the scalar product $\scal{\sigma_i,\sigma_j}=\delta_{ij}$ is isometric to the space of $n$-forms with scalar product $\scal{\omega,\omega'}=\int\tr(\omega\wedge\star\omega)$. Furthermore, there is a one-to-one correspondence between the operations $(\wedge,\ast,d)$ and operations in $C^n(\Delta)$ (see \cite{sen-2000-61}). For example the hodge dual acts on cells by mapping to dual cells.

Within this scheme the $B$ fields of BF-theory are smeared on $(D-2)$-cells and $F$ on the dual faces such that
\beq 
\label{eqn:BFdisc}
S_{BF}=\sum_{c^{(D-2)}\in\;\Delta^{(D-2)}} \tr\left(\left[\int_{\ast{\Delta}c^{(D-2)}} F\right]\left[\int_{c^{(D-2)}}B\right]\right)\;.
\eeq
Remarkably, this step is independent of the chosen triangulation due to the topological nature of BF-theory. Only after the implementation of the simplicity constraint rendering the theory local the discretization yields a truncation of local degrees of freedom.

Recall that connections of a gauge theory are naturally regularized by holonomies $h_e[A]$ along paths $e\subset\mathcal{M}$ and therefore the `measure $\pathdif A$' in \eqref{eqn:BFpart} can be replaced by $\prod_e\dif\mu_H(g_e)$. Similarly, the curvature is regularized along a loop $\alpha$ enclosing a compact 2-d-surface $f$ since in second order approximation $h_{\alpha}[A]\approx \mathbf{1}+F(f)$ with $F(f)=\int_f F\in G$. Thus the curvature integral in \eqref{eqn:BFdisc} can be replaced by
\beq
\label{eqn:faceel}
\int_{f\in\ast\Delta^{(2)}} F\approx\prod_{e\in \dot{f}}g^{\;\;\epsilon_{e_f}}_{e}\equiv g_f~.
\eeq
for $D=4$. Here, $g_e$ are group elements attached to the edges $e$ bounding a face $f$ in the dual 2-complex\footnote{For the following it is not important that $\kappa$ is a ball-complex and the reader can safely assume that $\kappa$ is a foam.} $\kappa$ equipped with an orientation. The order of the group elements $g_e$ in \eqref{eqn:faceel} is determined up to cyclic permutations by the orientation of the face and $\epsilon_{ef}$ equals $1$ if $f$ is ingoing and $-1$ if $f$ is outgoing of $e$. Combining equation \eqref{eqn:faceel} and \eqref{eqn:BFdisc}, \eqref{eqn:BFpart} can be approximated by
\beq
\label{eqn:BFpartdisc}
Z_{BF}(\kappa)=\int\prod_{e\in\kappa^{(1)}}\dif g_e\prod_{f\in\kappa^{(2)}}\delta\left(\prod_{e\in \dot{f}}(g_{e})^{\epsilon_{ef}}\right)~.
\eeq
The above procedure can be easily generalized to arbitrary 4-manifolds: If $\mathcal{M}$ is non-compact one has to pass over to locally finite complexes (see appendix \ref{app:triangulation}). In order to keep everything finite we will not bother about this but always assume that $\mathcal{M}$ is a compact region of space-time. In the case that $\mathcal{M}$ has a non-empty boundary the action \eqref{eqn:bf} must be supplemented by a boundary term in order to leave the equations of motions unaltered (see e.g. \cite{Oriti:2002hv}). Without going into too much detail, $Z_{BF}$ can be constructed as in \eqref{eqn:BFpartdisc} just that the integral is only taken over 
bulk-variables. 

Following \cite{Ding:2010fw}, we split each edge $e$ into half-edges $l_{s(e)}$ and $l_{t(e)}$, where $l_{s(e)}$ is adjacent to the source and $l_{t(e)}$ to the target, and reorientate the half-edges in such a way that they are all oriented towards the splitting point. The half-edges are now labeled by group elements $g_{l_{s(e)}}$ and $g_{l_{t(e)}}$ obeying
\beq
\label{eqn:newel1}
g_e= g_{l_{s(e)}}g^{-1}_{l_{t(e)}}\;.
\eeq
After introducing these new variables, the group elements can be rearranged defining
\beq
\label{eqn:newel}
g_{f_v}:=g_{l_v}^{-1}g_{{l'}_v}
\eeq 
where $l_v$ is the half-edge in the frontier of $f$ adjacent to $v$. Note, if $\epsilon_{ef}=1$ then $l_v=l_{t(e)}$ otherwise $l_v$ is the half-edge of an edge with source $v$ (see \figref{fig:labeling}). In these variables the discretized BF-partition function is given by
\begin{figure}[t]
 \begin{center}
	\subfloat{\includegraphics{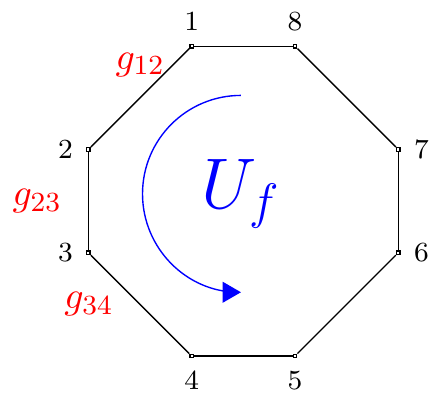}}
	\hspace{1.5 cm}
	\subfloat{\includegraphics[width=4.5 cm]{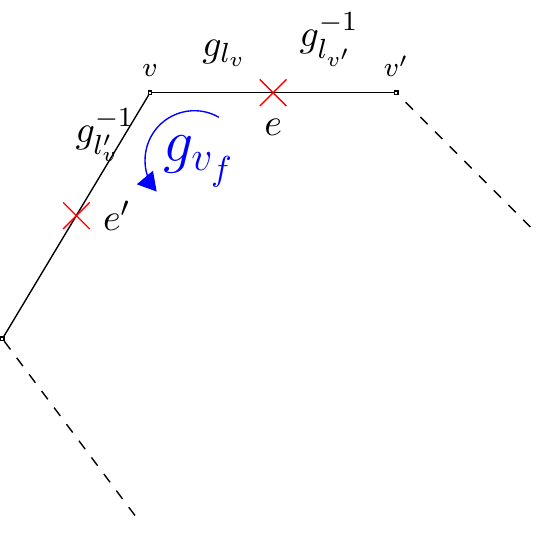}}
 \end{center}
\caption{Labeling of the face $f$ and its adjacent edges by group elements $g_{l_v}$.}
\label{fig:labeling}
\end{figure}
\beq
\label{eqn:zbound}
Z[\kappa]=\int\limits_{\SO(4)}\left(\prod_{v\in\kappa}\dif g_{f_v}\right)\; \prod_{f\in\kappa}\delta\left(\prod_{v\in\dot{f}}g_{f_v}\right)\prod_{e\in\partial\kappa}\delta(g_{f_{s(e)}}g_{f_{t(e)}}^{-1}g_e^{-1})\prod_{v\in\kappa_{int}} \mathcal{A}_v(\{g_{f_v}\})~.
\eeq
Here, $\{g_{f_v}\}$ is the set of all group elements $g_{f^1_v},\dots,g_{f^n_v}$ assigned to the $n$ faces adjacent to $v$ and
\beq
\label{eqn:vertexam1}
A_v(\{g_{f_v}\})
:=\int\left[ \prod_{l_v\in\mathcal{V}(v)}\dif{g_{l_v}}\right] \prod_{f\in\mathcal{V}(v)}\delta\!\left(g_{f_v}^{-1}g_{l_v}^{-1}g_{l'_v}\right)
\eeq
reverses the substitution \eqref{eqn:newel1} and \eqref{eqn:newel} in the bulk while $\delta(g_{f_{s(e)}}g_{f_{t(e)}}^{-1}g_e^{-1})$ reverses it on the boundary. By Weyl's orthogonality formula the group convolution $\delta(g)$ can be expressed by a sum over the characters of its irreducible representations. This can be used in order to expand the vertex amplitude \eqref{eqn:vertexam1} in terms of spin net functions. The Euclidean\footnote{For the Lorentzian model see \cite{Ding:2010fw,Engle:2007uq}} gauge group $\SO(4)\simeq \SU(2)_L\times\SU(2)_R\big/\Z_2$ is locally defined by a left (L) and right (R) action of $\SU(2)$. Because of that, irreps of $\SO(4)$ are given by the tensor representations $\rho=(j^L,j^R)$ of $\Spin(4)\simeq\SU(2)_L\times\SU(2)_R$ for which $j^L+j^R\in\mathbb{N}$. While by no means justified from the $\SO(4)$ point of view, we will work with $\Spin(4)$ from the beginning in order to avoid the above limitation on spins $j^{L/R}$. The convolution $\delta(g)$ is then defined by
\beq
\label{eqn:deltaspin4}
\delta(g)=\sum\limits_{j^L,j^R}d_{j^L}\; d_{j^R} \;\chi^{j^L}(g^L)\;\chi^{j^R}(g^R)
\eeq
with $g=(g^L,g^R)$ and $g^{L/R}\in\SU(2)$. 

Taking into account that every edge adjacent to an internal vertex is itself internal (\lemref{lem:intvert}), every group element $g_{l_v}$ in equation \eqref{eqn:vertexam1} appears at least in two different face distributions\footnote{When restricting foams to complexes dual to a triangulation then every internal edge is adjacent to at least four faces since the smallest three-cell in $\Delta$ is a tetrahedron.}. Thus, we have to integrate over products of characters. Consider for example a vertex $v$ splitting a trivalent edge into two half-edges $l_v,l'_v$. In this case one has to compute integrals of the form
\beq
\label{eqn:vertexbsp}
\mathcal{I}=\int\limits_{\SU(2)}\dif{h_{l_v}}\dif{h_{l'_v}}\prod_{i=1}^3\chi^{j_{f_i}}(h_{f_i}^{-1}h_{l_v}^{-1}h_{l'_v})
\eeq
when evaluating \eqref{eqn:vertexam1} at $v$. This integral can be easily solved (see \eqref{eqn:schurA} and \eqref{eqn:recoupl}) and yields
\beq
\mathcal{I}=\tr(\iota_{l'_v}\iota^{\dagger}_{l_v})\;\tr[\iota_{l_v}\prod_{i=1}^3 R^{j_i}(h_{f_i})\;\iota^{\dagger}_{l'_v}]
\eeq
with $\iota_{l_v},\iota_{l'_v}\in\Inv[\bigotimes\limits_{i=1}^3\hilbert_{j_{f_i}}]$. The second trace constitutes a (non-normalized) spin net function on the vertex graph $\gamma_v$. Using that $\Spin(4)$ functions $T^{BF}$ can be expanded in terms of $\SU(2)$ spin nets
\beq
T^{BF}_{\gamma_v,\rho,\iota}(\{g_{f_v}\})=T_{\gamma_v,j^L,\iota^L}(\{g^L_{f_v}\})\otimes T_{\gamma_v,j^R,\iota^R}(\{g^R_{f_v}\})\\
\eeq
a vertex amplitude at $v\in\kappa_{int}$ with vertex boundary graph $\gamma_v$ is generally given by 
\begin{align}
\label{eqn:vertexam2}
A_v(\{g_{f_v}\})
=\sum_{\{\rho_f\},\{\iota_l\}}\;\prod_{f\in\mathcal{V}(v)}\sqrt{\dim \rho_f}\;\;\; \tr\left(\bigotimes_{l\in\mathcal{V}(v)} \iota^{\dagger}_l\right) \;\;\; T^{BF}_{\gamma_v,\rho,\iota}(\{g_{f_v}\})\;.
\end{align}
The notation $\,^{\dagger}$ symbolizes that the intertwiners in $\tr$ are dual\footnote{For intertwiners based on $3j$-symbols/Clebsch-Gordan coefficients this difference is of academic nature since they are self-dual.} to the corresponding intertwiners in the spin net function. The sum over all labelings $\rho_f=(j_f^L,j^R_f)$ and the dimensional factor are remains of \eqref{eqn:deltaspin4} while the summation over orthonormal intertwiners $\sum\limits_{\iota}$ results from integrating products of more than three characters (see Appendix \ref{app:HA} equation \eqref{eqn:othogjint4}).

Each element $g_{f_v}$ associated to an internal vertex appears exactly twice in \eqref{eqn:zbound}, once in a vertex amplitude and once in the first distribution. Thus the integration over the bulk variables $g_{f_v}$ relates the vertex amplitudes by fixing the representation associated to the faces and causes
\beq
\label{eqn:partbffin}
Z[\kappa]=\sum_{\{\rho_f\},\{\iota_l\}}\prod_f \;d_{\rho_f}\prod_{v\in\kappa_{int}}A_v(\{\iota_{l_v}\})\;\; T^{BF}_{\partial\kappa,\rho,\iota}(\{g_{e_f}\})
\eeq
with
\beq
\label{eqn:vertexTr}
A_v(\{\iota_{l_v}\})=\tr\left(\bigotimes_{l_v\in\mathcal{V}(v)} \iota_{l_v}
\right)
=\tr\left(\bigotimes_{l_v\in\mathcal{V}(v)} \iota^L_{l_v}\otimes \iota^R_{l_v}
\right)~.
\eeq
This function coincides with \eqref{eqn:abstractpart} where $Q_e$ is the identity except for an additional face amplitude. 

So far, we only quantized BF-theory and still have to impose the simplicity constraint.
\subsubsection{The EPRL-model}
\label{ssec:EPRL}
Let us begin by a short overview of the EPRL-model \cite{Engle:2007uq}. To implement the simplicity constraint \eqref{eqn:simp} in the model we need to discretize it but the non-trivial dependence on the tetrad fields is complicating the matter. Therefore, we replace \eqref{eqn:simp} by $B=\Sigma +\frac{1}{\beta}\ast\Sigma$ where $\Sigma$ is a $\algebra{g}$ valued two-form satisfying\footnote{This idea goes back to \cite{bc1997}.} 
\beq
\label{eqn:simpII}
\Sigma^{IJ}\wedge \Sigma^{KL}=\frac{1}{4!}\epsilon^{IJKL}\epsilon_{MNPQ}\,\Sigma^{MN}\wedge \Sigma^{PQ}~.
\eeq 
The solutions of condition \eqref{eqn:simpII} fall into five sectors:
\begin{enumerate}
\item[(I$\pm$)] $\Sigma = \pm E \wedge E$ 
\item[(II$\pm$)] $\Sigma = \pm \ast E \wedge E$
\item[(deg)]  $\tr(\ast E \wedge E) = 0$
\end{enumerate}
The original constraint \eqref{eqn:simp} is, of course, only recovered if $\Sigma$ is in sector (II+) and thus one would need to implement an additional constraint. Nevertheless, the necessity of an additional constraint is widely ignored and we do so as well\footnote{For a suggestion of a constraint, forcing $\Sigma$ to be in (II+), see \cite{Engle:2011ps}.}.

As stated previously, 2-forms are naturally discretized on two dimensional surfaces. Consider for simplicity a 4-simplex\footnote{A 4-simplex is the complex hull of five points not all of which lie in a 3-d hyperplane.} $\sigma$ embedded in a manifold $\mathcal{M}$, label the vertices by $a=1,\dots,5$ and let $\tau_a$ be the tetrahedron not containing vertex $(a)$ and $\Delta_{ab}$ be the triangle $\tau_a\cap\tau_b$. Then,
\beq
\Sigma_{ab}^{IJ}:=\int_{\Delta_{ab}} \Sigma^{IJ}_{\mu\nu}~.
\eeq
and \eqref{eqn:simpII} is replaced by (see \cite{Engle:2007uq}):
\begin{enumerate}
\item\textit{Diagonal simplicity:}  
$
\ast\Sigma_{ab}\cdot\Sigma_{ab}=0
$
\item\textit{Off-diagonal simplicity:}
$
\ast\Sigma_{ab}\cdot\Sigma_{ac}=0 \quad\forall c\neq b, c\neq a
$
\item\textit{Dynamical simplicity}
\end{enumerate}
Furthermore, the bivectors $\Sigma_{ab}$ are closed, $\sum_{b:b\neq a}\Sigma_{ab}=0$, due to gauge-invariance. If $\sigma$ is non-degenerate, meaning that the tetrahedra span 3-dimensional subspaces and can be glued such that the resulting 4-simplex $\sigma$ spans a 4-dimensional subspace, then $\{B_{ab}\}$ satisfy additional non-degeneracy and orientation conditions. Each non-degenerate 4-simplex determines a unique set of such bivectors and each set of bivectors satisfying the above constraints determines a 4-simplex (see \cite{bc1997}). 

The dynamical simplicity constraint does not have to be implemented since diagonal, off-diagonal simplicity and closure already imply dynamical simplicity (however, these three sets of constraints
are stronger than 1., 2. 3. stated above). Moreover, the off-diagonal simplicity constraint can be replaced by the following condition:
\beq
\forall \tau_a\in\sigma\; \;\exists\, N_a\in\R^4\; \text{ s.t. } (N_a)_I (\ast\Sigma_{ab})^{IJ}=0\;\;\forall b\neq a~.
\eeq
Current spin foam models are based on a heuristic derivation of a discretized formal quantum gravity path integral form the plebanski action.
When imposing above constraints by integration over the corresponding Lagrange 
multipliers, one replaces curvatures by holonomies around loops $\dot{f}$ 
bounding faces $f$ in the dual complex. The $B$ fields, which are naturally smeared on the triangle $t_f$ dual to $f$, are replaced by invariant vector fields on the copy of $G$ corresponding to $f$. While in the formal continuum path integral the B fields were commuting integration variables and thus commuting constraints, they become now non commuting operators resulting in non commuting constraints. 
The diagonal simplicity constraint still commutes with all the other constraint and for this reason can be imposed strongly. It restricts the representations to those which obey
\beq
\left(\frac{2j_L}{1-\beta}\right)^2=\left(\frac{2j_R}{1+\beta}\right)^2~.
\eeq
The off-diagonal constraints are more complicated. In the new models \cite{Engle:2007uq,FK} they are treated by a master constraint $\hat{M}$ which projects states onto the highest weight, $j=j_L+j_R$, for $|\beta|<1$ and on the lowest weight, $j=j_L-j_R$, for $|\beta|>1$ of the decomposition $\hilbert_{(j_L,j_R)}|_{\SU(2)}\simeq \hilbert_{|j_L-j_R|}\oplus\cdots\oplus\hilbert_{j_L+j_R}$. This is a weak implementation of $\hat{M}$ in the sense that there exist some Hilbert space $\hilbert$ such that $\scal{\psi,\hat{M}\phi}=0$ for all elements $\psi,\phi\in\hilbert$. As shown in \cite{Ding:2009jq,Ding:2010fw} such a space of weak solutions is spanned by elements 
\beq 
\label{eqn:boundarysn}
T^{EPRL}_{\gamma_v,j^{\pm},\eta}(\{g_{f_v}\})
=\prod_{f\in\mathcal{V}(v)}\sqrt{d_{j_f^+}\;d_{j_f^-}}\;
\tr\left[ \prod_{f_v\in\mathcal{V}(v)}R^{j_{f_v}^+}(g_{f_v}^+)\; \;R^{j_{f_v}^-}(g_{f_v}^-) \prod_{e_v\in\mathcal{V}(v)}\!\tau^{\text{EPRL}}(\eta_{e_v})\right]~.
\eeq
where $\gamma_v$ is the vertex boundary graph associated to a 4-simplex (see \figref{fig:4-simplex}), $j^{\pm}$ and $j$ are $\SU(2)$-irreducibles satisfying 
\begin{figure}[t]
 \begin{center}
	\includegraphics[width=3 cm]{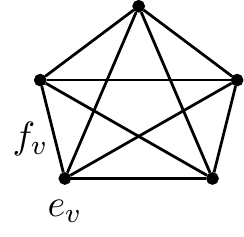}
 \end{center}
\caption{The vertex boundary graph of a vertex $v$ dual to a 4-simplex $\sigma$ is obtained by associating a node to every tetrahedron of $\sigma$ and a link to every triangle. In the dual 2-complex every node of $\gamma_v$ corresponds to an internal edge $e_v\in\mathcal{V}(v)$ and every link of $\gamma_v$ to a face $f_v\in\mathcal{V}(v)$.}
\label{fig:4-simplex}
\end{figure}
\beq
\label{eqn:j}
j^{\pm}\equiv\frac{|\beta\pm1|}{2}j,
\eeq
and $\tau^{\text{EPRL}}$ is a map 
\begin{gather}
\begin{gathered}
\label{eqn:eint}
\tau^{\text{EPRL}}: \Inv_{\SU(2)}\left(\bigotimes\limits_{i=1}^4 \hilbert_{j_i}\right)
\to\Inv_{\SU(2)}\left(\bigotimes\limits_{i=1}^4 (\hilbert_{j^+_i}\otimes \hilbert_{j^-_i})\right)\\[3pt]
\left[\tau^{\text{EPRL}}(\eta_{e_v})\right]^{\{j^+_f,m^+_f\},\{j^-_f,m^-_f\}}=\tr\left[ \eta_{e_v}^{\{j_f,A_f\}} \prod_{f\in\mathcal{V}(e)} [C_f]^{j^+_f,m^+_f;j^-_f,m^-_f}_{j_f,A_f}\right] 
\end{gathered}
\end{gather}
coupling $j^+$ and $j^-$ to $j$ by the Clebsch-Gordan coefficient  $C^{j^+,m^+;\;j^-,m^-}_{j,A}:=\scal{j,A |j^+,m^+;j^-,m^-}$. In the subsequent discussion we will call this space of weak solutions $\hilbert^{EPRL}_{\gamma_v}$. The map $\tau^{\text{EPRL}}$ is, of course, only well-defined if $\frac{|\beta\pm1|}{2}k$ is a half-integer which puts additional constraints on $\beta$ and $j$. Yet, this problem only occurs in the Euclidean theory and can be avoided by requiring $\beta$ to be an odd integer\footnote{In this case, $\tau^{\text{EPRL}}$ is injective (see third reference of \cite{Kaminski:2009fm}).}.

Following the above considerations, the off-diagonal constraints are implemented weakly in the model when projecting the BF-Amplitude onto $\hilbert^{EPRL}$:
\beq
\label{eqn:vertamp}
\mathcal{A}^{EPRL}_v(\{g_{f_v}\})=\sum_{j_f,\iota_e}\;\;\scal{T^{EPRL}_{\gamma_v,j_f,\iota_e}|\mathcal{A}_v}\;T^{EPRL}_{\gamma_v,j_f,\iota_e}(\{g_{f_v}\})~.
\eeq
This is non-zero iff $(j^L,j^R)\equiv(j^-,j^+)$ and obviously also implements diagonal simplicity. Plugging this back into the full partition function results in
\beq
\label{eqn:zbound1}
\begin{split}
Z[\kappa]= &\sum_{\{j_f^{\pm}\},\{\eta_{e_v}\}}\;\prod_f\;d_{j_f^+}\;d_{j_f^-}\prod_{v\in\mathcal{V}_{int}}\;
\left\{\prod_{e_v}\sum_{\iota_{e_v}^+,\iota_{e_v}^-} f^{\eta_{e_v}}_{\iota_{e_v}^+,\iota_{e_v}^-}\mathcal{A}_v(\{\iota^{\dagger}_{e_v}\})\right\}\\
&\times\sum_{\{j_l\},\{\eta_{e_{\n}}\}} \left(\prod_{\l_f\in\partial\kappa^{(1)}}\frac{1}{\sqrt{d_{j_{\l_f}^+}\;d_{j_{\l_f}^-}}}\right)
T^{EPRL}_{\partial\kappa,j^{\pm}_{\l_f},\eta_{e_{\n}}}(\{g_{\l_f}\})
\end{split}
\eeq
where $ f^{\eta_e}_{\iota_e^+,\iota_e^-}$ are the well known fusion coefficients \cite{Engle:2007uq}
\beq
\label{eqn:fusion}
f^{\eta}_{\iota^+,\iota^-}:=\tr\left[ \overline{\tau^{\text{EPRL}}(\eta)}\;\;\iota^+\iota^-\right] ~.
\eeq
The above model can be extended to non-degenerate arbitrary triangulations (see \cite{Ding:2010fw}) by making use of Minkowski's theorem \cite{Minkowski:poly} stating that a polyhedron is uniquely determined, up to inversion and translations, by its face areas and normals. 
\subsubsection{The KKL-model}
\label{sec:KKL}
The above approach meets several technical challenges. 
Apart from those connected to the non commutative nature of the simplicity constraints 
mentioned above, several difficulties arise when trying to combine covariant and canonical LQG. Heuristically, the `time evolution' of a spin-network would produce a spin foam but a generic foam is not dual to a triangulation (see \secref{sec:triangulation} for a detailed discussion).
Furthermore, \eqref{eqn:eint} is an $\SU(2)$ intertwiner and accordingly $T^{EPRL}$ is not $\Spin(4)$ but $\SU(2)$ invariant. Both problems are avoided in the KKL-approach \cite{Kaminski:2009fm}.

Consider an arbitrary foam (\secref{ssec:spinf}) whose faces are colored by irreps of $\Spin(4)$ and whose edges are labeled by operators $Q_e$ in the induced intertwiner space. If we choose $Q_e$ to be the identity we formally recover BF-theory \eqref{eqn:partbffin}, to `implement' the simplicity constraint one has to restrict the coloring to EPRL data:
\begin{align}
&f\to\rho_f\equiv(j_f^+,j^-_f)&&\forall f\in\kappa^{(2)}\\
&e\to \zeta^{KKL}(\eta_{s(e)})\otimes\zeta^{\dagger}_{KKL}(\eta^{\dagger}_{t(e)})&&\forall e\in\kappa^{(1)}_{int}
\end{align}
where 
\begin{gather}
\begin{gathered}
\label{eqn:EPRL}
\zeta^{KKL}:\Inv_{\SU(2)}\left(\bigotimes_{f}\hilbert^{j_f}\right)\to
\Inv_{\Spin(4)}\left(\bigotimes_{f} \hilbert^{\rho_f}\right)\\
\eta\mapsto \sum_{\iota^{\pm}} f_{\iota^+\,\iota^-}^{\eta}\, \iota^+\otimes\,\iota^-
\end{gathered}
\end{gather}
maps $\SU(2)$ intertwiners $\eta$ to $\Spin(4)$ ones. Assuming that all edges are incoming, the vertex amplitude \eqref{eqn:verttr} is given by
\begin{gather}
\begin{gathered}
\tr\left(\bigotimes_{e_v\in\mathcal{V}(v)}\zeta^{KKL}(\eta_{e_v})\right)
=\left[\prod_{e_v\in\mathcal{V}(v)} \sum_{\iota_{e_v}^+,\iota_{e_v}^-}f^{\eta_{e_v}}_{\iota_{e_v}^+,\iota_{e_v}^-}\right]
\mathcal{A}_v(\{\iota^{\pm}_{e_v}\})\;.
\end{gathered}
\end{gather}
When each edge is labeled by an operator of the type $Q_e= \ket{\zeta^{KKL}(\eta)}\bra{\zeta^{KKL}(\eta)}$ then the KKL-partition function 
\beq
\begin{split}
\label{eqn:KKL1}
Z^{KKL}[\kappa]= &\sum_{\{j_f^{\pm}\},\{\eta_{e_v}\}}\prod_{v\in\kappa^{(0)}_{int}}\;
\left\{\prod_{e_v}\sum_{\iota_{e_v}^+,\iota_{e_v}^-} f^{\eta_{e_v}}_{\iota_{e_v}^+,\iota_{e_v}^-}\mathcal{A}_v(\{\iota_{e_v}\})\right\}
T^{KKL}_{\partial\kappa,j^{\pm}_{\l_f},\eta_{e_{\n}}}(\{g_{\l_f}\})
\end{split}
\eeq
is almost the same as \eqref{eqn:zbound1} but differs in the states induced on the boundary graph
\beq
\label{eqn:KKLstates}
T^{KKL}_{\partial\kappa,j_{\l},\eta_{\n}}(\{g_{\l}\}):=\tr\left(\prod_{\l\in\partial\kappa^{(1)}}R^{j^+_{\l}}\!(g^+_{\l})\; R^{j^-_{\l}}\!(g^-_{\l})
\prod_{\n\in\partial\kappa^{(0)}}\left[\sum_{\iota^+_{\n},\iota^-_{\n}} 
f^{\eta_{\n}}_{\iota^+_{\n}\iota^-_{\n}}\;\iota^+_{\n}\otimes\iota^-_{\n}\right]\right)\;.
\eeq
In contrast to the $\SU(2)$ intertwiner $\tau^{\text{EPRL}}(\eta)$ the intertwiner $\zeta^{KKL}(\eta)$ is $\Spin(4)$ invariant and therefore the space $\hilbert_{KKL}$ spanned by the states \eqref{eqn:KKLstates} is a proper subspace of $\hilbert_{BF}$. For a visualization of the different spin nets see \figref{fig:KKL}.
\begin{figure}[t]
 \begin{center}
	\subfloat[A link in $\hilbert^{KKL}$]{\includegraphics{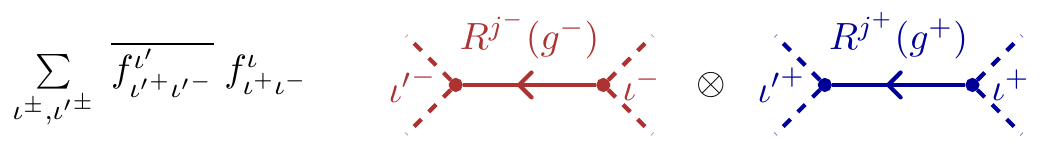}}
	\\
	\subfloat[A link in $\hilbert^{EPRL}$]{\includegraphics{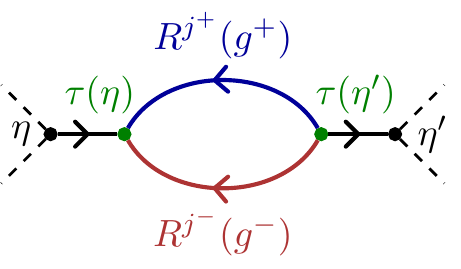}}
 \end{center}
\caption{Different graphical visualization of a link in $\hilbert^{KKL}$ and in $\hilbert^{EPRL}$.}
\label{fig:KKL}
\end{figure}

Although, \eqref{eqn:KKLstates} are linearly independent they are not orthogonal (see \cite{Kaminski:2009fm}) with respect to the BF-scalar product since
\begin{gather}
\begin{gathered}
\label{eqn:BFscalKKL}
\scal{T^{KKL}_{\partial\kappa,j_{\l},\eta_{\n}}|T^{KKL}_{\partial\kappa,j'_{\l},\eta'_{\n}}}_{BF}:=\!\!\!\!\int\limits_{\Spin(4)}\!\!\! \prod_{\l}\dif g_{\l}\;\overline{T^{KKL}_{\partial\kappa,j_{\l},\eta_{\n}}(\{g_{\l}\})}\;\;T^{KKL}_{\partial\kappa,j'_{\l},\eta'_{\n}}(\{g_{\l}\})\\
=\prod_{\l}\frac{\delta_{j_{\l},j'_{\l}}}{ d_{j^+_{\l}}\;d_{j^-_{\l}}} \prod_{\n}\underbrace{\left[\sum_{\iota_{\n}^{\pm}} \overline{f^{\eta'_{\n}}_{\iota_{\n}^+\iota^-_{\n}}}\;f^{\eta_{\n}}_{\iota_{\n}^+\iota^-_{\n}}\right]}_{H^{\eta_{\n}}_{\eta'_{\n}}}
\end{gathered}
\end{gather}
and $H^{\eta}_{\eta'}$ is in general not even diagonal. However, the KKL-map \eqref{eqn:EPRL} is injective \cite{Kaminski:2009fm} and if $\{a\}$ is an orthonormal basis in $\SU(2)$ intertwiner space then $\{\zeta^{KKL}(a)\}$ constitutes a basis in the KKL-intertwiner space. Instead of diagonalizing this basis we can, introduce an operator $Q$
\begin{align}
&Q:\Inv^{KKL}\left(\bigotimes_{e} \hilbert_{\rho_e}\right) \to \Inv^{KKL}\left(\bigotimes_{e} \hilbert_{\rho_e}\right)\\
\begin{split}
\label{eqn:qexp}
& \qquad Q[\zeta^{KKL}(a)]=\sum_b Q^{\,a}_{\;\;b}\;\zeta^{KKL}(b)
\end{split}
\end{align}
such that 
\beq
\label{eqn:diag}
\delta_{c}^{\;a}=\scal{\zeta^{KKL}(c)|Q[\zeta^{KKL}(a)]}_{BF}
\eeq
as suggested in \cite{Bahr:2010bs}. By expanding the KKL-map $\zeta^{KKL}(a)=f^a_{a^+ a^-}\;a^+\otimes a^-$ in an orthonormal basis $\{a^+\otimes a^-\}$ of $\Spin(4)$ intertwiners, $Q$ can be defined equivalently by
\begin{align}
\label{eqn:idkkl}
\begin{split}
Q&=\tilde{Q}^{\,a}_{\;\;b}\; (\zeta^{KKL}(a))^{\dagger}\otimes\zeta^{KKL}(b)\\[3pt]
	&=\overline{\fusion{a}}\;\tilde{Q}^{\,a}_{\;\;b}\;\fusion{b}\; (a^+\otimes a^-)^{\dagger}\otimes(b^+\otimes b^-)
\end{split}
\end{align}
where $\overline{\fusion{a}}\;\tilde{Q}^{\,a}_{\;\;b}\;\fusion{b}=\delta^{\,a^+}_{\;\;b^+}\,\delta^{\,a^-}_{\;\;b^-}\, [H^{-1}]^{\,a}_{\;\;b}$. Thus, the natural scalar product on $\hilbert_{KKL}$ is the product
\begin{gather}
\begin{gathered}
\label{eqn:scalarKKL}
\scal{T^{KKL}_{\gamma,a_{\n},j_{\l}}|T^{KKL}_{\gamma',b_{\n'},j'_{\l'}}}_{KKL}
:=\scal{T^{KKL}_{\gamma,a_{\n},j_{\l}}|\prod_{\n\in\gamma} Q_{\n}|T^{KKL}_{\gamma',b_{\n'},j'_{\l'}}}_{BF}\\
=\delta_{\gamma,\gamma'}\prod_{\l} \frac{\delta_{j_{\l},j'_{\l}}}{d_{j_{\l}^+}d_{j_{\l}^-}}\prod_{\n}\delta_{a_{\n},b_{\n}}
\end{gathered}
\end{gather}
with respect to which the states $T^{KKL}$ are orthogonal. Note, the identity operator w.r.t \eqref{eqn:scalarKKL} is formally
\beq
\left(\prod_{\l\in\gamma} d_{j_{\l}^+}d_{j_{\l}^-}\right)\;\mathbb{1}_{\gamma, KKL}\;=\ket{T^{KKL}_{\gamma}}_{BF}\bra{T^{KKL}_{\gamma}}\prod_{\n\in\gamma} Q_{\n}~.
\eeq
In contrast, the EPRL-states \eqref{eqn:vertamp} are already orthogonal and therefore it is possible to implement the simplicity constraint before performing the integration on the bulk variables. In the second approach the model is defined by restricting the representations of the BF-partition function \eqref{eqn:partbffin}. If we would have done this before integration then we would encounter additional edge amplitudes due to \eqref{eqn:BFscalKKL}. Therefore, it is advisable to label each internal edge by an operator $Q_e$ so that \eqref{eqn:KKL1} is replaced by
\beq
\label{eqn:BHKKL1}
\tilde{Z}^{KKL}[\kappa]=
\sum_{\substack{\{j_f\}\\\{a_{e_v}\}}}
\sum_{\{a^{\pm}_{e_v}\}}\;
\prod_{e\in\kappa_{int}}\mathcal{A}_e(a_{s(e)}^{\pm},a_{t(e)}^{\pm})
\prod_{v\in\kappa_{int}} \mathcal{A}_v(a^{\pm}_{e_v})\;
T^{BF}_{\partial\kappa,a_{e_{\n}},j^{\pm}_{\l_f}}(\{g_{\l_f}\})
\eeq
with edge amplitude
\beq
\mathcal{A}_e(a^{\pm}_{s(e)},a^{\pm}_{t(e)}):=\overline{f^{\;a_{s(e)}}_{a_{s(e)}^+a_{s(e)}^-}}\;\;(\tilde{Q}_e)^{a_{s(e)}}_{a_{t(e)}}\;\;f^{a_{t(e)}}_{a_{t(e)}^+a_{t(e)}^-}\;.
\eeq
Here, the fusion coefficients are absorbed in the edge amplitude and therefore $T^{KKL}$ had to be replaced by $T^{BF}$. Nevertheless,  \eqref{eqn:BHKKL1} still defines a distribution in $\hilbert_{KKL}$.
In the following we will mainly work with the object \eqref{eqn:BHKKL1} and just write $Z[\kappa]$ instead of $\tilde{Z}^{KKL}[\kappa]$ to keep the notation simple.
\subsection{Projected spin networks}
\label{ssec:psn}
The advantage of the KKL-model is the preservation of covariance (see \cite{Rovelli:2010ed}) but merging canonical and covariant approach is more complicated and will involve so-called projected spin nets \cite{Alexandrov:2002br,Dupuis:2010jn}. 

The difficulty is to find a map $\hilbert_{0}\to\hilbert_{KKL}$ projecting the $\SU(2)$ invariant functions in $\hilbert_{0}$ onto $\Spin(4)$ invariant functions $T^{KKL}$. To do so we first need to establish an isomorphism between $\SU(2)$ and a $\SU(2)$-subgroup of $\Spin(4)$. Unfortunately, there exist no canonical choice of such a subgroup but one has to fix a normal $\mathbf{n}$ left invariant by the $\SU(2)$-subgroup $\SU_{\mathbf{n}}(2)\in\Spin(4)$.

The manifold $\SU(2)$ is isomorphic to the sphere $S^3$, which is uniquely determined by the set of vectors $\mathbf{n}\in\R^4$, $\|\mathbf{n}\|=1$, that is, we can define a bijection
\beq
\label{eqn:isomsphere}
\omega:S^3\to\SU(2)\quad \mathbf{n}\mapsto \omega(\mathbf{n})=\frac{1}{2}\mathbf{n}_{\mu}\sigma^{\mu}
\eeq
where $\sigma^0=\mrix{1}_{2}$ and $\sigma^i$ are the Pauli matrices and construct a projection $\pi_2:\Spin(4)\to\SO(4)$, $(g_L,g_R)\mapsto E(g_L,g_R)$, such that $\omega(E(g_L,g_R)
\cdot \mathbf{n}):=g_L\;\omega(\mathbf{n})(g_R)^{-1}$. Fix $\time\equiv(1,0,0,0)$ then the $\SU(2)$-subgroup $\SU_{\time}(2)\subset\Spin(4)$ stabilizing $\time$ is the set of all elements $(h,h)\in\Spin(4)$. Since any normal is uniquely determined by the action of $\SO(4)$ on $\time$, i.e.
\beq
\label{eqn:omegan}
\omega(\mathbf{n})=\omega(E(B_{\mathbf{n}}^L,B_{\mathbf{n}}^R)\time)=B_{\mathbf{n}}^L(B_{\mathbf{n}}^R)^{-1}
\eeq
for some $B_{\mathbf{n}}=(B_{\mathbf{n}}^L,B_{\mathbf{n}}^R)\in\Spin(4)$, the subgroup $\SU_{\mathbf{n}}(2)$ is the set of all elements 
\bq
B_{\mathbf{n}}\triangleright(h,h)
=(B_{\mathbf{n}}^L\, h\,(B_{\mathbf{n}}^L)^{-1},B_{\mathbf{n}}^R\, h\,(B_{\mathbf{n}}^R)^{-1})~. 
\eq
Note, the projection $\pi_2$ is two-to-one because $E(g_l,g_R)=E(-g_L,-g_R)$ which is due to the fact that $\Spin(4)$ is the double cover of $\SO(4)$.

On the one hand, it is necessary to fix a normal in order to identify the different copies of $\SU(2)$ but, on the other hand, this breaks $\Spin(4)$-invariance. A way out of this dilemma is to consider spin network functions whose nodes $v$ are also labeled by normals $\mathbf{n}_v$ that transform in the defining $\SO(4)$-representation: $\Lambda\triangleright \mathbf{n}:= E(g_L,g_R)\mathbf{n}$ for $\Lambda=(g_L,g_R)\in\Spin(4)$. Let $\mathcal{K}$ be the space of square integrable, gauge invariant functions   
\beq
\label{eqn:proj}
\phi(\gamma,\{g_{\l}\}, \{\mathbf{n}_v\})=\phi(\gamma,\{\Lambda_{s(\l)}g_{\l}\Lambda_{t(\l)}^{-1}\},\{\Lambda_v\triangleright \mathbf{n}_v\})
\eeq
with the scalar product
\beq
\label{eqn:scalproj}
\scal{\phi|\phi'}
=\delta_{\gamma,\gamma'}\left(\prod_v\int_{S^3}\!\dif n_v\,\delta(\mathbf{n}_v-\mathbf{n}'_v)\right)
\int[\prod_{\l}\dif{g_{\l}}]\;
\overline{\phi(\gamma,\{g_{\l}\},\{\mathbf{n}_v\})}\phi'(\gamma',\{g_{\l'}\},\{\mathbf{n'}_{v'}\})~.
\eeq
Remarkably, the so-called \emph{projected spin network functions} \ref{eqn:proj} do not depend on the choice of the normal $\mathbf{n}_v$. In \cite{Dupuis:2010jn}, the authors have shown, using Schur orthogonality, gauge invariance \eqref{eqn:proj} and the properties of the intertwiners \eqref{eqn:eint}, that $\mathcal{K}$ is spanned by the orthonormal functions
\begin{gather} 
\label{eqn:basisproj}
\begin{gathered}
\phi_{\gamma,j^{R,L},\eta}(\{g_{\l}\},\{\mathbf{n}_v\}):=\\
\prod_{\l\in\gamma}\sqrt{d_{j_{\l}^L}\;d_{j_{\l}^R}}\;
\tr\left[\prod_{\l\in\gamma}
R^{j_{\l}^L}((B^L_{\mathbf{n}_{s(\l)}})^{-1}g^L_{\l}\,B^L_{\mathbf{n}_{t(\l)}})\;
R^{j_{\l}^R}((B^R_{\mathbf{n}_{s(\l)}})^{-1}g^R_{\l}\,B^R_{\mathbf{n}_{t(\l)}})\prod_v\tau^{\text{EPRL}}(\eta_v)\right]~.
\end{gathered}
\end{gather}
In contrast to the EPRL-states where $\eta$ couples to the highest (lowest) weight of $\hilbert_{j^+}\otimes\hilbert_{j^-}$ the coupling here is not restricted. It is even allowed that $\eta_{s(\l)}$ and $\eta_{t(\l)}$ couple to different spins $j_{s(\l)},j_{t(\l)}\in\{|j_L-j_R|,\dots,j_L+j_R\}$. 

When fixing a time gauge, $\mathbf{n}_v\equiv\time\;\forall v\in\gamma^{(0)}$, and restricting $g_{\l}$ to the subgroup $\SU_{\time}(2)$ then \eqref{eqn:basisproj} reduce to usual $\SU(2)$-spin network functions provided that $j_{s(\l)}=j_{t(\l)}$ and vanishes otherwise. This can be easily verified by using the equivariant property of intertwiners, 
\bq
[R^{j_1}(h)]^{m_1}\!_{n_1}\;[R^{j_2}(h)]^{m_2}\!_{n_2}\;C^{j_1,n_1;j_2,n_2}_{j_3,n_3}
= C^{j_1,m_1;j_2,m_2}_{j_3,m_3}\; [R^{j_3}(h)]^{m_3}\!_{n_3}  ~,
\eq
and the normalization of Clebsch-Gordan- coefficients, $C^{j_1,m_1;j_2,m_2}_{j_3,m_3}C_{j_1,m_1;j_2,m_2}^{j'_3,m'_3}=\delta_{j_3}^{j'_3}\delta_{m_3}^{m'_3}$. More precisely, if $j_{s(\l)}=j_{t(\l)}$ then
\beq
\label{eqn:Kproj}
\begin{split}
\phi_{\gamma,j,\eta}(\{h_{\l}\})&=\prod_{\l}\sqrt{d_{j_{\l}^L}\;d_{j_{\l}^R}}\;
\tr\left[\prod_{\l}
R^{j_{\l}^L}(h_{\l})\;R^{j_{\l}^R}(h_{\l})\prod_v\tau^{\text{EPRL}}(\eta_v)\right]\\
&=\prod_{\l}\sqrt{d_{j_{\l}^L}\;d_{j_{\l}^R}}\;
\tr\left[\prod_{\l} R^{j_{\l}}(h_{\l})\prod_v\eta_v\right]~.
\end{split}
\eeq
Vice versa, kinematical states $T\in\hilbert_{0}$ to $\mathcal{K}$ can be lifted via the expansion of convolutions of $\SU(2)$ and $\Spin(4)$ in terms of characters $\chi$ and $\Theta$ respectively. Explicitly,
\begin{gather}
\label{eqn:lift}
\begin{gathered}
\left[L\;T_{\gamma,j,\eta}\right](\{g_{\l}\},\{\mathbf{n}_v\}):=\\
\prod_{\l} \left[N_{\l}\sum_{\tilde{j}_{\l}}\, \int\limits_{\SU(2)}\!\!\!\dif{h_{\l}}\;\dif{k_{\l}}\;\overline{\chi^{\tilde{j}_{\l}}(k_{\l}h_{\l})}\;\Theta^{\tilde{j}^L_{\l},\tilde{j}_{\l}^R}(B_{\mathbf{n}_{s(\l)}}^{-1}\,g_{\l}\, B_{\mathbf{n}_{t(\l)}}h_{\l})\right]T_{\gamma,j,\eta}(\{k_{\l}\})
\end{gathered}
\end{gather}
with some normalization constant $N_{\l}$. Below, we are only interested in the case where $j^L/j^R$ are determined by the simplicity constraint $j^{L/R}=j^{\pm}$. 
\para{Relation between $\mathcal{K}$, $\hilbert^{EPRL}$ and $\hilbert^{KKL}$}
When the normals are fixed but the group elements left arbitrary then the states \eqref{eqn:basisproj} are obviously basis states of $\hilbert^{EPRL}_{\gamma}$. Integrating  over the normals yields states in $\hilbert^{KKL}_{\gamma}$,
\beq
\begin{split}
&\int\limits_{S^3}\prod_{v\in\gamma^{(0)}}
\dif{\mathbf{n}_v}\;\phi_{\gamma,j,\eta}(\{g_{\l}\}, \{\mathbf{n}_v\})\\
&:=\int\limits_{\Spin(4)}\prod_{v\in\gamma^{(0)}}\dif{B_v}\;\phi_{\gamma,j,\eta}(\{(B_{s(\l)})^{-1}g_{\l}\;B_{t(\l)}\})\\
&=\tr\left\{\prod_{\l\in\gamma^{(1)}}\sqrt{d_{j_{\l}^+}d_{j_{\l}^-}}
R^{j_{\l}^+}(g^+_{\l})\; R^{j_{\l}^-}(g^-_{\l})
\prod_{v\in\gamma^{(0)}}\left[\sum_{\iota^+_v,\iota^-_v} 
f^{\eta_v}_{\iota^+_v\iota^-_v}\;\iota^+_v\otimes\iota^-_v\right]\right\}\;,
\end{split}
\eeq
which shows that
\begin{gather}
\label{eqn:nnProj}
\begin{gathered}
P_{\gamma}:\hilbert_{\inv,\gamma}\to\hilbert^{KKL}_{\gamma}\\
[P_{\gamma}\psi_{\gamma}](\{g_{\l}\})=\int\prod_v\dif{\mathbf{n}_v}\;[L\psi_{\gamma}](\{g_{\l}\},\{\mathbf{n}_v\})\;.
\end{gathered}
\end{gather}
defines an isomorphism between $\hilbert_{0,\gamma}$ and $\hilbert_{\gamma}^{KKL}$. For instance a normalized state in $\hilbert_{0}$ (see \eqref{eqn:gauginv}) is lifted to
\begin{align}
\begin{split}
&\left[L\;T_{\gamma,j,\eta}\right](\{g_{\l}\},\{\mathbf{n}_v\})=
\prod_{\l}\frac{N_{\l}\sqrt{d_{j_{\l}}}}{d_{j_{\l}}^2}\\
&\times\tr\left(\prod_{\l} R^{j^+_{\l}}((B^+_{s(\l)})^{-1}g^+_{\l}\,B^-_{t(\l)})
R^{j_{\l}^-}((B^-_{s(\l)})^{-1}g^-_{\l}\,B^-_{t(\l)})\prod_v\tau^{\text{EPRL}}(\eta_v)\right)~.
\end{split}
\end{align}
and afterwards projected 
 \begin{align}
 \label{eqn:nProj}
 \begin{split}
&[P_{\gamma}T_{\gamma,j,\eta}](\{g_{\l}\})=\left(\prod_{\l}\frac{N_{\l}\sqrt{d_{j_{\l}}}}{d_{j_{\l}}^2}\right)
 \int\limits_{\Spin(4)}\!\!\!\left(\prod_v\dif B_v\right)\\
&\times\tr\left(\prod_{\l} R^{j^+_{\l}}((B^+_{s(\l)})^{-1}g^+_{\l}\,B^-_{t(\l)})
R^{j_{\l}^-}((B^-_{s(\l)})^{-1}g^-_{\l}\,B^-_{t(\l)})\prod_v\tau^{\text{EPRL}}(\eta_v)\right)\\
&=\left(\prod_{\l}\frac{N_{\l}}{(d_{j_{\l}}d_{j^+_{\l}}d_{j^-_{\l}})^{\frac{3}{2}}}\right)\tilde{T}^{KKL}(\{g_{\l}\})
\end{split}
\end{align}
to an orthonormal state $\tilde{T}^{KKL}:=(\prod_{\l} \sqrt{d_{j_{\l}^+}\;d_{j_{\l}^-}}) T^{KKL}$ (w.r.t. \eqref{eqn:scalarKKL}). 
\section{Spin foam Projector}
\label{sec:projector}
\subsection{The general idea}
\label{ssec:road}
Similar to many other constraint systems, zero does not lie in the point spectrum of the GR-constraints so that one has to search weak rather than strong solutions. Given a family of constraints $(\hat{C}_I)_{I\in\mathcal{I}}$ a weak solution $L\in\mathcal{D}^{\ast}_{\phys}\subset\hilbert_{0}$ is an element in the algebraic dual of a dense domain $\mathcal{D}_{0}\subset\hilbert_{0}$ for which 
\beq
\left[(\hat{C}_I)^{\ast}L\right](f):=L(\hat{C}_I^{\dagger}f)=0
\eeq
holds for all $I\in\mathcal{I}$ and $f\in\mathcal{D}_{0}$. In this equation $(\hat{C}_I)^{\ast}$ refers to the dual operator acting on $\mathcal{D}^{\ast}_{0}$ and $\hat{C}^{\dagger}$ to the hermitian adjoint acting on $\hilbert_{0}$. For physical measurements and interpretation $\mathcal{D}^{\ast}_{\phys}$ must be equipped with a scalar product. Unfortunately, it is not possible to naively use the kinematical product $\escal_{0}$ since $L$ is generically not in the topological dual. Instead, assume that $\mathcal{D}^{\ast}_{\phys}$ is the algebraic dual of a dense subspace $\mathcal{D}_{\phys}$ of a Hilbert space $\hilbert_{\phys}$ whose scalar product $\scal{\cdot,\cdot}_{phys}$ can be constructed by an anti-linear (rigging) map \footnote{For more details on the construction of a rigging map see e.g. \cite{lqgcan2} and references therein.}   
\beq
\label{eqn:rig}
\eta:\mathcal{D}_{0}\to\mathcal{D}^{\ast}_{0}
\eeq
such that
\beq
\scal{f|f'}_{\phys}:=\scal{\eta[f]|\eta[f']}_{0}:=\eta[f](f')\qquad f,f'\in\mathcal{D}_{0}~.
\eeq
If this rigging map exists then $\hilbert_{\phys}$ is the completion of  $\mathcal{D}_{\phys}:=\eta(\mathcal{D}_{0})/\text{ker}(\eta)$. For well-behaved systems $\{C_I\}$ (closed, locally compact Lie-group) a rigging map can be constructed by exponentiating the constraints 
\beq
\label{eqn:feyn1}
[\eta(f)](f')=\int_T\dif{\mu(T)}\scal{\exp{i t^I\hat{C}^I} f,f'}_{0}
\eeq
with multipliers $(t^I)_{I\in\mathcal{I}}\in T$ and a suitable invariant measure $\mu(T)$. Thus, a rigging map solves  two problems in one stroke: it projects on the subspace of solutions and defines a scalar product. 

For closed finite constraint systems a rigging map always exist. But the constraints in GR do not generate a Lie-algebra but a Lie-algebroid and it is not clear that the above procedure can be applied. Nevertheless, it is often emphasized that spin foams could provide such a rigging map even though one starts with a different action and constraint algebra and therefore with a different symplectic structure (see e.g. \cite{Alexandrov:2010un}). Ignoring these problems we want to take a rather naive point of view and regard spin foams as a computational algorithm to construct a projector onto, or at least into, the physical Hilbert space.

The spin foam partition function $Z[\kappa]$ is often interpreted heuristically as the evaluation of a two dimensional `Feynman diagram' $\kappa$ appearing in the transition amplitude   
\beq
\label{eqn:feyn2}
\int\dif{N}\;\scal{T_{s_1}|\exp{i N\hat{H}}|T_{s_2}}``="\sum_{\kappa:\gamma_1\to \gamma_2}\scal{T_{s_1}|Z^{\dagger}[\kappa]|T_{s_2}}~.
\eeq 
with spin nets $s_{1/2}=(\gamma_{1/2},j_{1/2},\iota_{1/2})$. The reason for taking the adjoint spin foam amplitude $Z^{\dagger}[\kappa]$, that is, the amplitude $Z[\kappa^{\ast}]$ associated to the complex $\kappa^{\ast}$ obtained from $\kappa$ by reversing all internal face and edge orientations, is that it is more convenient in the later to interpret $T_{s^1}$ as the ingoing spin net. For the moment this is just a mere convention.  

The sum on the right hand side of equation \eqref{eqn:feyn2} presumes the existence of a tool to identify semianalytic graphs in $\hilbert_{0}$ with p.l. graphs in the boundary of $\kappa^{\ast}$. This issue will be discussed at length in the next subsection, for the discussion below it suffice to assume that such foams exist. More precisely, we assume that the boundary spin net $T_{\partial\kappa^{\ast},j',\iota'}$ of $\kappa^{\ast}$ can be identified with the $\SU(2)$-nets $T_{s'_1}\otimes T^{\dagger}_{s'_2}$ so that 
\beq
\scal{T_{s_1}|Z^{\dagger}[\kappa]|T_{s_2}}
=\sum_{\{j_f\},\{\iota_v\}}\prod_{v\in\kappa_{int}} \mathcal{A}_v\prod_{e\in\kappa} \mathcal{A}_e\prod_{f\in\kappa}\mathcal{A}_f\;
 \scal{T_{s_1},T_{s'_1}}\;
\scal{T_{s'_2},T_{s_2}}~.
\eeq
where vertex, edge and face amplitude, $\mathcal{A}_v$, $\mathcal{A}_e$ and $\mathcal{A}_f$, depend on the model. In particular, we can either work with the first approach (\secref{ssec:EPRL}) and project the states in $\hilbert^{EPRL}$ to $\hilbert_{0}$ by restricting the group elements on $\partial\kappa$ to $\SU(2)$-elements or work with the KKL-proposal and projective spin nets. Since $T^{KKL}$ are manifestly covariant we prefer the second. However, it should be kept in mind that one could equivalently define the model on spin foams of the EPRL kind.

In analogy to \eqref{eqn:feyn1}, one can now postulate a rigging map
\begin{gather}
\begin{gathered}
\label{eqn:project}
\eta_{\gamma_1,\gamma_2}: \hilbert_{0,\gamma_1}\to \hilbert^{\ast}_{0,\gamma_2} \\
\eta_{\gamma_1,\gamma_2}[T_{s_1}](T_{s_2})=\sum_{\kappa:\gamma_1\to \gamma_2}\scal{T_{s_1}|Z^{\dagger}[\kappa]|T_{s_2}}\;.
\end{gathered}
\end{gather}
The state $\eta[T_{s}]$ is clearly distributional and, thus, an element of the algebraic rather than the topological dual as $Z^{\dagger}[\kappa]$ includes an infinite sum over all labelings. If $\eta$ is a proper rigging map then it should satisfy
\beq
\label{eqn:constraint-proj}
\eta[T_{s_1}](\hat{H}T_{s_2})=\sum_{s_m}\;\sum_{\kappa:\gamma_1\to \gamma_m}\scal{T_{s_1}|Z^{\dagger}[\kappa]|T_{s_m}}\scal{T_{s_m}|\hat{H}|T_{s_2}}=0
\eeq
for all $T_{s_1},T_{s_2}\in\hilbert_{0}$. The sum over all intermediate spin nets $s_m$ including a sum over all possible graphs $\gamma_m$ seems to be ill-defined since the kinematical Hilbert space of LQG is not separable. Even graphs which only differ sightly in their shape and not in their combinatorics give rise to orthogonal spin nets and thus are to be considered inequivalent. Nevertheless, only finitely many summands of \eqref{eqn:constraint-proj} will be non-zero and this problem is avoided.

Of course, one could also include a weight $w(\kappa)$ in \eqref{eqn:project} as it is generated in GFT \cite{Freidel:2005qe,Oriti:2006se,Oriti:2011jm}. But since we take all possible 2-complexes into account, not only those ones dual to a simplicial triangulation, the relation to GFT cannot be made precise and we would need to introduce $w(\kappa)$ in an ad hoc fashion. At the end of section \ref{ssec:ttproj} we will discuss this issue in more detail. For the time being, we  choose the easiest option and assume $w(\kappa)=1$ for all $\kappa$. But note that the conclusion of the present article is unaffected by any weight function that satisfies the natural gluing condition $w(\kappa_1\sharp \kappa_2)=w(\kappa_1)\; w(\kappa_2)$.
\\[5pt]
To define $\eta$ in equation \eqref{eqn:project} precisely requires more work than just evaluating the amplitudes of \eqref{eqn:BHKKL1}. Apart from the fact that the states in $\hilbert_{0}$ must be lifted to $\hilbert^{KKL}$ so as to match the states induced on $\partial\kappa$ we postulate some reasonable properties the rigging map should obey. 
\begin{enumerate}
\item The map $\eta_{\gamma_1,\gamma_2}$ formally decomposes into a sum of operators $\hat{Z}^{\dagger}[\kappa]:\hilbert_{0,\gamma_1}\to\hilbert_{0,\gamma_2}$ whose matrix elements are proportional to the spin foam amplitude \eqref{eqn:BHKKL1}.
\item The operator $\hat{Z}^{\dagger}[\kappa_0]$ based on the trivial evolution (see \defref{def:trivevol}) defines an isometry$P_{\gamma}$  between $\hilbert_{0,\gamma}$ and $\hilbert^{KKL}_{\gamma}$ such that $\hat{Z}^{\dagger}[\kappa_0]=P_{\gamma}^{\dagger}P_{\gamma}$.
\item $\hat{Z}^{\dagger}[\kappa]$ respects the equivalence relations of spin networks.
\item Splitting of internal edges and faces should leave $\hat{Z}^{\dagger}[\kappa]$ invariant.
\item Let $\kappa_1$ and $\kappa_2$ be 2-complexes such that $\kappa_1\cap\kappa_2=\partial\kappa_1\cap\partial\kappa_2=\tilde{\gamma}$ then 
\beq
\label{eqn:glue}
\sum_{\tilde{T}\in\hilbert_{0,\tilde{\gamma}}}\scal{T_{i}|\hat{Z}^{\dagger}[\kappa_1]|\tilde{T}_{\tilde{\gamma}}}\scal{\tilde{T}_{\tilde{\gamma}}|\hat{Z}^{\dagger}[\kappa_2]|T_{f}}=\scal{T_{i}|\hat{Z}^{\dagger}[\kappa_2\sharp\kappa_1]|T_{f}}
\eeq
where $\kappa_2\sharp\kappa_1$ is the 2-complex obtained by gluing along the common graph $\tilde{\gamma}$ and $T_{i},T_{f}$ are spin network functions living on the boundary graph of $\kappa_2\sharp\kappa_1$.
\end{enumerate}
The first point captures the details of the above argument and the second point is motivated by the heuristic interpretation of foams being two dimensional Feynman graphs. From this point 
of view every internal vertex corresponds to the action of $\hat{H}$ and consequently $\kappa_0$ represents the zeroth order in $\exp{N\hat{H}}\approx \mathbf{1}+\cdots$. Thereafter $\scal{T|\hat{Z}^{\dagger}[\kappa_0]|T'}$ should represent the kinematical inner product which imposes the second property.
  
The third requirement is necessary in order to construct a self-consistent operator. Two spin nets are equivalent if they can be obtained by the following manipulations
\begin{list}{(\alph{enumi})}{\usecounter{enumi}}
\item adding new links labeled by the trivial representation
\item creating a new node labeled by the trivial intertwiner by splitting a link.
\end{list}
Since every face touching $\partial\kappa$ contributes a link in the boundary graph $\hat{Z}^{\dagger}[\kappa]$ should be invariant if we add or remove a face labeled by the trivial representation. If we split link in $\partial\kappa$ then also the adjacent face must be subdivided by a new internal edge. Therefore, the spin foam amplitude should be invariant under such splittings and also under the trivial subdivision of internal edges because it does not play a role whether the new edge $e$ splits another internal edge or joins an internal vertex. Furthermore, the model should be independent of the way a semianalytic graph is approximated (see below).  

The last condition reflects the gluing property of spin foam amplitudes $Z^{\dagger}[\kappa_1]Z^{\dagger}[\kappa_2]=Z^{\dagger}[\kappa_2\sharp\kappa_1]$ used in most models in order to fix the boundary amplitude. Furthermore, if \eqref{eqn:project} defines an improper projector\footnote{Generically 
$\eta$ will have no square, that is, constant $K$ will actually be infinite.} then $\eta$ should satisfy $\eta[\eta[T]]=K\eta[T]$ for a constant $K>0$.
\subsection{Abstract versus embedded setting}
\label{ssec:absvembed}
In the last section we discussed the general idea how to combine the covariant and the canonical approach. Even though the states induced on the boundary of a spin foam are formally equivalent to spin net states on the same graph, this does not prove equivalence of both theories. Due to the structural difference of both models, it is, for example, not clear that observables agree. Also the construction of the maps \eqref{eqn:feyn1} and \eqref{eqn:feyn2} is only formal since the correct measure of this path integral is unknown. In this section we will argue that a strict derivation of \eqref{eqn:feyn2} from BF-theory is not possible if one insists that $\kappa$ is dual to a triangulation of space-time. Essentially, this is caused by the different topological and geometrical meaning of graphs in the canonical and covariant model and will be discussed in the first subsection. In the second part we will analyze the impact of a rigging map as postulated in \eqref{eqn:project} on the canonical theory focussing on the role of diffeomorphisms.                     
\subsubsection{Triangulations, foams and graphs}
\label{sec:triangulation}

The first obvious obstacle when trying to combine covariant and canonical theory is that the canonical model is based on semianalytic paths instead of p.l. 1-cells. Nevertheless, one can always approximate a semianalytic path by p.l. ones. That is another important reason why we ask for invariance under trivial face splittings so that the `transition function' is independent of the approximation. Of course, it is not really possible to approximate spin nets defined on semianalytic graphs by spin nets on p.l. graphs since the Ashtekar-Lewandowski-measure is maximally clustering in the sense that any two spin nets are orthogonal as soon as they are defined on slightly different graphs. 
Thus one should either modify canonical LQG to accommodate p.l. structures or one eventually interprets the boundary graphs of spin foam models in the semianalytic category.

Moreover, the links of a spin net in $\hilbert_0$ can be knotted so that the `time-evolution' $\gamma\times[0,\epsilon]$ could lead to complicated self-intersections of faces. On the other hand, the Hamiltonian acts locally on the nodes and the physical impact of knotting is barely understood anyway so that we will restrict to unknotted links \footnote{The knotting class of the node can be still non-trivial. See the next section for more details}. 
  
Another problem that occurs when trying to match p.l. and s.a. graphs is the following: A p.l. cell is defined as the convex hull of its vertices and therefore completely determined by them. Yet, there are infinitely many possibilities how to glue a s.a. link between two nodes and thus several links can be glued between the same nodes. This is not possible for p.l. links. 

To summarize the previous argument: P.l. complexes are to restrictive for the purpose of defining a rigging map but we also do not want to give up all the nice properties worked out before. A way out of this dilemma is to use ball complexes as in section \ref{sec:trian} or a more combinatorial definition:
\begin{definition}\mbox{}
\label{def:abstract}
\begin{itemize}
\item An \underline{abstract} n-cell $c$ is an n-ball whose frontier is the finite union of lower dimensional balls (faces).
\item An \underline{abstract} n-complex $\cal C$ is a finite collection of $m$-balls, $m\leq n$, containing at least one $n$-cell. If $A\in{\cal C}$ then also all faces of $A$ are in $\cal C$. If $A,B\in{\cal C}$ then either $A\cap B=\emptyset$ or $A\cap B$ is a common face of $A$ and $B$.
\end{itemize}
\end{definition} 
All definitions and theorems of section \ref{sec:2-compl} can be immediately generalized by replacing `p.l.' through `abstract'. Indeed, we only give up convexity and linearity and since balls are path connected there exists subdivisions $\cal C'$ of $\cal C$ that are combinatorially equivalent to a p.l. complexes (compare with theorem \ref{theorem1}). 

One might wonder why we are putting so much effort in adapting foams to graphs and do not simply restrict the class of graphs used in the canonical theory to those which are dual to a triangulation of the hypersurface $\Sigma$. A technical reason for this is that the Hamiltonian constraint, as defined in \cite{Thiemann96a}, creates trivalent nodes that cannot be dual to a 3-dimensional polyhedron. Obviously, this can be avoided by using a different regularization, e.g. \cite{EmanueleReg}, but the only known parametrization, which leads to a non-anomalous Hamiltonian, is the original one \cite{Thiemann96a}.

Despite this more technical arguments, there are also severe reasons why the class of graphs should not be restricted in the canonical model that are deeply rooted in the different treating of geometry and topology in both theories.
When quantizing the canonical theory we start with the configuration space $\mathcal{A}$ that is the space 
of connections on a principal bundle $P(\Sigma,G) $ with base manifold $\Sigma$ and gauge group $G$. This space can be embedded  into the set of homomorphisms $\mathrm{Hom}(\mathcal{P},G)$ from the groupoid of paths $\mathcal{P}$ on $\Sigma$ to $G$ \cite{lqgcan2}. In fact $\mathrm{Hom}(\mathcal{P},G)$ defines the space of generalized connections $\overline{\mathcal{A}}$ which is used to construct the gauge variant kinematical Hilbert space $\hilbert_{\kin}=\sqi{\overline{\mathcal{A}},\mu_{AL}}$. This space is spanned by spin net functions on \emph{all} possible graphs build by glueing elements in $\mathcal{P}$, not only those ones which are dual to a triangulation. Moreover, the holonomy flux algebra does not preserve the underlying graph of a spin net and, therefore, also the span of spin net functions based on dual graphs is not preserved.

Given \emph{all} holonomies along \emph{all} paths in $\Sigma$ one can reconstruct the connection. The set $\mathrm{Hom}(\mathcal{P},G)$ also captures topological information since it can be related to the fundamental group of $\Sigma$ (see e.g. \cite{Analysis:CQ}). Again, this information cannot be captured by a single graph $\gamma$, i.e a finite collection of paths. 

The situation changes fundamentally when $\gamma$ is dual to a non-degenerate triangulation $\Delta$ of $\Sigma$. As proven by Whitehead \cite{whitehead:1940}, $\Delta$ is uniquely determined up to p.l. homeomorphisms. Astonishingly, it can be shown that in three dimension also every p.l. and every topological manifold have a unique differentiable structure up to diffeomorphisms. In other words in three dimensions the topological (TOP), piecewise linear (PL) and smooth (DIFF) category are equivalent. The equivalence of PL and DIFF was proven independently by Smale \cite{Smale1959}, Munkres \cite{Munkres1960} and Hirsch \cite{Hirsch1963} and the equivalence of TOP and DIFF by Moise \cite{Moise1977}. A triangulation also allows to partly reconstruct a metric by defining edge length and angels at each vertex of $\Delta$. 

In this sense, a graph $\gamma_{\Delta}$ dual to a triangulation captures much more topological and geometric information than an arbitrary graph. For example, closed graphs can be only dual to the triangulations of a closed (compact, without boundary) manifold. But to ensure gauge invariance the underlying graph of a spin net must be closed. By a theorem of Milnor \cite{Milnor1962} any compact 3-dimensional manifold $\Sigma$ can be uniquely decomposed into a finite number of \emph{prime manifolds} $\Sigma_i$. A compact 3-manifold is said to be prime if it is either $S^2\times S^1$, a non trivial bundle over $S^1$ with fibers homeomorphic to $S^2$ (similar to the Hopf bundle) or every 2-sphere bounds a 3-ball in $\Sigma_i$; two prime manifolds are glued together by removing a 3-ball and identifying the newly generated boundaries. Thus any graph dual to triangulation of a compact subregion in $\Sigma$ must be either represent a prime factor of $\Sigma$, or a product thereof or must be dual to a discretized 3-ball (tetrahedron). Yet, a graph dual to a 3-ball is certainly not closed and thus the boundary graph of the associated foam would contain edges that are not embedded in $\Sigma$. This shows that any graph dual to a triangulation of a region in a spatial hypersurface and bordering a foam must be related to a prime factor\footnote{A method to analyze the relation between combinatorial graphs and triangulations is crystallization and leads to colored graphs as they are used in colored GFT \cite{Gurau:2009tw}}. 

Apart from that, taking the idea of the rigging map seriously, the spin foam 
`projector'  should be based on $\kappa$ which is dual to a discretization of the foliation  
$\mathbb{R}\times \Sigma$. However, the resulting dual foam $\kappa$ is not obviously 
a discrete foliation into {\it the same} discretized leaves. 
All of these difficulties suggest to work with arbitrary abstract foams that do not 
originate as the dual of an embedded discretization of $M$.  
\subsubsection{Semianalytic, piecewise analytic and abstract}
\label{ssec:spapl}
In the following, we will discuss how one can realize \eqref{eqn:project} by using abstract complexes in the sense of definition \ref{def:abstract} while graphs are still embedded in $\Sigma$.

Due to technical reasons, one prefers to work with semianalytic diffeomorphisms $\mathrm{Diff}_{sa}(\Sigma)$ which are analytic except on some semianalytic submanifolds where they are of class $C^{(n)}$, $n>0$. It was also suggested in \cite{fairbairn-2004-45} to use instead piecewise analytic diffeomorphism, i.e. functions which are almost everywhere analytic except for a finite set of points where they are continuous but not necessarily differentiable.

A diffeomorphism $\phi$ acts on spin net functions by
\beq
\hat{U}(\phi)T_{\gamma,j_{\l},\iota_{\n}}(\{g_{\l}\})=T_{\phi(\gamma),j'_{\phi(\l)},\iota'_{\phi(\n)}}(\{g_{\phi(\l)}\})
\eeq
leaving the labeling of links invariant, that is, $j'_{\phi(\l)}=j_{\l}$. Of course, $\phi$ changes the group element $g_{\l}$ since now the holonomy is taken along $\phi(\l)$ and can also modify the intertwiners by altering the ordering of nodes at $\n$. 

In the subsequent discussion, two graphs are said to be p.a.- or s.a.- equivalent if there exist a p.a./s.a. diffeomorphism $\phi$ such that $\phi(\gamma)=\gamma'$. In \cite{fairbairn-2004-45}, the authors showed that two graphs are p.a.-equivalent iff one can find a one-parameter family (ambient isotopy) of homeomorphism $h_t:\Sigma\to\Sigma$, $t\in[0,1]$ with $h_0(\gamma)=\gamma$ and $h_1(\gamma)=\gamma'$. This kind of equivalence classes is called a singular knot. These knotting classes are countable. Consequently, the $\mathrm{Diff}_{pa}$-invariant Hilbert space $\hilbert_{diff,pa}$ must be separable.

In contrast to that, the space $\hilbert_{diff,sa}$ is non-separable. Since $\phi\in\mathrm{Diff}_{sa}$ is at least $C^{(1)}$ at every point $p\in\Sigma$ the differential $D\phi(p)$ of $\phi$ at $p$ is a linear transformation in the tangent space $T_p\Sigma$. A dilatation in $T_p\Sigma$ only effects the parametrization of the integral curves $c(t)$ with $\dot{c}(0)\in T_p\Sigma$ and so we may assume w.l.o.g that $D\phi(p)\in\group{SL}(3)$ which is eight dimensional. Now an $n$-tuple of lines through $p$ in 3d is determined by $m\geq10$ angles for $n\geq 5$ and therefore the equivalence class of an $n$-valent node is labeled by $m-8$-dimensional continuous parameter, so-called moduli $\theta$. However, it can be shown that $\hilbert_{diff,sa}$ is almost the direct integral over spaces with fixed moduli $\theta$(see \cite{lqgcan2}).

As there exist no infinitesimal operator on $\mathcal{H}_{0}$ representing the classical diffeomorphism constraint, the $\mathrm{Diff}_{pa/sa}$ invariance is imposed by a rigging map 
\begin{gather}
\label{eqn:rigdif}
\begin{gathered}
\eta_D(T_s):=\eta_{[s]_D}L_{[s]_D}\\
L_{[s]_D}:=\sum_{s'\in[s]_D}\scal{T_{s'},\;\cdot\;}\in\mathcal{D}^{\ast}_{0}
\end{gathered}
\end{gather}
where, modulo technicalities \cite{lqgcan3}, $[s]_D$ is the orbit of $s=(\gamma,j,\iota)$ under diffeomorphism and the positive number $\eta_{[s]_D}$ can be fixed such that the scalar product imposed by the rigging map \eqref{eqn:rigdif} is well-defined. More in detail, $\eta_{[s]_D}$ is equal to the product of a positive number $\eta_{[\gamma(s)]_D}$ that depends only on the orbit of the graph $\gamma(s)$ underlying $s$ but so far cannot be fixed and a factor $\eta'_{[\gamma(s)]_D,[s]_D}$  
that is chosen such that the averaging in \eqref{eqn:rigdif} respects the graph symmetries of $s$ and the scalar product is sesqui-linear.  

We can proceed similarly with \eqref{eqn:project}: In the following, two embedded spin nets belong to the same abstract equivalence class $[s]_A$ if they are embeddings of the same abstract spin net $s_A$. Now, replace \eqref{eqn:rigdif} by 
\begin{gather}
\begin{gathered}
\label{eqn:riggdifH}
\eta(T_s):=\sum_{[s']_A\in N_A} \eta_{[s]_A,[s']_A} L_{[s']_A}\\
\eta_{[s]_A,[s']_A} =\sum_{\kappa_A:s_A\to s'_A} Z^{\dagger}[\kappa]
\end{gathered}
\end{gather}
with $N_A$ denoting the set of equivalence classes and 
\beq
\label{eqn:L_A}
L_{[s']_A}=\eta_{[s']_A}\sum_{\hat{s}\in [s']_A}\scal{T_{\hat{s}},\;\cdot\;}
\eeq
where $\eta_{[s']_A}$ is a positive number with similar properties as $\eta_{[s]_D}$.
This definition is advantageous regarding two aspects: First it also implements diff-invariance since $[s]_D\subset [s]_A$ and second it allows us to directly work in the abstract setting. Yet, the equivalence class $[s]_A$ is huge and \eqref{eqn:riggdifH} does not only `wash out' the embedding information but also all information about moduli or knotting classes. On the other hand, it was shown in \cite{Bahr:2011kn} that at least in the semianalytic theory the same happens when working with embedded foams. 

One might also be concerned that \eqref{eqn:riggdifH} is in conflict with the quantization of the Hamiltonian constraint $\hat{H}$. That this is not the case can be seen as follows: As many operators in the canonical setting, $H$ must be regularized such that the regularized operator $\hat{H}^{\epsilon}$ converges to $\hat{H}$ when the parameter $\epsilon$ tends to zero. This limit is taken in a weak $^\ast$ operator topology on $\mathcal{D}^{\ast}_{diff}\times\mathcal{D}$, that is $|L(\hat{H}^{\epsilon}f)-L(\hat{H}f)|<\delta$ for all $\epsilon<\delta(\epsilon)$ and $L\in\mathcal{D}^{\ast}_{diff}$, $f\in\mathcal{D}$. The limit point $\hat{H}$, which in this case can be taken as 
$\hat{H}=\hat{H}^{\epsilon_0}$ for an arbitrary but fixed choice $\epsilon_0$ of the 
regulator $\epsilon$,  is an operator on 
the kinematical Hilbert space and not on its dual. Now $\eta$ includes an averaging over spatial diffeomorphisms and thus the value of $\epsilon_0$ is irrelevant when computing the dual
action of $\hat{H}$ on the image of $\eta$.  
Notice that we do not need to define $\hat{H}$ as an operator 
that maps  the image of $\eta$ to itself\footnote{It is clear that it does not even preserve the image of 
$\eta_D$ as $\hat{H}$ is only spatially diffeomorphism covariant but not invariant.}.
We are only interested in whether its dual action annihilates the image of $\eta$.  
Finally, we must pay attention to the fact that a diffeomorphism can change the order 
of the labeling of an abstract boundary graph resulting from an ordered foam. However, the ordering of foams is just needed for the labeling by intertwiners over which one is summing in \eqref{eqn:riggdifH}. Hence, no problem appears from the diffeomorphism averaging.
\subsection{Operator Foam}
\label{ssec:operfoam}
We will now construct explicitly a spin foam operator which displays all the desired properties and is based on abstract complexes and the map \eqref{eqn:riggdifH}. Since the Hamiltonian operator does not change the moduli or knotting class one can also use an abstract graphical calculus (see \cite{Gaul:2000ba,Alesci:2011ia}) on the canonical side once these classes are fixed. Therefore, we will directly work with abstract graphs and will leave the label $A$ away. 
\begin{definition}
Let $(\kappa,\hilbert_f,Q_e)$ be an abstract spin foam whose faces are labeled by EPRL-triples $(j,j^+,j^-)$ and whose edges carry an operator $Q_e:\zeta^{KKL}(\hilbert_{e,\inv})\to\zeta^{KKL}(\hilbert_{e,\inv})$ defined in \eqref{eqn:diag}. Suppose $\partial\kappa$ is the disjoint union of an initial graph $\gamma_i$ and final graph $\gamma_f $ then
\begin{gather}
\begin{gathered}
\label{eqn:A}
\hat{Z}[\kappa]:\hilbert_{0,\gamma_i}\to\hilbert_{0,\gamma_f}\\
\scal{T_{s_f}|\hat{Z}[\kappa]|T_{s_i}}_{0}
:=\scal{P_{\gamma_f}T_{s_f}|Z[\kappa]|P_{\gamma_i}T_{s_i}}_{KKL}~.
\end{gathered}
\end{gather}
Here, $Z[\kappa]$ is the amplitude \eqref{eqn:BHKKL1} with an additional face weight $\mathcal{A}_f=d_{j_f^+}d_{j_f^-}$  and $P_{\gamma}:\hilbert_{0,\gamma}\to\hilbert^{KKL}_{\gamma}$ is the isometry \eqref{eqn:nnProj} with normalization constant $N_e=(d_{j_e}\;d_{j_e^+}\;d_{j_e^-})^{3/2}$.\footnote{In the embedded setting \eqref{eqn:A} is replaced by
$$
\scal{T_{s_f}|\hat{Z}[\kappa_A]|T_{s_i}}:=\sum_{s,s'}\sum_{\substack{s\in[s]_A\\s'\in[s']_A}}\scal{PT_{s_f}|T^{KKL}_{s_i}}\scal{T^{KKL}_{s^A_f}|\hat{Z}[\kappa_A]|T^{KKL}_{s^A_i}}
\scal{T^{KKL}_{s'}|PT_{s_f}}~.
$$}
\end{definition}
With the results of \cite{Kaminski:2009fm}, it is straightforward to show that the adjoint $\hat{Z}^{\dagger}[\kappa]$ is equal to $\hat{Z}[\kappa^{\ast}]$ and that it displays all the desired properties listed in section \ref{ssec:road}:
\para{a. Subdivision of edges}
If $e\in\kappa_{int}$ is an internal edge which is subdivided into $e_1,e_2$ by a vertex $v_0=t(e_1)=s(e_2)$ then
\beq
\mathcal{A}_{v_0}(\iota_{t(e_1)}^{\pm},\iota_{s(e_2)}^{\pm})=\left(\iota_{t(e_1)}^+\otimes \iota_{t(e_1)}^-|\iota_{s(e_2)}^+\otimes \iota_{s(e_2)}^-\right)=\delta^{\,\iota_{t(e_1)}^+}_{\;\;\iota_{s(e_2)}^+} \delta^{\,\iota_{t(e_1)}^-}_{\;\;\iota_{s(e_2)}^-}
\eeq
and therefore
\beq
\begin{split}
&\sum_{\iota_{t(e_1)}^{\pm}}\sum_{\iota_{s(e_2)}^{\pm}}\mathcal{A}_{e_1}(\iota_{s(e_1)}^{\pm},\iota_{t(e_1)}^{\pm}) \mathcal{A}_{v_0}(\iota_{t(e_1)}^{\pm},\iota_{s(e_2)}^{\pm})\mathcal{A}_{e_2}(\iota_{s(t_2)}^{\pm},\iota_{t(e_2)}^{\pm})\\
&=\sum_{\iota_{v_0}^{\pm}}\sum_{\iota_{v_0},\iota'_{v_0}} \overline{f^{\;\iota_{s(e_1)}}_{\iota_{s(e_1)}^+\iota_{s(e_1)}^-}}\;\;(Q_{e_1})^{\iota_{s(e_1)}}_{\iota_{v_0}}\;\;\underbrace{f^{\iota_{v_0}}_{\iota_{v_0}^+\iota_{v_0}^-}
\overline{f^{\;\iota'_{v_0}}_{\iota_{v_0}^+\iota_{v_0}^-}}}_{H^{\,\iota_{v_0}}_{\;\;\iota'_{v_0}}}\;\;(Q_{e_2})^{\iota'_{v_0}}_{\iota_{t(e_2)}}\;\;f^{\iota_{t(e_2)}}_{\iota_{t(e_2)}^+\iota_{t(e_2)}^-}\\
&= \overline{f^{\;\iota_{s(e_1)}}_{\iota_{s(e_1)}^+\iota_{s(e_1)}^-}}\;\;(Q_{e})^{\iota_{s(e_1)}}_{\iota_{t(e_2)}}\;\;f^{\iota_{t(e_2)}}_{\iota_{t(e_2)}^+\iota_{t(e_2)}^-}=\mathcal{A}_{e}(\iota_{s(e)}^{\pm},\iota_{t(e)}^{\pm})~.
\end{split}
\eeq
where $\iota$ and $\iota^+\otimes \iota^-$ are normalized $\SU(2)$ respectively $\Spin(4)$ invariants assigned to $e_0$. This proves that $\hat{Z}[\kappa]$ is invariant under a subdivision of edges.\\ 
\para{b. Subdivision of faces}
A colored subdivision of a face $f$ can be obtained by joining two vertices in $\dot{f}$ by an edge $e_0$ lying in the interior of $f$. The sub-faces $f_1,f_2$ adjacent to $e_0$ inherit the coloring and orientation of $f$ so that the $\SU(2)$ intertwiner space attached to $e_0$ is $\Inv\left(\hilbert_{j_{f_2}}^{\ast}\otimes \hilbert_{j_{f_1}}\right)$. Since this space is one-dimensional the edge amplitude $\mathcal{A}_{e_0}$ reduces to the identity. However, the edge $e_0$ also gives rise to a splitting of the vertex boundary graphs at its source $s$ and target $t$ by splitting the link $e(f)\in\gamma_{s/t}$ associated to $f$. Hence, we have to insert the unique two-valent intertwiners
\beq
\left(\epsilon^{\,m_{1}}_{\;\;m_{2}}\right)^{j_1}_{j_2}=\frac{1}{\sqrt{d_{j_1}}}\left(\delta^{\,m_{1}}_{\;\;m_{2}}\right)\delta^{j_{1}}_{j_{2}}
\eeq
into the vertex amplitude
\beq
\label{eqn:facef}
\begin{split}
\mathcal{A}^{\pm}_{s/t}&=
\tr\left(\cdots \;\;(\iota^{\pm}_{e_1})^{\,\dots\dots}_{\;\;\dots m_{f_1}^{\pm}\dots}\;\;\epsilon^{m_{f_1}^{\pm}}_{\;\;\;m_{f_2}^{\pm}}\;\;
(\iota^{\pm}_{e_2})_{\;\;\dots\dots}^{\,\dots m^{\pm}_{f_2}\dots}\;\;\cdots\right)\\
&=\frac{1}{\sqrt{d_{j_f^{\pm}}}}\;\tr\left(\cdots\;\; (\iota^{\pm}_{e_1})^{\,\dots\dots}_{\;\;\dots m^{\pm}_{f}\dots}\;(\iota^{\pm}_{e_2})_{\;\;\dots\dots}^{\,\dots m^{\pm}_{f}\dots}\;\;\cdots\right)~.
\end{split}
\eeq
Here, $e_1\in\dot{f}$ and $e_2\in\dot{f} $ are the unique edges meeting at $s/t$  that also bound $f_1$ and $f_2$ respectively. Let $\kappa'$ be the complex obtained from $\kappa$ by such a subdivision then $\hat{Z}[\kappa']=d_{j^+_f}\;d_{j_f^+}\hat{Z}[\kappa]$ due to \eqref{eqn:facef}. To restore invariance under face splitting one needs to introduce a face amplitude $\mathcal{A}_f$ for which 
\beq
\mathcal{A}_{f_1}\mathcal{A}_{f_2}\frac{1}{d_{j^+_f}\;d_{j_f^+}}= \mathcal{A}_f\;.
\eeq
\para{c. Gluing and resolution of the identity}
Suppose $\kappa_1$ and $\kappa_2$ are foams whose boundaries $\partial\kappa_{1/2}=\gamma^i_{1/2}\cup \gamma^f_{1/2}$ decompose each into one final ($f$) and one initial ($i$) graph with $\gamma_1^f\cong\gamma^i_2\cong\gamma$. Recall, that the states $T^{KKL}$ induced on the boundary graph are not normalized and all internal edges adjacent to a final graph are incoming thus 
 \beq
 \begin{split}
& \left.\dots|\hat{Z}[\kappa_1]|T_{\gamma,j_{\l},\iota_{\n}}\right\rangle\\
& =\sum_{j'_f,\iota'_e}\dots \left(\prod_{\l\in\gamma^{(1)}}\;\mathcal{A}_{f_{\l}}\right)\prod_{\n\in\gamma^{(0)}}\overline{f^{\;\iota_{s(e_{\n})}}_{\iota_{s(e_{\n})}^+\iota_{s(e_{\n})}^-}}\;\;(Q_{e_{\n}})^{\iota'_{s(e_{\n})}}_{\iota'_{\n}}\scal{T^{KKL}_{\gamma,j'_{\l},\iota'_{\n}}|PT_{\gamma,j_{\l},\iota_{\n}}}_{KKL}\\
&=\dots\sum_{j'_f,\iota'_e}\dots\prod_{\l}\mathcal{A}_{f_{\l}}\frac{\delta_{j'_{\l},j_{\l}}}{\sqrt{d_{j^+_{\l}}d_{j^-_{\l}}}}\prod_{\n}\overline{f^{\;\iota_{s(e_{\n})}}_{\iota_{s(e_{\n})}^+\iota_{s(e_{\n})}^-}}\;\;(Q_{e_{\n}})^{\iota'_{s(e_{\n})}}_{\iota'_{\n}} \delta^{\,\iota'_{\n}}_{\;\;\iota_{\n}}
 \end{split}
\eeq
where $f_{\l}$ is the unique face containing $\l\in\gamma^{(1)}$ and $e_{\n}$ the edge adjacent to $\n\in\gamma^{(0)}$. When $\kappa_1$ and $\kappa_2$ are glued (see \eqref{eqn:glue}) along the spin net $s=(\gamma,j_{\l},\iota_{\n})$ this implies 
\beq
\sum_{s}
 \scal{T_{s^f_2}|\hat{Z}[\kappa_2]|T_{s}}
 \scal{T_{s}|\hat{Z}[\kappa_1]|T_{s^i_1}}
 =\prod_{\l\in\gamma}\frac{\mathcal{A}_{f_{\l}}}{d_{j^+_{\l}}d_{j^-_{\l}}}\;
  \scal{T_{s^f_2}|\hat{Z}[\kappa_2\sharp\kappa_1]|T_{s^i_1}}~.
\eeq
The foam $\kappa_2\sharp\kappa_1$ is the complex which arises when $\gamma_1$ and $\gamma_2$ are identified and then removed. More precisely, the faces $f^1_{\l}\in\kappa_1$ and $f^2_{\l}\in\kappa_2$ are combined to one face in $\kappa_2\sharp\kappa_1$ which produces an excess face amplitude. Concluding,  if the face weight is fixed to $\mathcal{A}_f=d_{j^+_f}d_{j^-_f}$ then the amplitude is invariant under face splittings and obeys a gluing property. By an analogue computation one can also show that 
\begin{gather}
\begin{gathered}
\scal{T_{\gamma,j_{\l},\iota_{\n}}|\hat{Z}[\kappa_0]|T_{\gamma,j'_{\l},\iota'_{\n}}}
=\scal{(PT)_{\gamma,j_{\l},\iota_{\n}}|(PT)_{\gamma,j'_{\l},\iota'_{\n}}}_{KKL}\\
=\scal{T_{\gamma,j_{\l},\iota_{\n}}|T_{\gamma,j'_{\l},\iota'_{\n}}}_{0}
\end{gathered}
\end{gather} 
for the trivial evolution $\kappa_0$.
\para{d. Equivalence classes}
Subdivisions and adding faces/edges labeled by the trivial representation define equivalence relations on foams/spin nets. Since they leave the amplitude/spin net function invariant one should only sum over equivalence classes in \eqref{eqn:project} and \eqref{eqn:riggdifH}. If not stated otherwise it will be always assumed that a foam/graph is minimal in the following sense:
\begin{definition}
An abstract foam/graph is called \underline{minimal} iff it cannot be obtained from another foam/graph by  subdivisions.
\end{definition}
Note, whether an abstract foam/graph is minimal does not depend on the coloring. Given a generic foam a minimal one can be obtained by successively removing 2-valent internal edges and 2-valent vertices (internal as well as external). However, not all 2-valent edges can be removed since it might happen that the removal of an edge generates a self-intersecting surface which is not homeomorphic to a 2-ball (see \figref{fig:cone} for an example). This also shows that the minimal representatives of the equivalence classes are not unique. But since the model is independent of this choice we can safely fix a minimal representative for each equivalence class in the following. Furthermore, trivial representations will be excluded as before.
\begin{figure}[t]
 \begin{center}
	\subfloat{\includegraphics{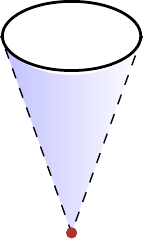}}
	\hspace{2.5 cm}
	\subfloat{\includegraphics{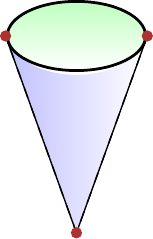}}
 \end{center}
\caption{A cone does not define an abstract complex since the face is not homeomorphic to a 2-ball. However, if it is split by two edges, as on the right hand side, it is a minimal abstract complex.}
\label{fig:cone}
\end{figure}
\section{Does the spin foam projector provide a rigging map onto $\hilbert_{\phys}$?}
\label{sec:main}
Apart from technical issues a first test on $\eta$ is to check whether the constraints are really annihilated. By construction the gauss and diffeomorphism constraint are obviously satisfied, but the Hamiltonian constraint is not. To prove this we will first develop a method to split foams into basic building blocks. The properties of the so-defined rigging map will be discussed in the sequel.
\subsection{Time ordering}
\label{ssec:time}
The rigging map $\eta$ is naturally distinguishing between in and out-going spin nets which induces an order of the internal vertices:
\begin{definition}\mbox{}\\
\label{def:time}
Suppose $v$ is an internal vertex of an abstract minimal foam $\kappa$ with non empty boundary graph such that there exists at least one edge $e$ joining $v$ and a node $\n$ in an initial graph, then $v$ is called a vertex of \underline{first generation}. Inductively a vertex of $n$th generation has at least one connection to a vertex of $(n-1)$th generation but no connections to vertices of lower generation.

 If $\partial\kappa$ only contains final graphs then we proceed backwards calling internal vertices, which are connected to $\partial\kappa$ by at least one internal edge, of generation $-1$ and so forth. 
 
If $\partial\kappa=\emptyset$ then all internal vertices are of first generation.  
\end{definition}
By definition, every internal vertex in a connected foam can be traced back along internal edges to the boundary graph and the shortest path to an initial graph, involving the least number of edges determines the generation. Suppose $\kappa$ contains a vertex $v$ which cannot be traced back to an initial part of the boundary graph, then either $\partial\kappa$ is empty or $v$ is only connected to a final graph. In the first case all vertices are of first generation while in the second case all internal vertices linked to $v$ are also detached from the initial graph. Since boundary graphs are closed and boundary nodes are only adjacent to one internal edge this is only possible if $v$ is part of a sub-foam which is completely disconnected and whose boundary graph only contains final graphs. Yet, the generation is independently defined for every completely disconnected sub-foam and therefore all internal vertices can be uniquely classified.
\begin{figure}[t]
 \begin{center}
	\subfloat{\includegraphics{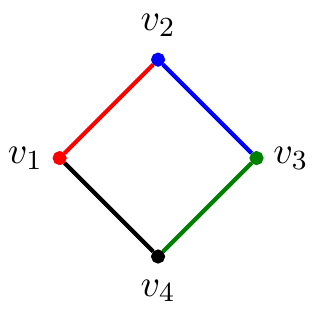}}
	\hspace{2 cm}
	\subfloat{\raisebox{1.2\height}{
\begin{tabular}{c||c|c|c|c}
 $v_4/v_3$		& $n$ 	& $n+1$ 	&$n+2$	& $\partial\kappa$\\
\hline\hline
$n-1$ 	&B 		& N.A.	&N.A.	&G\\
 \hline
$n$ 		&B		&G		&N.A.	&G\\
  \hline
$n+1$ 	&B/G/S 	&S		&S		&S\\
  \hline
$\partial\kappa$ 	&B/G/S 	&S		&S		&S\\
\end{tabular}
}}
 \end{center}
\caption{Suppose the red edge is an element of $\mathcal{E}_{n,n+1}$ and $v_1\in\mathcal{V}_n,v_2\in\mathcal{V}_{n+1}$. Then due to \lemref{lem:edges} the vertices $v_3$ resp. $v_4$ can be either of generation $n,n+1,n+2$ resp. $n-1,n,n+1$ or elements of an final spin net. Again by \lemref{lem:edges} if $v_3\in\mathcal{V}_{n+2}$ then $v_4$ must be of generation $n+1$ and the black edge is in $\mathcal{E}_{n,n+1}$. In the table on the right side we displayed all possible combinations for $n>1$ where  N.A.=not allowed, B=blue, G=green, S=black. }
\label{fig:faceth}
\end{figure}
\begin{lem}
\label{lem:edges}
Let $\mathcal{V}_n(\kappa)$ be the set of vertices of $n$th generation in $\kappa$ and suppose
$e\in\kappa_{int}$ is adjacent to $v\in\mathcal{V}_n(\kappa)$ then $v'\not v\in\dot{e}$ is either of generation $n-1$,$n$ or $n+1$ or $v'\in\partial\kappa$.
\end{lem}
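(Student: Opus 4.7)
The plan is to proceed by direct contradiction, exploiting the minimality that is built into Definition \ref{def:time}: a vertex of generation $n$ is connected to a vertex of generation $n-1$, but to none of strictly lower generation. So writing $m$ for the generation of an internal $v'$, it suffices to rule out $|m-n|\geq 2$.

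First I would dispatch the main case, in which $\partial\kappa$ contains at least one initial graph so that the generations form a non-negative integer grading obtained by forward induction. Assume for contradiction $m\geq n+2$. By Definition \ref{def:time}, $v'$ has no internal-edge connection to any vertex of generation strictly less than $m-1$. Yet $e$ is an internal edge with $v,v'\in\dot{e}$ and $v$ has generation $n\leq m-2<m-1$, which is the desired contradiction. The symmetric case $m\leq n-2$ is proved verbatim after swapping the roles of $v$ and $v'$: $v$ has generation $n$, hence cannot be connected by an internal edge to a vertex of generation $<n-1$, yet $v'$ has generation $m\leq n-2<n-1$. The two remaining regimes permitted in Definition \ref{def:time} are trivial: if $\partial\kappa$ consists only of final graphs then generations are negative integers produced by the mirror-image backward induction, and the same minimality argument applies after flipping signs; if $\partial\kappa=\emptyset$ then every internal vertex is of generation $1$ by convention, whence $m=n=1$ and the bound $|m-n|\leq 1$ is automatic.

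The only point that needs a brief check is that a single internal edge $e$ with $v,v'\in\dot{e}$ really constitutes a ``connection'' in the sense used in Definition \ref{def:time}; but this is exactly the content of the definition, since the generation is built by tracing internal edges back toward $\partial\kappa$. I do not anticipate a genuine obstacle here: the statement is purely combinatorial and the argument uses nothing beyond the minimality clause in the definition of generation, together with the fact that $e\in\kappa_{int}^{(1)}$. No face orientations, colorings, or vertex-boundary-graph structure enter the proof.
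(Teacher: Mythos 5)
Your proof is correct and follows essentially the same route as the paper's: both arguments simply unfold the inductive clause of Definition \ref{def:time} (a vertex of generation $m$ has no internal-edge connection to vertices of generation below $m-1$) to exclude a gap of two or more generations across a single internal edge, with the boundary case $v'\in\partial\kappa$ and the empty-boundary/final-only regimes handled trivially. The only cosmetic difference is that you phrase both bounds symmetrically as contradictions, whereas the paper phrases the lower bound as ``otherwise $v$ would have lower generation'' and the upper bound via the ``lowermost connection'' dichotomy, which is the same content.
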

\begin{proof}
The vertex $v'$ cannot be of generation $m<n-1$ since otherwise $v$ would be of generation 
lower than $n$.
If $v'\notin\mathcal{V}_{n-1}$ then $e$ is either a lowermost connection of $v'$ and consequently $v'\in\mathcal{V}_{n+1}$ or $e$ is adjacent to a vertex in $\mathcal{V}_n$ or in $\partial\kappa$. Note, if $v'$ is a boundary node then it is contained in a final graph unless $n=1$ 
in which case it can also be part of an initial graph.
\end{proof}
The set of all edges adjacent to a vertex $v\in\mathcal{V}_n$ and a vertex $v'$, which is either of generation $n+1$ or a node in a final graph, will be denoted by $\mathcal{E}_{n,n+1}$. Since $\hat{Z}[\kappa]$ is independent of internal edge orientations, we may also assume that all edges in $\mathcal{E}_{n,n+1}$ are oriented such that $s(e)\in\mathcal{V}_n$.
\begin{lem}
\label{lem:faces} 
Given a face $f$ and an edge $e_f\in\mathcal{E}_{n,n+1}$ in the frontier of $f$ then there exists at least one other edge $e'_f\in\dot{f}$ that is either an element of $\mathcal{E}_{n,n+1}$ or $s(e'_f)\in\mathcal{V}_{m}$, $m\leq n$ and $t(e'_f)\in\partial\kappa$.
\end{lem}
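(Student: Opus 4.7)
The plan is to prove the statement by a parity argument on the closed cycle $\dot f$. By \lemref{lem:v2f} the frontier $\dot f$ is a closed loop of edges whose vertices form a cyclic sequence, so I will $2$-colour these vertices according to their generation relative to $n$ and exploit the elementary fact that a closed loop crosses between the two colours an even number of times.

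Concretely I will partition the vertex set of $\kappa$ into
\beq
S_{\leq n}:=\Big(\bigcup_{k\leq n}\mathcal{V}_k\Big)\cup\{\text{nodes in initial components of }\partial\kappa\}
\eeq
and its complement $S_{\geq n+1}$, which consists of all internal vertices of generation at least $n+1$ together with all nodes in final components of $\partial\kappa$. Every vertex of $\kappa$, and in particular every vertex of $\dot f$, lies in exactly one of these two classes.

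Next I will classify the edges of $\kappa$ that ``cross'' this partition. A boundary link has both endpoints in a single connected component of $\partial\kappa$ and therefore never crosses. For an internal edge crossing from $S_{\leq n}$ to $S_{\geq n+1}$, \lemref{lem:edges} together with the observation that a node in an initial graph is adjacent only to vertices in $\mathcal{V}_1\subseteq S_{\leq n}$ (which holds for $n\geq 1$) leaves only two types: (a) both endpoints are internal with generations exactly $n$ and $n+1$, so the edge lies in $\mathcal{E}_{n,n+1}$ by definition; or (b) one endpoint is in $\mathcal{V}_m$ with $m\leq n$ and the other is a node in a final component of $\partial\kappa$, which is again in $\mathcal{E}_{n,n+1}$ when $m=n$ and otherwise realises exactly the second alternative of the claim.

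The conclusion is then immediate: traversing $\dot f$ returns to the starting vertex, so the number of transitions between $S_{\leq n}$ and $S_{\geq n+1}$ around the loop must be even. Since $e_f\in\mathcal{E}_{n,n+1}\cap\dot f$ supplies one crossing edge by hypothesis, at least one further crossing edge $e_f'\in\dot f$ must exist, and by the classification above $e_f'$ automatically fulfils one of the two alternatives in the statement. I expect the only mildly delicate step to be setting up the partition correctly, in particular the requirement that initial boundary nodes cannot land on the ``high'' side; this is exactly where the ``unless $n=1$'' clause in \lemref{lem:edges}, which forbids internal vertices of generation $\geq 2$ from being adjacent to initial boundary nodes, enters the argument.
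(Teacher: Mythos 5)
Your parity framing is in spirit exactly the paper's argument: the published proof consists of the single observation that $\dot{f}$ is a closed loop, backed by the case table of \figref{fig:faceth} built from \lemref{lem:edges}, so the ``even number of colour changes around the loop'' idea is the same. However, there is a genuine gap in the step you yourself flag as delicate, namely the claim that the classification of crossing edges is exhaustive. You assert that a node of an initial graph can only be adjacent to vertices in $\mathcal{V}_1\subseteq S_{\leq n}$, citing the ``unless $n=1$'' clause of \lemref{lem:edges}. That clause constrains which \emph{boundary} nodes may neighbour an \emph{internal} vertex of generation $n$; it says nothing about the internal-edge neighbour of an initial node. A boundary node is contained in exactly one internal edge, and that edge may join it directly to a node of a \emph{final} graph: these are precisely the edges $\mathcal{E}_{\partial\kappa}$ which the paper introduces (and must split off first) in the proof of Theorem \ref{th:split}. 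Such an edge has one endpoint in $S_{\leq n}$ (initial node) and one in $S_{\geq n+1}$ (final node), so it crosses your partition, yet it is neither an element of $\mathcal{E}_{n,n+1}$ nor of the form $s(e'_f)\in\mathcal{V}_m$, $t(e'_f)\in\partial\kappa$. Consequently the parity count only guarantees a second crossing, and nothing in your argument prevents that second crossing from being an $\mathcal{E}_{\partial\kappa}$ edge, in which case the stated conclusion is not delivered.

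The gap is repairable, but it must be addressed explicitly. Either restrict to the situation in which the lemma is actually invoked --- in the proof of Theorem \ref{th:split} all edges of $\mathcal{E}_{\partial\kappa}$ are subdivided before the lemma is applied, so faces of the resulting foam contain no initial-to-final edges and your classification of crossings becomes exhaustive --- or treat the extra case by hand: if the second crossing is an edge of $\mathcal{E}_{\partial\kappa}$, follow the loop further from its initial endpoint (which, by \lemref{lem:boundary} and the bordering property, continues along a boundary link of the initial graph and re-enters the bulk through an internal edge ending on a first-generation vertex) and argue that an admissible edge of one of the two stated types must still occur before the loop closes. As written, the proposal silently excludes a configuration that the paper itself acknowledges exists, so the exhaustiveness of the crossing classification --- the crux of the parity argument --- is not established.
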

\begin{proof}
Since $\dot{f}$ is a closed loop the statement follows immediately (see \figref{fig:faceth}).
\end{proof}
\begin{theorem}
\label{th:split}
Every (finite in the sense of number of cells) connected, minimal, abstract spin foam $(\kappa,\{j_f\},\{Q_e\})$ can be uniquely split into minimal subfoams $(\kappa_i\{,j_{f_i}\},\{Q_{e_i}\})$ containing only vertices of $i$th generation with respect to the original foam such that for the colored foam holds $\kappa=\kappa_1\sharp\cdots\sharp\kappa_n$ 
where $n$ is the maximal generation of $\kappa$.
\end{theorem}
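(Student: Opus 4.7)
The plan is to construct, for each $n$ between $1$ and the maximal generation, a closed ``cutting graph'' $\tilde{\gamma}_n$ that separates the vertices of generation $\leq n$ from those of generation $>n$, and then to split $\kappa$ along these graphs iteratively. Concretely, I would first prove the single splitting lemma: given $n$, there exists a subdivision of $\kappa$ whose underlying polyhedron contains a closed graph $\tilde{\gamma}_n$ such that cutting along it produces two foams $\kappa_{\leq n}$ and $\kappa_{>n}$, glued back along $\tilde{\gamma}_n$ in the sense of \figref{fig:gluing}, whose internal vertices are exactly those of generation $\leq n$ and $>n$ in the original $\kappa$. The full theorem then follows by induction on $n$, setting $\kappa_{1}:=\kappa_{\leq 1}$ and splitting $\kappa_{>1}$ further (its vertices are of generation $\geq 2$ in $\kappa_{>1}$ and inherit the same relative generational ordering from $\kappa$).

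To construct $\tilde{\gamma}_n$, I would follow the recipe of \defref{def:vertexgr}: for every edge $e\in\mathcal{E}_{n,n+1}$ introduce a new vertex $m(e)$ in the interior of $e$, splitting $e$ into two half-edges (this is an allowed subdivision by the remarks after \defref{def:orfoam}). The nodes of $\tilde{\gamma}_n$ are exactly the $m(e)$. For every face $f\in\kappa$ whose frontier contains at least one edge of $\mathcal{E}_{n,n+1}$, \lemref{lem:faces} guarantees that the set $\dot{f}\cap\mathcal{E}_{n,n+1}$ together with edges joining $\mathcal{V}_{\leq n}$ to final boundary nodes has an even number of elements (since traversing the closed loop $\dot{f}$ one crosses the ``level between $n$ and $n+1$'' an even number of times, by \lemref{lem:v2f} and the definition of generation applied to each vertex of $\dot{f}$). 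I then subdivide $f$ by pairwise non-crossing arcs joining consecutive crossing midpoints along $\dot{f}$; these arcs become the links of $\tilde{\gamma}_n$. Checking that $\tilde{\gamma}_n$ is closed amounts to verifying that each $m(e)$ has valence equal to the number of faces of $\mathcal{V}(e)$ that cross the level, and that this number is at least two: this follows because $e$ is internal so $|\mathcal{V}^{(2)}(e)|\geq 2$ by \lemref{lem:intvert} and \lemref{lem:boundary}, and every face containing $e$ automatically crosses the level at $e$ itself.

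Once $\tilde{\gamma}_n$ is in place as a 1-subcomplex of a subdivision $\kappa'$ of $\kappa$, the two pieces $\kappa_{\leq n}$ and $\kappa_{>n}$ are defined as the closures of the complementary components of $\overline{\kappa'}\setminus\overline{\tilde{\gamma}_n}$. I would verify the foam axioms of \defref{def:foam} for each piece: every new face has $\tilde{\gamma}_n$ as part of its frontier and this contributes boundary links in the correct way, so that $\partial\kappa_{\leq n}$ is the disjoint union of the initial components of $\partial\kappa$ that survive and the graph $\tilde{\gamma}_n$ (now final for $\kappa_{\leq n}$ and initial for $\kappa_{>n}$ by the orientation conventions of \defref{def:orfoam}). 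Inheriting the coloring $\{\hilbert_f\}$ and operators $\{Q_e\}$ from $\kappa$, assigning the identity on the newly split half-edges and trivial two-valent intertwiners to the nodes $m(e)$, yields a colored subdivision in the sense of \defref{def:coloredsubdivision}. The gluing $\kappa_{>n}\sharp\kappa_{\leq n}$ along $\tilde{\gamma}_n$ then reproduces $\kappa'$, hence $\kappa$ up to colored subdivision.

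Uniqueness is essentially automatic: the generation of every internal vertex is intrinsically defined by \defref{def:time} as a shortest-path distance to an initial graph along internal edges, so the partition $V(\kappa_{int})=\bigsqcup_n\mathcal{V}_n$ is canonical. Minimality of the subfoams $\kappa_i$ is then obtained by deleting any two-valent intertwiners $m(e)$ that turn out to be removable, which does not affect the amplitude thanks to property (a) proven in \secref{ssec:operfoam}. The main obstacle I anticipate is the combinatorial verification that the arcs inside each face can be chosen non-crossing and that each resulting piece is indeed a foam in the sense of \defref{def:foam} (in particular that every new face is homeomorphic to a 2-ball, so that \lemref{lem:v2f} still applies); the worst case is a face whose frontier crosses the level many times, where one must ensure the pairing of crossings by arcs can be made consistently across all such faces so that $\tilde{\gamma}_n$ is a single well-defined 1-complex rather than something with self-intersections.
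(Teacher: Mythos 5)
Your overall strategy is essentially the paper's: build a closed splitting graph out of midpoints $m(e)$ of level-crossing edges together with arcs subdividing the faces that contain them, cut, iterate generation by generation, and get uniqueness from the intrinsic character of \defref{def:time} plus a final removal of the auxiliary two-valent structures. However, there is a genuine gap exactly at the step you flag as delicate: you never treat the edges that meet the boundary graph in \emph{both} endpoints, i.e.\ the set $\mathcal{E}_{\partial\kappa}$ of edges running directly from an initial to a final graph (and, for a cut at level $n\geq 2$, likewise the edges running from a vertex of generation $m<n$ to a final node, which you include in your parity count but on which you place no midpoint, since they are not elements of $\mathcal{E}_{n,n+1}$). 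Such edges cross your ``level'' but carry no node of $\tilde{\gamma}_n$, so the parity claim is false as stated: a face can be bounded by exactly one edge of $\mathcal{E}_{1,2}$ together with one edge of $\mathcal{E}_{\partial\kappa}$. For instance, take triangle graphs $abc$ (initial) and $a'b'c'$ (final), a four-valent internal vertex $v$ with internal edges $av$, $cv$, $va'$, $vc'$, and a direct internal edge joining $b$ and $b'$; this is a connected minimal foam, and the pentagonal face bounded by the link $ab$, the edge $bb'$, the link $a'b'$ and the edges $a'v$, $va$ contains exactly one $\mathcal{E}_{1,2}$ edge, namely $va'$. In that face your single midpoint has no partner to be joined to, and even if one drops the arc there, cutting along $\tilde{\gamma}_1$ does not separate the initial from the final side, since one can pass through this face along the uncut edge $bb'$.

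This is precisely why the paper's proof of Theorem \ref{th:split} begins by subdividing every edge of $\mathcal{E}_{\partial\kappa}$ by an auxiliary vertex: that vertex is then of first generation, the upper half of the edge becomes an element of $\mathcal{E}_{1,2}(\kappa')$, and only after this preliminary step does \lemref{lem:faces} deliver the property your pairing needs, namely that every face is bounded by either none or at least two edges of $\mathcal{E}_{1,2}(\kappa')$. (The paper then first joins consecutive first-generation vertices inside each face, so that every remaining crossing face carries exactly two crossing edges, and only afterwards inserts the midpoints $m(e)$ and the arcs $e(f)$; this two-stage version of your subdivision is what removes the non-crossing and consistency worries you raise.) Note also that the preliminary step is needed in your inductive step, not only in the base case: after cutting off a generation, the upper half of each cut edge joins the new initial graph $\tilde{\gamma}_n$ directly to the old final graph, so edges of type $\mathcal{E}_{\partial\kappa}$ are regenerated at every stage. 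With this repair, and with the identity/trivial-intertwiner assignment on the new cells as you describe (so that the cut is a colored subdivision in the sense of \defref{def:coloredsubdivision} and the pieces glue back to $\kappa$), your argument coincides with the paper's.
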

\begin{proof}\mbox{}\\
The theorem holds trivially for foams with empty boundary graph and w.l.o.g. we may assume that $\partial\kappa$ contains at least one connected initial(final) graph.

Consider the set $\mathcal{E}_{\partial\kappa}$ of edges intersecting the boundary graph in two points. Due to \lemref{lem:boundary} the nodes of an edge in $\mathcal{E}_{\partial\kappa}$ must lay in different, disjoint boundary graphs. Apart from that, there exist a natural orientation of internal edges adjacent to the boundary induced by the bordering property of boundary graphs. To be consistent boundary nodes $\n_i$ in an initial graph are always mapped to $\n_i\times[0,1]$ while nodes $\n_f$ of a final graph are mapped to $\n_f\times[-1,0]$ (see \defref{def:orfoam}). This implies that any edge in $\mathcal{E}_{\partial\kappa}$ must join an initial and a final graph.

 If $\kappa'$ is a foam derived from $\kappa'$ by splitting all edges in $\mathcal{E}_{\partial\kappa}$ then \lemref{lem:faces} guarantees that a face $f\in\kappa'$ is either bounded by at least two edges in $\mathcal{E}_{1,2}(\kappa')$ or by none. 
Suppose $v_f^1,\dots, v_f^m$ are the vertices of first generation in $f$ where the numbering is induced by the orientation of $f$, i.e. no other vertex of first generation is situated between $v^f_j$ and $v^f_{j+1}$. Recall that $f$ is path connected and homeomorphic to a 2-ball and therefore it is possible to connect $v^f_j$ and $v^f_{j+1}$ by an edge in $f$. Even better, we can introduce such edges $e'_{j,j+1}$ for all pairs $(v^f_j,v^f_{j+1})$ that are not already adjacent to the same edge in such a way that the edges $e'_{j,j+1}$ do not intersect. Closing the loop by joining $v^f_1$ and $v^f_n$, the face $f$ is divided into a subface $f'_0$ which has only vertices of first generation, $N/2$ faces $f'_i$ that contain exactly two edges of $\mathcal{E}_{1,2}(\kappa')$ and at most one face $\tilde{f}$ whose internal vertices are only of first generation and that intersects the initial graph in $\l_0$. Here, $N$ is the total number of edges $e^1_f,\dots,e^N_f\in\mathcal{E}_{1,2}(\kappa')$ bounding $f$. Since the frontier of a face constitutes a closed loop, the number of those edges is even. For the same reason, this splitting is independent of the face orientation and uniquely defined.

Let $\kappa''$ be the complex obtained from $\kappa'$ by subdividing all faces that contain vertices of first and second generation in the above manner, then $\kappa''$ satisfies:
\begin{itemize}
\item $\mathcal{E}_{\partial\kappa''}=\mathcal{E}_{\partial\kappa'}=\emptyset$
\item $\mathcal{E}_{1,2}(\kappa'')=\mathcal{E}_{1,2}(\kappa')$
\item $\forall f \in\kappa''$ s.t. $f\cap\mathcal{E}_{1,2}(\kappa'')\neq\emptyset\quad\exists! \;e_f,e'_f\in\mathcal{E}_{1,2}(\kappa'')\text{ and }e_f,e'_f\in\dot{f}$
\end{itemize}
The first two statement follow directly from the fact that the newly generated edges join only vertices of first generation and the third statement is a direct consequence of the splitting procedure for single faces.  

Proceed by subdividing every edge $e\in\mathcal{E}_{1,2}$ by a vertex $m(e)\in\dot{e}$ and join $m(e)$ and $m(e')$ by an edge $e(f)\in\dot{f}$ if $e$ and $e'$ are contained in the same face $f$.  Since all edges $e\in\mathcal{E}_{1,2}(\kappa'')$ are internal and therefore contained in at least two faces the set $\{m(e),e(f)\}$ give rise to a well-defined closed splitting graph $\gamma_s$ dividing $\kappa$ in two sub-foams:
\begin{itemize}
\item $\kappa_1$ containing only vertices of first generation whose boundary graph is the disjoint union of $\gamma_s$ and the initial graphs in $\partial\kappa$ 
\item $\kappa_{2\to f}$ which joins $\gamma_s$ and the final graphs of $\kappa$ 
\end{itemize}
Finally, remove all remaining subdivisions and proceed with $\kappa_{2\to f}$ in the same manner until no subfoam $\kappa_i$ contains two vertices of the same generation. From \lemref{lem:edges} follows immediately that the above splitting preserves the set of vertices of the same generation\footnote{A vertex $ v\in\kappa_{2,f}$ is of first generation iff it is second in $\kappa$} and thus $\kappa=\kappa_1\sharp\cdots\sharp\kappa_n$ (see \figref{fig:generation} for an example).

Since $\kappa$ is a minimal representative the procedure is unique. Furthermore, after the removal of all help edges and vertices the resulting blocks are again minimal. This proves the theorem. 
\end{proof}
Due to the gluing property the operator $\hat{Z}[\kappa]$ of a connected foam $\kappa$ decomposes as well into sub-operators containing only vertices of the same generation 
\begin{align}
\label{eqn:split}
\scal{T_{s_n}|\hat{Z}[\kappa]|T_{s_0}}=
\sum_{T_{s_1}\in\hilbert_{\gamma_{1}}}\!\cdots\!\!\!\!\sum_{T_{s_{n-1}}\in\hilbert_{\gamma_{n-1}}}\!\!\!\!
\scal{T_{s_n}|\hat{Z}[\kappa_{n}]|T_{s_{n-1}}} \scal{T_{s_{n-1}}|\hat{Z}[\kappa_{n-1}]|T_{s_{n-2}}} \,\cdots\,\scal{T_{s_1}|\hat{Z}[\kappa_{1}]|T_{s_0}}\;.
\end{align}
In the next section we want to apply this to the full projector. 
\begin{figure}[t]
 \begin{center}
	\subfloat{\raisebox{.5cm}{\includegraphics[height=7cm]{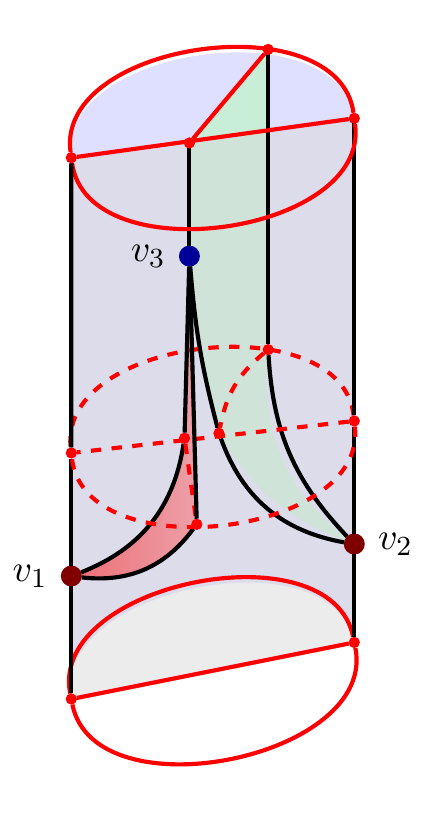}}}
	\hspace{2cm}
	\subfloat{\includegraphics[height=8cm]{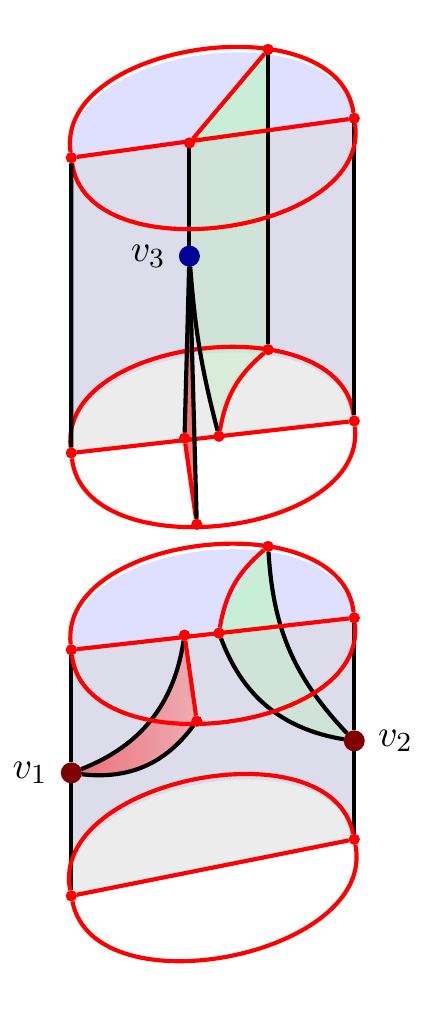}}
 \end{center}
\caption{A foam with two first order (red) and one second order (blue) vertex. The red graphs are the initial and final spin net induced on the boundary graph. The red dashed lines in the left picture indicate a cutting net such that the new blocks on the right only contain vertices of first generation.}
\label{fig:generation}
\end{figure}
\subsection{The time ordered projector}
\label{ssec:ttproj}

As above, the rigging map \eqref{eqn:riggdifH} can be restricted to fixed minimal representatives of the abstract equivalence classes by means of the map $L_{[s]_A}$ where the weight $\eta_{[s]_A}$ in \eqref{eqn:L_A} is set equal to one for simplicity. Thus $\eta$ effectively reduces to the operator averaging $\eta_{[s]_A,[s']_A}$ in \eqref{eqn:riggdifH}.
Explicitly,
\beq \label{amplitude}
\eta[T_{s_i}](T_{s_f})=\sum_{\kappa\in K_{\gamma(s_i),\gamma(s_f)}}
\;\scal{T_{s_i},\; \hat{Z}^{\dagger}[\kappa]\; T_{s_f}}
\eeq
where $K_{\gamma,\gamma'}$ is the set of {\it all} abstract minimal foams with fixed initial and final boundary graph $\gamma$ and $\gamma'$ respectively. 
Similar to the Feynman graph expansion of N-point functions in ordinary QFT,  `vacuum bubbles', that is, interior sums over connected foams with empty boundary graph just give rise to powers of $\eta^c[T_\emptyset](T_\emptyset)$ where the superscript $c$ indicates that only connected foams are involved.
Likewise,
contributions $\kappa$ of the form $\kappa=\kappa_i \cup \kappa_f$ with 
$\partial\kappa_i=\gamma(s_i),\;\partial\kappa_f=\gamma(s_f)$ and 
$\kappa_i\cap \kappa_f=\emptyset$ give again rise to powers of 
$\eta^c[T_\emptyset](T_\emptyset)$ times 
\beq
\label{a}
\eta^c[T_{s_i}](T_\emptyset)\;\;
   \eta^c[T_\emptyset](T_{s_f})~.
\eeq
The remaining contribution comes from the set $K^c_{\gamma(s_i),\gamma(s_f)}$ of connected foams with the given boundary graphs. Now suppose that the boundary graphs decompose into several disconnected components, then for example
\beq
\label{a1}
\begin{split}
\eta^{n.t.}[T_{s_1}\otimes T_{s_2}](T_{s_3}\otimes T_{s_4})
=& \left(\eta^c[T_{s_1}](T_{s_3})\right)\left(\eta^c[T_{s_2}](T_{s_4})\right)\\[2pt]
&+ \left(\eta^c[T_{s_1}](T_{s_4})\right)\left(\eta^c[T_{s_2}](T_{s_3})\right)\\[2pt]
& +\eta^c[T_{s_1}\otimes T_{s_2}](T_{s_3}\otimes T_{s_4})~.
\end{split}
\eeq
where the label $n.t.$ indicates that only foams are considered that do not split into disconnected initial and final parts as in the foregoing example.

Combining these arguments, we see that the amplitude \eqref{amplitude} is known if the connected amplitude
 \beq \label{connected.amplitide}
\eta^c[T_{s_i}](T_{s_f})=\sum_{\kappa\in K^c_{\gamma(s_i),\gamma(s_f)}}
\;\scal{T_{s_i},\; \hat{Z}^{\dagger}[\kappa]\; T_{s_f}}
\eeq 
can be computed. In fact, $\eta$ is a rigging map for the Hamiltonian constraint $\hat{H}(N)$ with lapse smearing function $N$ iff
\beq \label{annihilation}
\eta[T_{s_i}](\hat{H}(N)\;T_{s_f})=0\;\;\forall\;\; s_i,\; s_f,\; N~.
\eeq
In particular, equation \eqref{annihilation} must also hold for $s_i,s_f=\emptyset$. Moreover, the locality of the Hamiltonian action in combination with relation \eqref{a} and \eqref{a1} imply that \eqref{annihilation} is equivalent to 
\beq \label{connected.annihilation}
\eta^c[T_{s_i}](\hat{H}(N)\;T_{s_f})=0\;\;\forall\;\; s_i,\;s_f,\; N
\eeq
and thus it is sufficient to consider connected foams only in the sequel.

We can now apply the splitting procedure developed in the previous section to the connected map \eqref{connected.amplitide}:
\beq
\label{eqn:power1}
\eta^c[T_{s_i}](T_{s_f})
=\delta_{s_i,s_f}+\sum_{N=1}^\infty\;
\sum_{\hat{\kappa}_1\sharp..\sharp \hat{\kappa}_N\in K^c_{\gamma(s_i),\gamma(s_f)}}\; 
\scal{T_{s_i},\hat{Z}^{\dagger}[\hat{\kappa}_1]..\hat{Z}^{\dagger}[\hat{\kappa}_N] T_{s_f}}~.
\eeq
The second sum in \eqref{eqn:power1} extends over all `single-time-step' foams $\hat{\kappa}_{k},\; k=1,..,N$ whose internal vertices are all of first generation and whose gluing product is contained in $K^c_{\gamma(s_i),\gamma(s_f)}$, that is, consecutive foams $\hat{\kappa}_i$ and $\hat{\kappa}_{i+1}$ are glued along matching boundary graphs. The first sum runs over all possible values $N$ of maximal generation. Given the set $\hat{K}_{\gamma,\gamma'}$ of single time step foams with initial and final boundary graphs $\gamma$ and $\gamma'$ respectively, equation \eqref{eqn:power1} can be written more explicitly as 
\beq \label{evaluation}
\eta^c[T_{s_i}](T_{s_f})
=\delta_{s_i,s_f}+\sum_{N=1}^\infty\;\sum_{\gamma_1,..,\gamma_{N-1}}
\underset{\;\hat{\kappa}_k\in \hat{K}_{\gamma_{k-1},\gamma_k}} {^c\sum}\; 
\scal{T_{s_i},\hat{Z}^{\dagger}[\hat{\kappa}_1]\dots \hat{Z}^{\dagger}[\hat{\kappa}_N] T_{s_f}}
\eeq
with $\gamma_0:=\gamma(s_i),\;\gamma_N:=\gamma(s_f)$. The graphs $\gamma_1,..,\gamma_{N-1}$ belong to the afore mentioned set of minimal representatives of the abstract equivalence classes on which 
we know how to evaluate $Z(\hat{\kappa})$.

The label $c$ on the third sum in \eqref{evaluation} is to remind us that the glued product must be connected. That this is not a pure decoration can be understood form the following example: Let $\kappa$ be a connected foam made of two tubes, one connected to the initial and the other connected to the final graph, which are joined to the sides of a donut. Then, by imposing the time-splitting it might  happen that we slice the donut several times what possibly produces one-time-step foams that are not connected. Therefore, neither the graphs $\gamma_1,\dots, \gamma_{N-1}$ in \eqref{evaluation} nor the elements in $\hat{K}_{\gamma,\gamma'}$ can be restricted to the connected category.

Recall that the generation of a vertex is uniquely defined and therefore two components can be only joined by identifying vertices of the \emph{same} generation. Concluding, since $\hat{K}_{\gamma,\gamma'}$ is generated by cutting connected foams, \emph{all} disjoined components of a single time step foam $\hat{\kappa}\in\hat{K}_{\gamma,\gamma'}$ must contain at least one (non-trivial) internal vertex and are bounded by non-trivial in- and outgoing graphs.

These troubles, related to the use of  \eqref{connected.amplitide}, cannot be avoided by working with the full Rigging map \eqref{amplitude}.
The reason for this is that the maximal generation in disconnected components of a given foam do not have to agree in general what causes ordering ambiguities when passing from \eqref{connected.amplitide} to \eqref{eqn:power1}. On the other hand, when single time step foams of the above type, bordered by non-empty final and initial graphs, are glued to a connected graph then the resulting foam will always be connected. This is true even if these building blocks consist of several disconnected components. It therefore suffices to require that either $s_i$ or $s_f$ is connected. Since the Hamiltonian as defined in \cite{Thiemann96a} is acting locally on the nodes and can therefore only split but not glue, we prefer to restrict the domain of $\eta[T_{s_i}]$ to the subspace $\mathcal{H}_{0}^c$ in which the finite linear span of connected spin nets lies dense.

If the final spin net ${s_f}$ is connected then the label $c$ on the sum in \eqref{evaluation} can be removed and the whole expression can be simplified by introducing the 
{\it spin foam transfer matrix}\footnote{A similar matrix was already introduced in \cite{Bahr:2012q} in the context of holonomy spin foam models. However, in this work the authors only considered a very specific regular type of foams  in order to identify a transfer matrix.}
\beq \label{elementary}
\hat{Z}:=\sum_{\gamma,\gamma'} \;P_{\gamma'}\; 
[\sum_{\hat{\kappa}\in \hat{K}_{\gamma,\gamma'}}\; Z[\hat{\kappa}]]\;P_\gamma~.
\eeq
Here, $P_\gamma$ is the projection operator on the subspace of $\hilbert_{0}$ consisting of the closed linear span of spin network functions over $\gamma$ (with all spins non vanishing on each link). Note that  $\hat{Z}$ is still on the linear span of \emph{all} spin net functions including disconnected ones.

Since $P_\gamma P_{\gamma'}=\delta_{\gamma\gamma'} P_\gamma$ the sum over single time step foams can be replaced by 
\beq \label{identity}
P_\gamma \; \hat{Z} \; P_{\gamma'}=\sum_{\hat{\kappa}\in \hat{K}_{\gamma\gamma'}}
\;\; Z[\hat{\kappa}]
\eeq
and thus \eqref{evaluation} is equivalent to
\beq \label{evaluation1}
\eta^c[T_{s_i}](T_{s_f})
=\delta_{s_f,s_i}+\sum_{N=1}^\infty\;\sum_{\gamma_1,..,\gamma_{N-1}}
\scal{T_{s_f}, P_{\gamma(s_f)} \;\hat{Z}^{\dagger} \; P_{\gamma_{N-1}}\; \hat{Z}^{\dagger} P_{\gamma_{N-2}}\;
...\; P_{\gamma_1}\; \hat{Z}^{\dagger}\; P_{\gamma(s_i)}\; T_{s_i}}~.
\eeq
Using that $T_{s}=P_{\gamma} T_{s}$ and the fact that $\sum_\gamma P_\gamma$ 
is the identity on ${\cal H}_{0}$ 
we deduce the compact formula
\beq \label{evaluation2}
\eta^c[T_{s_i}](T_{s_f})
=\sum_{N=0}^\infty\;
\scal{T_{s_i}, (\hat{Z}^{\dagger})^N T_{s_f}}\quad\forall T_{s_f}\in\hilbert^c_{0}~.
\eeq
The operator $\hat{Z}$ no longer refers to a given boundary graph. 
Therefore, dropping the requirement that $s_f$ is connected, the right hand side of \eqref{evaluation2} can be extended to a suitable dense subset of the
whole Hilbert space $\hilbert_{0}$ even including states that are not finite linear combinations of spin net functions (see below). One should, however, keep in mind that the equality in \eqref{evaluation2} only holds if $T_{s_f}$ is an element of $\hilbert_{0}^c$. Nonetheless, if $\eta$ defines a rigging map then $\eta[T_{s_i}](HT_{s_f})$ must also vanish for all $T_{s_f}\in\hilbert^c_{0}$.

\subsection{Regularization and properties}

To test the rigging map on the subspace $\hilbert^c_{0}$ as suggested, the formal expression \eqref{evaluation2} must be regularized. The strategy here is first to regularize the expression on the right hand side of \eqref{evaluation2} by turning the formal operator $\hat{Z}$ into a densely defined quadratic form on the form domain given by the finite linear span of spin network functions in the full Hilbert space ${\cal H}_{0}$ and afterwards restrict to $\hilbert_{0}^c$.
To tame the infinite spin sums in $\hat{Z}[\kappa]$ for fixed $\kappa$, a spin cut-off $J$ has to be introduced, that is, all spins $j$ that contribute to the spin foam operator $\hat{Z}[\kappa]$ are supposed to obey $j\le J$. However, we must also impose a bound $N_f$ on the valence of the internal edges (i.e. the number of faces intersecting it) and a bound $N_e$ on the valence of internal vertices. A bound on the number of internal vertices in $\hat{\kappa}$ is not necessary since each internal vertex of first generation must be contained in an edge of the form $\n_0\times[0,1]$ where $\n_0$ is a node in an initial graph and consequently $\hat{\kappa}$ can have at most as much internal vertices as their are nodes in the initial graph. For p.l. complexes this would also restrict the number of possible internal edges and faces but in the abstract category several edges can intersect at the same endpoints and faces can be glued on the same frontier. Therefore, the cut-offs $N_f$ and $N_e$ are necessary to render $\hat{K}_{\gamma,\gamma'}$ finite.

Yet, this still does not turn $\hat{Z}$ into a densely defined operator as it has non vanishing matrix elements between any two spin network states (given $\gamma$ take any $\gamma'$ and let $\hat{\kappa}$ be the single time step foam such that all initial and final nodes
are joined via internal edges to a single internal vertex of first generation). To cure this the elementary operators $Z[\hat{\kappa}]$ should be equipped with a weight $w(\hat{\kappa})$. This weight should be such that  $w(\kappa):=\prod_{k=1}^n\;w(\hat{\kappa}_k)$ for the connected foam $\kappa:=\hat{\kappa}_1\sharp..\sharp \hat{\kappa}_n$, otherwise the gluing property would be violated and the above statements  would no longer be applicable. Denote the modified operator by $\hat{Z}'$ and pick the weight $w$ 
in such a way that 
\beq
||\hat{Z}' T_s||^2=\sum_{s'} \;|\scal{T_{s'},\hat{Z}' T_s}|^2 
=\sum_{\gamma'}\; \sum_{j',\iota'} 
|\sum_{\hat{\kappa}\in \hat{K}_{\gamma,\gamma'}} \; w(\hat{\kappa})
\scal{T_{\gamma',j',\iota'},Z(\hat{\kappa}) T_{\gamma,j,\iota}}|^2
\eeq
converges for $s=(\gamma,j,\iota)$. This is possible because firstly the set $\hat{K}_{\gamma,\gamma'} $ for given $\gamma,\gamma'$ is finite due to the bounds $N_f$ and $N_e$, 
secondly the sum over $j',\iota'$ for fixed $\gamma,\gamma'$ is finite due to the cut-off $J$,
and thirdly, due to the restriction to the embedded representatives of abstract minimal graphs, the sum over $\gamma'$ is countable.
It will therefore be sufficient to pick $w(\hat{\kappa})$ for $\hat{\kappa}\in \hat{K}_{\gamma,\gamma'}$ to be such that it suppresses the growth behavior as $\gamma,\gamma'$ become 
large after having performed the sum over $j',\iota',\hat{K}_{\gamma,\gamma'}$. 
It is likely that this growth behavior is bounded by the number
\beq
C(J,N_f,N_e)^{|E(\gamma)|+|E(\gamma')|}
\eeq
where $C(J,N_f,N_e)$ only depends on the cut-offs. The reason for this is that we expect polynomial 
growth in $J$ for every face 
due to the $nj$ symbols involved in the spin foam amplitude of which there are an order of 
$N_f N_e(|E(\gamma)|+|E(\gamma')|)$.

Having tamed $\hat{Z}$ like this as an operator densely defined on ${\cal H}_J$,
which is the subspace of ${\cal H}_{0}$ defined by the spin cut-off $J$,  it is 
still not clear that its powers\footnote{We rename $\hat{Z}'$ by $\hat{Z}$ again} are densely defined as its domain, the finite linear span of spin network functions is not preserved. Namely, the range of $\hat{Z}$ lies always in the closure of its domain.  
To improve on that, notice that $\hat{Z}$ is formally a symmetric operator. This follows from the reality of all 
amplitude factors that define the operator $\hat{Z}[\kappa]$ (at least in the Euclidian 
setting) and the fact 
that both $\kappa^{\ast}$ and $\kappa$ are elements of $\hat{K}_{\gamma',\gamma}$. Whence (\ref{elementary}) is invariant under taking adjoints and $^{\dagger}$ can be skipped in the folowing. It also follows irrespective of how $\hat{Z}$ was computed from the expression (\ref{evaluation2}) which should define a sesqui-linear form. 

Suppose $\hat{Z}$ can be extended as a self-adjoint operator on ${\cal H}_J$ with projection valued measure $E$. Let ${\cal H}_q:=E([-q,q]){\cal H}_{0}$ be the closed subspace of ${\cal H}_{0}$ on which $\hat{Z}$ acts by multiplication with $\lambda\in [-q,q]$ where $0<q<1$. More specifically its elements are of the form 
\beq
\psi_q:=\int_{-q}^q \; dE(\lambda) \; \psi,\; \psi\in {\cal H}_{0}~.
\eeq
The operator $A:=\sum_{N=0}^\infty \hat{Z}^N$ acts on these vectors 
as 
\beq
A\psi_q=\int_{-q}^q\; dE(\lambda)\; \sum_{n=0}^{\infty} \lambda^n\; \psi=\int_{-q}^q\; dE(\lambda)\; (1-\lambda)^{-1}\; \psi
\eeq
defining formally a geometric series. This implies 
\beq
||A\psi_q||^2=\scal{\psi, A^2 E([-q,q]) \psi}=\int\; d\scal{\psi,E(\lambda)\psi} [1-\lambda]^{-2}\;
\le (1-q)^{-2} ||\psi_q||^2
\eeq 
 and whence $A$ and any power of $\hat{Z}$ is even bounded on ${\cal H}_q$. Accordingly, on ${\cal H}_q$ holds $A=1+\hat{Z} A$. 
 
Let now $\psi'_q=E([-q,q])\psi' \in {\cal H}_q$ be in the domain of the averaging map and $\psi,\psi'$ in the domain of $\hat{H}(N)$ for any lapse function\footnote{$\hat{H}(N)$ also must be projected to ${\cal H}_J$}. Then, if $\eta$ is a rigging map for $\hat{H}$, we find 
\beq
\begin{split}
\label{eqn:B}
0&=\eta[\psi'_q](\hat{H}(N)\psi_q)=\scal{\psi'_q,A\hat{H}(N)\psi}=\scal{A\psi'_q,\hat{H}(N)\psi}\\
&=\scal{\psi'_q,\hat{H}(N)\psi}
+\scal{\hat{Z}\psi'_q,A\hat{H}(N)\psi}=\scal{\psi'_q,\hat{H}(N)\psi}
\end{split}
\eeq 
for all $\psi',\psi$ in the common 
domain ${\cal D}^c$ of all $\hat{H}(N)$ defined by the finite linear span of connected spin network functions over the allowed set of graphs. 
In fact, as long as $\psi$ is connected, equation \eqref{eqn:B} holds also for states $\psi'$ that are finite linear combinations of spin network functions on arbitrary graphs, including disconnected ones. Because the Hamiltonian can only split spin nets, we can therefore choose $\psi':=\hat{H}(N)\psi$. In particular
\beq
||E([-q,q])\hat{H}(N)\psi||^2 =0
\eeq
has to hold for all $0<q<1$. Thus the range of the $\hat{H}(N)$ avoids the kernel 
of $\hat{Z}$. To bring this into a familiar form, notice that it follows from the Cauchy Schwarz identity
\beq 
\scal{\psi',E(-q,q)\hat{H}(N)\psi}=0 
\eeq
for all $\psi,\psi'\in {\cal D}^c$. Dividing by $2q$ and taking $q\to 0$ we conclude 
(in the sense of the functional calculus)
\beq
\scal{\psi',\delta(\hat{Z}) \hat{H}(N)\psi}=0
\eeq
We arrive at the first conclusion: If the spin foam amplitude as above defines 
a projector on the joint kernel of the $\hat{H}(N)$, the range of any of the 
$\hat{H}(N)$ must be orthogonal to the kernel of the spin foam Hamiltonian $\hat{Z}$. In other words the $\hat{H}(N)^\dagger$ annihilate the kernel of $\hat{Z}$. If true 
this would tell us how to construct a spin foam model given $\hat{H}(N)$ or vice versa 
how to build a Hamiltonian given a spin foam model. For instance, above 
criterion would be satisfied if the `spin foam Hamiltonian' $\hat{Z}$ takes the form of a master constraint
\beq  \label{Master}
\hat{M}=\sum_{I,J} Z^{IJ} \hat{H}(N_I)\;\hat{H}(N_J)^\dagger  
\eeq
for a suitable choice of matrices $Z^{IJ}$ and smearing functions $N_I$, see 
\cite{TTMaster} for details where this kind of expression was considered 
as the source of an alternative spin foam model. In particular, such choice ($\hat{Z}=\hat{M}$) is bounded by zero from below and thus one can in principle make use 
of analytic continuation techniques in order to define the path integral 
rigorously (Feynman-Kac formula). 

However,
this identification of the kernels of $\hat{Z}$ and $\hat{M}$ cannot be correct:  
Namely, the second conclusion is the following: According to the above argumentation
\eqref{evaluation2} should be a generalized projector on the kernel of $\hat{Z}$. Instead of $T_{s_i},\; T_{s_f}$ we pick the states $\psi_q=E([p,q])\psi$ and $\psi'_q=E([p,q])\psi'$ with any connected $\psi$, any $\psi'$ and $0<p<q<1$. Thus $\scal{\psi'_q,[\sum_{N=0}^\infty \; \hat{Z}^N]\;\hat{Z}\;\psi_q}$ should vanish if $\hat{Z}=\hat{M}$. Yet, an explicit evaluation gives
\beq \label{contradiction1}
0=\int_p^q\; d\scal{\psi',E(\lambda)\psi}\;\frac{\lambda}{1-\lambda}
\eeq
where the spectral measure is defined by the polarization identity. In particular
for $\psi=\psi'$
\beq \label{contradiction2}
0=\int_p^q\; d\scal{\psi,E(\lambda)\psi}\;\frac{\lambda}{1-\lambda}
\eeq 
yields a contradiction unless all spectral measures $E_\psi=\scal{\psi, E(.)\psi}$ have 
no support in $(p,q)$. Indeed, if $\hat{Z}$ and the operators $\hat{H}(N)^\dagger$ for all choices of lapse functions really
have the same kernel then the expression (\ref{evaluation2}) is somehow 
incorrect and should better be replaced by the heuristic expression
\beq \label{evaluation3}
\eta^c[\psi'](\psi):=\scal{\psi',\delta(\hat{Z})\psi}=\lim_{T\to \infty}\int_{-T}^T
\; \frac{dt}{2\pi}\; \scal{\psi',e^{it \hat{Z}}\psi}~.
\eeq
When this is formally expanded it yields again a power series in $\hat{Z}$ as before but 
with different coefficients (at finite $T$). To make this even more obvious, suppose that 
by introducing the cut-offs $J,N_f,N_e$ the operator $\hat{Z}$ becomes bounded. By rescaling $\hat{Z}$ by a suitable global factor, i.e. by  just choosing a different weight, we may assume without loss of generality that $||\hat{Z}||<1$. But then 
\beq
[\sum_n\; \hat{Z}^{n}]\; \hat{H}(N)=(1-\hat{Z})^{-1}\hat{H}(N)=0
\eeq
is obviously a contradiction as can be seen by multiplying with the invertible operator 
$1-\hat{Z}$ from the left. In summary, the identification of $\hat{Z}$ with the master constraint 
of the $\hat{H}(N)$ is not sustainable.

We conclude this subsection with some speculations about the Lorentzian case.
The Lorentzian spin foam amplitudes can be glued similar to Euclidean ones. Moreover, the lifting defined in \secref{ssec:psn} was actually first developed for Lorentzian spin foams. Yet, certain Lorentzian vertex amplitudes are not integrable \cite{Kaminski:2010qb} and therefore the class of foams must be restricted further. Nevertheless, the splitting defined in \secref{ssec:time} does not affect vertex amplitudes, so that this problem can be possibly ignored and thus the conclusions 
reached at above are not affected by switching signatures.

\section{Merging covariant and canonical LQG: The current status}
\label{sec:conclusion}

The extension of spin foam amplitudes based on the duals of simplicial complexes 
to arbitrary complexes invented in the seminal paper
\cite{Kaminski:2009fm} offers for the first time the exciting possibility 
to test: A. whether a given spin foam model defines a rigging map for a given Hamiltonian 
constraint, B. whether a spin foam model has a rigging kernel at all and if yes to which 
Hamiltonian constraint it corresponds, or C. how to define a spin foam model such 
that it defines a rigging map for a given Hamiltonian. This is is not possible for 
spin foam models based on simplicial complexes because those necessarily have purely 4-valent boundary graphs. Yet, firstly, the rigging map must act on the full LQG Hilbert space 
and, secondly, none of the known anomaly free versions of the Hamiltonian constraint preserves the subspace spanned by spin network states based on a  purely 4-valent graph.

In order to postulate such a spin foam rigging map it is necessary to sum over all possible spin foams whose boundary graphs match the initial and
final graph of the would-be rigging map matrix element. For the regularization of the sum, group field theory (GFT) techniques cannot be utilized here because the interaction part of a GFT
Lagrangian dictates the possible valence of a dual graph and so far is 
geared to duals of simplicial triangulations. To get rid of this restriction one would have to allow all possible interaction terms based on certain invariant polynomials of arbitrarily many 
gauge group elements what is currently out of reach. Therefore, the
only sensible definition of the sum is a naive sum, possibly dressed with a weight function,
that is compatible with certain natural rules listed in subsection \ref{ssec:road}. These rules are established so that the sum has a chance to define a rigging map. For this purpose it was necessary to overcome the restriction to embedded graphs by allowing abstract ones and to introduce a regulator artificially so that the mathematical expressions converge at least order by order. 

Even though the considerations in \cite{Alesci:2011ia} have shown that the above set-up can at least partially get to work by restricting on the Euclidean theory and a very specific class of foams, the discussion in the last two sections has revealed that this is not feasible for the full model. This conclusion is totally independent of the particulars of face and vertex amplitudes and the details of the Hamiltonian constraint. The essential properties, which are needed to reach at this point, are A. the realization of boundary states of spin foams as kinematical states of the canonical theory and B. the gluing property of the spin foam amplitude. Let us also remark that no proof in the whole section depends on the invariance under edge or face subdivisions since they can be removed prior of the computation of $\hat{Z}$.
 
 On the other hand, it transpires that the single time step operator $\hat{Z}$,
 which defines a sort of elementary transfer matrix from which any
 foam can be generated, is of importance. Its matrix elements between 
 spin network states provide the `integral kernel' (better: summation kernel) of the 
 single time step spin foam evolution. The correct correspondence between 
 $\hat{Z}$ and $\hat{M}$, the master constraint associated to all of the $\hat{H}(N)$, can currently only
 be speculated about which is sketched here for completeness
 (see \cite{SFLQC} for similar ideas):
 Using a skeletonization of the interval $T$ into $n$ intervals of length 
 $T/n$ and a Riemann sum approximation one obtains
 \beq \label{shouldbe}
 \begin{split}
 2\pi\delta(\hat{M})
   &=\lim_{T\to\infty} \int_{-T}^T\; e^{it\hat{M}} 
   =\lim_{T\to\infty} \int_0^T\; [e^{it\hat{M}} +e^{-it\hat{M}}] \\[3pt] 
  & =\lim_{T\to\infty} \;\lim_{n\to \infty} \sum_{k=0}^n  \frac{T}{n}
   \{[e^{iT\hat{M}/n}]^k+[e^{-iT\hat{M}/n}]^k\}~.
\end{split}
\eeq 
If $\hat{Z}$ would be unitary rather than symmetric then \eqref{evaluation2} suggests that $\hat{Z}=e^{i\tau \hat{M}}$ for an artificial synchronization of the limits $T,n\to \infty$ by some constant $\tau=T/n$. Just that in this case the geometric series over the negative powers of $\hat{Z}$ is missing. But, as 
$\hat{Z}$ is symmetric, it appears that one 
wrongly took the geometric series of $\cos(\tau \hat{M})$ rather than the Laurent
series of $e^{it \hat{M}}$ minus one. This can be interpreted as an artifact of summing over both Plebanski sectors (II $\pm$) rather than keeping only the sector corresponding to the Einstein-Hilbert action\footnote{The sector (I $\pm$) is excluded by implementing the linearized constraint while the Einstein-Hilbert sector of the 4-simplex amplitude can be only specified when taking the orientation of the 4-simplex into account. See \cite{Engle:2011ps} for details.}. 
Both identifications for unitary and symmetric $\hat{Z}$ suppose, of course, that $\hat{Z}$ has unit norm so that the geometric series is marginally divergent as it is expected for a $\delta$-distribution. In that case it would be natural to define $\tau M:=\frac{1}{i}\ln(Z)$ (modulo $2\pi$) or $\tau(M)={\rm arcos}(Z)$ (modulo $\pi$) where the value of $\tau$ is irrelevant as $M$ is a constraint. With this identification of $\hat{M}$ in terms of $\hat{Z}$, (\ref{evaluation2}) should then, perhaps, be replaced by $\delta(\hat{M})$.

A similar problem can be also observed in a very different set-up. Namely, the semiclassical limit \cite{bdfgh2009} of the 4-simplex vertex amplitude also suffers from the appearance of multiple terms, each consisting in an exponential of the Regge-action, that resembles a series of the cosine rather then the one exponential term one would expect. In \cite{Engle:2011ps}, the author showed that this can be cured for the Euclidean theory by inducing an additional constraint. It would be of course interesting to check whether such a constraint can also resolve the problems of the postulated rigging map. 

Apart form that, the whole argument would break down if $\hat{Z}[\kappa]$ cannot be split into smaller subfoams. This can be understood as a hint that the vertex amplitude is too local in the sense that it only depends on the coloring of the adjacent faces and not on other vertices in its surrounding and therefore allows splitting. Indeed there are indications that a more non-local action is needed to reconstruct diffeomorphism invariance on a lattice \cite{Dittrich:2012qb}. On the other hand, also the action of the Hamiltonian constraint \cite{Thiemann96a} is local. Whether this locality is a problem is still under debate in both approaches, the canonical as well as the covariant one. 

One could, of course, also argue that the assumption on the weight in the above section is too naive and the weight possibly does not feature the same factorization properties as the non-normalized operator $\hat{Z}[\kappa]$. For example, it might not be necessary to introduce a spin cut-off a priori since many colorings are restricted by the compatibility conditions imposed by the labeling of the boundary spin nets. Nevertheless, there exist regions, so called bubbles (see \cite{Smerlak:2012je} and references therein), where the spins are not restricted by the boundary data. These region therefore require a separate regularization if the spins are not restricted by a general cut-off. For attempts into this direction see \cite{Christodoulou:2012af,Riello:2013bzw,Riello:2013gja}. By splitting the foam it might happen that one cuts through such a bubble so that it is no longer present in the single time step foams. Only after gluing the blocks this structures reappear. It therefore deserves further investigations whether this bubbles can be renormalized by weights of the above form .

In addition, the equivalence class $\kappa\sim\kappa'\Leftrightarrow: Z[\kappa]=Z[\kappa']$ is huge as many combinatorial inequivalent foams can lead to the same amplitude. For example, two regions of a foam can be swapped if the boundary spin nets induced on a closed surface around this regions are equal. To avoid over-counting it might be necessary to introduce a factor in the model determining the multiplicities of interchangeable regions. A similar factor was already advertised in order to prove a relation between summing over all foams and refining and was argued to be related to the volume of the orbit of diffeomorphism acting on a colored complex (see \cite{Rovelli:2010qx}). This idea is misleading here since  a map, somehow related to diffeomorphism, should be at least continuous while cutting out parts of $\kappa$ and gluing them in somewhere else does not define a continuous function. Despite, we are working with abstract complexes while a diffeomorphism is only affecting the embedding so in this sense we have already taken care of diffeomorphisms. Instead, we propose to include a purely statistical factor related to the (heuristic) expansion of the exponential in \eqref{eqn:BFpart}. Why could such a factor cure the problem? Suppose the constraint $\hat{C}$ can be implemented via group averaging
\bq
\int\!\!\dif{\alpha}\,\scal{m|\exp{i\alpha\hat{C}}|n}~.
\eq
Expanding the exponential and inserting a resolution of unity $\sum_m\ket{m}\bra{m}$ yields
\bq
\begin{split}
&\int\!\!\dif{\alpha}\left(\delta_{m,n}+\sum_{N=1}\frac{(i\,\alpha)^N}{N!}\sum_{m_1}\cdots\sum_{m_{N-1}} C_{m m_1}C_{m_1m_2}\cdots C_{m_{N-1} n}\right)\\
=&\int\!\!\dif{\alpha}\left(\delta_{m,n}+i\alpha\sum_{m_1} C_{mm_1}\left\{\delta_{m_1n}+\sum_{N=1}\frac{(i\,\alpha)^N}{(N+1)!}\sum_{n_1}\cdots\sum_{n_{N-1}} C_{m_1 n_1}\cdots C_{n_{N-1} n}\right\}\right)
\end{split}
\eq
 with $C_{mn}\equiv\scal{m|\hat{C}|n}$. Due to the factor $N!$  this expression cannot be written as a formal geometric series as it was done for the spin foam transfer matrix. Of course, for spin foams the situation is more complicated since matching conditions depending on the bulk structure and coloring of each foam have to be respected. The inclusion of a statistical factor $N(\kappa)$ as advertised above can only solve the problem if this factor is non-local in the sense that $N(\kappa\sharp\kappa')\neq N(\kappa)\,N(\kappa')$ so that $N(\kappa)Z[\kappa]$ does no longer posses a gluing property. 

Yet, the reason for demanding a gluing property \eqref{eqn:glue} is deeply rooted in the `sum over histories' interpretation of spin foams: two `histories' glued together should yield a new `history'. However, it was argued earlier \cite{Kiefer199153, timeqg,Calcagni:2010ad} that (causal) propagators do not entail a physical scalar product or projector. For instance, in \cite{Kiefer199153} the author illustrated that mainly due to the absence of an extrinsic time parameter a propagator $G$ in Wheeler-deWitt Cosmology cannot define a projector since (if $G$ can be normalized) it is not idempotent\footnote{For the renormalized projector to be idempotent the single amplitudes do not necessarily have to satisfy a gluing property. Thus there is no contradiction between demanding idempotency and violation of a gluing property.}. More recently, Calcagini, Gielen and Oriti \cite{Calcagni:2010ad} analyzed different two-point functions in LQC coupled to a scalar field and found that only certain two-point functions\footnote{So-called non-relativistic Newton-Wightman functions} $G(x',t';x,t)$ define a positive-definite physical scalar product satisfying $G^2=G$ while all constructed causal propagators fail to either satisfy an adequate composition property or are not defining a positive definite scalar product. Even more severe, the Feynman propagator does not even solve the constraint equation rather it defines a Green's function. All two-point functions\footnote{Except for the relativistic causal two-point function} considered in \cite{Calcagni:2010ad} can be transformed in a `vertex expansion' which closely resembles the spin foam model and was later constructed in \cite{SFLQC}. But there are two important differences between spin foams in the full theory and two-point functions in (L)QC. For Wheeler-deWitt as well Loop Cosmology  the properties of the propagator highly depend on the contour of integration and a super-selected sector of solutions (see \cite{Kiefer199153,Calcagni:2010ad,Halliwell:1989dy,Henneaux1982127} and references therein) while in LQG  neither the complete set of solutions  nor the correct path integral measure is known. Thus, the impact of a `contour' is rather obscure. Second, in the presence of a scalar field the vertex expansion in LQC is always \emph{non-local}.
 
Obviously, the whole problematic is bypassed if $\hat{Z}$ is itself a projector. This idea is supported by the computation in \cite{Alesci:2011ia} where it was sufficient to consider only complexes with a single internal vertex excluding the trivial evolution. Also in \cite{Bahr:2012q} the authors showed that for BF-theory it is in fact possible to construct a `spin foam transfer matrix'  that annihilates the 4-dimensional curvature. Their transfer matrix is constructed by gluing arbitrary but fixed building blocks embedded in space time. But BF-theory is topological and therefore independent of the triangulation which is certainly not the case for quantum gravity. Therefore, it is questionable that the transfer matrix defined here could already implement the constraint. Also heuristically there is no good argument why trivial foams and larger foams with vertices of several generations should be excluded

{\acknowledgments
 A.Z. acknowledges financial support of the `Elitenetzwerk Bayern' on the grounds of `Bayerische Elitef\"order Gesetz'.
}

\appendix
\section{Harmonic analysis of $\SU(2)$}
\label{app:HA}
Since $\SU(2)$ is compact and semisimple its representations are completely reducible in the sense that a given unitary representation $\rho: \SU(2)\mapsto U(\hilbert)$ where $U(\hilbert)$ is the set of bounded unitary operators on the Hilbert space $\hilbert$ decomposes into a direct sum of irreducible (finite dimensional) representations $\hilbert_j$ labeled by spin $j\in\frac{1}{2}\mathbb{N}$. Thus the Hilbert space $\sqi{\SU(2),\mu_H}$ of square integrable functions $\Phi:\SU(2)\to\C$, where $\mu_H$ is the Haar measure is isomorphic to $\bigoplus_j\hilbert_j$ and an orthogonal basis is spanned by the representation matrix elements (Wigner matrix) $R^j_{nm}(g)$ :
\beq
\label{eqn:schurA}
\int\dif{\mu_H(g)}\; \overline{R^j_{m n}(g)} \;R^k_{r s}(g)=\frac{1}{d_j}\delta_{j,k}\;\delta_{m,r}\;\delta_{r,s}
\eeq
where $d_j=2j+1$. Since $\SU(2)$ is unitary $\overline{R^{j}_{mn}(g)}=R^{j}_{nm}(g^{-1})$. A convolution on this space is defined by using characters $\chi^j(g)=\tr\;R^j(g)$:
\beq
\int\dif{\mu_H(g)} \;\underbrace{\sum_j d_j\; \chi^j(hg^{-1})}_{\delta_h(g)} \;R^k_{mn}(g)= R^j_{mn}(h)
\eeq
For $\SU(2)$ $\tr(g^{-1})= \tr(g)$ since every group element can be expanded in terms of Pauli matrices whose trace is zero.

An intertwiner is a function $\iota:V\to W$ from a representation space $V$ into $W$ such that it commutes with $\rho$. For completely reducible representations $\iota$ is either zero or it defines an isometry between an invariant subspace of  $V$ and $W$. For example, $V=\hilbert_j\otimes\hilbert_k$ decomposes into a sum over irreps which obey the triangle inequality $|j-k|\leq l\leq j+k$ and $j+k+l\in\mathbb{N}$:
\beq
\hilbert_j\otimes\hilbert_k=\bigoplus_{l=|j-k|}^{j+k} \hilbert_l
\eeq 
Note, each irrep $l$ occurs with multiplicity one and therefore the space of intertwiners $\iota:\hilbert_j\otimes\hilbert_k\to\hilbert_l$ is one-dimensional and $\iota$ can be e.g expressed by \emph{Clebsch-Gordan coefficients}
\beq
C^{j,\mu;\;k,\nu}_{l,r}:=\scal{l,\lambda |j,\mu;k,\nu}
\eeq
where $\mu=-j,\dots,j$ and $\nu,\lambda$ are magnetic indices. Alas, they are not normalized 
\beq
\sum_{\mu,\nu,\lambda}\overline{C^{j,\mu;\;k,\nu}_{l,\lambda}}C^{j,\mu;\;k,\nu}_{l,\lambda}=d_l
\eeq
and not invariant under cyclic permutations of the indices. Instead we can use $3j$-symbols 
\begin{gather}
\label{eqn:int3j}
\begin{gathered}
\{3j\}:\hilbert_j\otimes\hilbert_k\otimes\hilbert_l\to\C\\
\iota_{j,\mu;\;k,\nu;\;l,\lambda}=\threej{j}{\mu}{k}{\nu}{l}{\lambda}\;.
\end{gathered}
\end{gather}
which are non zero only if $\mu+\nu=\lambda$ and $j,k,l$ are compatible. They span the (one-dimensional) invariant space $\Inv\left(\hilbert_j\otimes\hilbert_k\otimes\hilbert_l\right)$, are invariant under cyclic permutations of $j,k,l$ and normalized:
\beq
\label{eqn:firstorth}
\sum_{\mu,\nu}\threej{j}{\mu}{k}{\nu}{l}{\lambda}\threej{j}{\mu}{k}{\nu}{\tilde{l}}{\tilde{\lambda}}=\frac{1}{d_l}\; \delta_{l,\tilde{l}}\;\;\delta^{\tilde{\lambda}}_{\;\;\lambda}
\eeq
Equally, Clebsch-Gordan coefficients are elements of $\Inv\left(\hilbert_j\otimes\hilbert_k\otimes\hilbert_l^{\ast}\right)$ where $\,^{\ast}$ denotes the dual. An index of a $3j$-symbol is dualized  by contraction with the (unique) two valent intertwiner $\epsilon\in\Inv(\hilbert^{\ast}_j\otimes\hilbert_j^{\ast})$
\beq
\epsilon_{j}^{\nu,\mu}:=(-1)_{j-\nu}\delta^{-\nu,\mu}
\eeq
and thus 
\beq
\iota_{j,\mu;\;k,\nu}^{\;\qquad l,\lambda}=(-1)^{l-\lambda}\threej{j}{\mu}{k}{\nu}{l}{-\lambda}\;.
\eeq
Since $\iota_{j,-\mu;\;k,-\nu;\;l,-\lambda}=(-1)^{j+k+l} \iota_{j,\mu;\;k,\nu;\;l,\lambda}$ and $\mu+\nu+\lambda=0$ the dual $3j$-symbol $\iota^{j,\mu;\;k,\nu;\;l,\lambda}$ is equivalent to $\iota_{j,\mu;\;k,\nu;\;l,\lambda}$. 

In contrast to three valent intertwiners the space of four valent intertwiners $\Inv[\bigotimes\limits_{i=1}^4\hilbert_{j_i}]$ is not one dimensional since the trivial representation occurs with multiplicity $N_0$ in
\beq
\bigotimes\limits_{i=1}^4\hilbert_{j_i}=
\left(\bigoplus_{a=|j_1-j_2|}^{j_1+j_2}\hilbert_a\right)
\otimes\left(\bigoplus_{a'=|j_3-j_4|}^{j_3+j_4}\hilbert_{a'}\right)~.
\eeq
The number $N_0$ is determined by the total number of irreps $a\in\{\max(|j_1-j_2|,|j_3-j_4|),\dots,\min(j_1+j_2,j_3+j_4)\}$. A normalized intertwiner of that kind is defined by
\beq
\label{eqn:fourv}
(\iota_a)_{j_1,\mu_1;j_2,\mu_2;j_3,\mu_3;j_4,\mu_4}\;=
d_a\; \sum_{\alpha}\;\threej{j_1}{\mu_1}{j_2}{\mu_2}{a}{\alpha}\threej{a}{\alpha}{j_3}{\mu_3}{j_4}{\mu_4}
\eeq
We could have also started by coupling for instance $j_1,j_3$ and $j_2,j_4$ by an intermediate irrep $b$ and would have arrived by the same result. The intertwiners $(\iota_a)_{j_1,j_2,j_3,j_4}$ and $(\iota_b)_{j_1,j_3,j_2,j_4}$ are related by a change of basis through $6j$ symbols
\begin{gather}
\begin{gathered}
\sum_{\alpha}\;\threej{j_1}{\mu_1}{j_2}{\mu_2}{a}{\alpha}\threej{a}{\alpha}{j_3}{\mu_3}{j_4}{\mu_4}\\
= \sum_b\sixj{j_1}{j_4}{j_2}{j_3}{b}{a}
\sum_{\beta}\;\threej{j_1}{\mu_1}{j_3}{\mu_3}{b}{\beta}\threej{b}{\beta}{j_2}{\mu_2}{j_4}{\mu_4}~.
\end{gathered}
\end{gather}
All higher valent intertwiners can be obtained in the same manner. For more details and an explicit graphical calculus see \cite{Alesci:2011ia}.

Intertwiners commute with the group action and thus
\beq
\label{eqn:orth2}
\begin{split}
&R^{j_1}_{\alpha\tilde{\alpha}}(g)\;R^{j_2}_{\beta\tilde{\beta}}(g)\\
	&=\sum_{j_3=|j_1-j_2|}^{j_1+j_2}d_{j_3}\sum_{\gamma,\tilde{\gamma}=-j_3}^{j_3}
	\threej{j_1}{\tilde{\alpha}}{j_2}{\tilde{\beta}}{j_3}{\tilde{\gamma}}
	\threej{j_1}{\alpha}{j_2}{\beta}{j_3}{\gamma}\overline{R^{j_3}_{\gamma\tilde{\gamma}}(g)}
\end{split}
\eeq
or in the index notation \eqref{eqn:int3j}
\beq
\label{eqn:recoupl}
[R^{j_1}(g)]^{\,\alpha}_{\;\;\tilde{\alpha}}\;[R^{j_2}(g)]^{\,\beta}_{\;\;\tilde{\beta}}=
(\iota^{\dagger})^{j_1,\alpha; j_2,\beta;j_3,\gamma}\; [R^{j_3}(g^{-1})]^{\tilde{\gamma}}_{\;\;\gamma} \;\;\iota_{j_1,\tilde{\alpha}; j_2,\tilde{\beta};j_3,\tilde{\gamma}}
\eeq
By first coupling $j_1$ and $j_2$ and then using \eqref{eqn:schurA}
\beq
\label{eqn:othogjint}
\int\dif{\mu_H(g)}\chi^{j_1}(g)\;\chi^{j_2}(g)\; \chi^{j_3}(g)=\tr (\iota^{\dagger}\iota)
\eeq
and
\beq
\label{eqn:othogjint4}
\int\dif{\mu_H(g)}\chi^{j_1}(g)\;\chi^{j_2}(g)\; \chi^{j_3}(g)\;\chi^{j_4}(g)=\sum_a\tr (\iota_a^{\dagger}\iota_a)~.
\eeq

\section{Some facts on piecewise-linear topology and triangulations}
\label{app:triangulation}
In this section some results on triangulation and piecewise-linear topology of 3- and 4-manifolds is presented. The exposition is mainly based on \cite{pl1}.  Since any cell-complex can be subdivided into a simplicial complex without introducing new vertices a cell-complex is assumed to be simplicial if not stated otherwise. 
\begin{definition}
A locally finite simplicial complex $K\subset\R^n$ is a collection of simplices such that 
\begin{enumerate}
\item $\sigma,\tau\in K\implies \sigma\cap\tau=\emptyset$ or it is a common face
\item $\sigma\in K$, $\tau$ a face of $\sigma$ then $\tau\in K$
\item $\forall\, x\in\overline{K}\;\; \exists \,U\in\R^n$ s.t. $U$ is an open neighborhood of $x$ meeting only finitely many simplices of $K$.
\end{enumerate}  
\end{definition}
As before $\overline{K}$ denotes the underlying polyhedron, i.e. the union of cells of $K$. A map $f:\overline{K} \to\overline{L}$ between polyhedra $\overline{K}$ and $\overline{L}$ is piecewise linear (p.l.) iff the graph $\gamma(f):=\{(x,f(x))|x\in K\}$ is a polyhedron. A p.l.-map is simplicial if the restriction of $f$ to any simplex $\sigma\in K$ is linear. Note, that a simplicial map is determined completely by its values on its vertices.

A p.l. m-ball is p.l. homeomorphic to an m-simplex in $\R^m$. If every point $x\in\overline{K}$ lies in the interior of a p.l. m-ball or $(m-1)$-ball then $\overline{K}$ is a \textit{p.l. manifold} of dimension $m$ with boundary $\partial\overline{K}$, which is the submanifold consisting of all points $x\in\overline{K}$ whose neighborhood in $\partial\overline{K}$ is homeomorphic to an $(m-1)$-ball.  
\begin{definition}
Let $K$ be a locally finite cell complex and $M$ a smooth manifold then $f:\overline{K}\to M$ is \underline{piecewise differentiable} (PD) if for every point $x\in\overline{K}$ one can find a closed neighborhood $U\subset\overline{K}$ and a subdivision $K'_x$ of $K$ such that $U\cap K'_x$ is a finite simplicial complex and the restriction of $f$ to each simplex of $K'_x\cap U$ is smooth. The map $f$ is a PD homeomorphism if $f$ is PD, a homeomorphism and the restriction of $f$ to each simplex has an injective differential at each point.
\end{definition}
A smooth triangulation of a smooth n-manifold is a triple $(M,K,f)$ where $M$ is a smooth manifold, $\overline{K}$ a p.l. n-manifold and $f:\overline{K}\to M$ a PD homeomorphism. 
\begin{theorem}[Whitehead]
Every smooth n-manifold $M$ has a triangulation $(M,K,f)$ which is unique up to PD homeomorphism. 
\end{theorem}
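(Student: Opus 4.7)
The plan is to establish existence and uniqueness separately, using rather different techniques for the two parts. Existence can be obtained constructively via embedding and secant approximation, while uniqueness reduces to a common subdivision argument combined with a PL approximation lemma for PD homeomorphisms.

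For existence, I would first invoke the Whitney embedding theorem to realize $M$ as a smooth closed submanifold of some $\R^N$. The idea is then to produce a triangulation by first triangulating a small tubular neighborhood $T$ of $M$ in $\R^N$ with a sufficiently fine standard simplicial subdivision (e.g.\ a barycentric refinement of a cubical lattice, intersected with $T$), and then pushing the vertices onto $M$ via the nearest-point retraction $\pi\colon T\to M$. For a fine enough initial mesh one checks that extending $\pi$ linearly on each simplex produces a PD map $f\colon \overline{K}\to M$ whose differential on each simplex is injective; a standard transversality-plus-compactness argument then upgrades $f$ to a PD homeomorphism. In the non-compact case the construction must be done inductively over an exhaustion $M=\bigcup_n K_n$ by compact sets, matching the new triangulation on the collar of $K_n\cap K_{n+1}$ with the one already built; here local finiteness is preserved by choosing the mesh size as a suitable positive smooth function on $M$.

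For uniqueness, given two smooth triangulations $(M,K_1,f_1)$ and $(M,K_2,f_2)$, one forms the PD homeomorphism $h:=f_2^{-1}\circ f_1\colon \overline{K_1}\to \overline{K_2}$ and tries to show that it is PD-isotopic to a PL homeomorphism $\overline{K'_1}\to \overline{K'_2}$ between common simplicial subdivisions $K'_i$ of $K_i$. The central tool is a PL approximation lemma: after sufficiently fine subdivision of $K_1$ and $K_2$, the restriction of $h$ to each simplex of $K'_1$ is close (in $C^0$ and in the derivative on each sub-cell) to its secant map, and this secant map is simplicial with respect to $K'_1$ and a matched subdivision of $K_2$. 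One verifies that for fine enough subdivision the secant approximation remains an injective PD map into $\overline{K_2}$, and that it is PD-isotopic to $h$ through the straight-line homotopy on each simplex. This isotopy exhibits $(K_1,f_1)$ and $(K_2,f_2)$ as equivalent in the PD category.

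The main obstacle will be the uniqueness step, and in particular the quantitative control needed to guarantee that the straight-line homotopy between $h$ and its secant approximation stays a PD homeomorphism. This is the technical heart of Whitehead's original argument: one must bound the derivatives of $h$ on each simplex uniformly from below and then choose subdivisions whose mesh is small compared to those bounds, which forces a delicate interplay between the smooth structure on $M$ and the combinatorial structure of $K_1$, $K_2$. The existence half, while nontrivial, is by contrast relatively robust and follows the Whitney embedding plus secant-retraction blueprint essentially mechanically.
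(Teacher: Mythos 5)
Your uniqueness argument is in line with the route the paper itself sketches: the paper reduces the theorem to the lemma that a non-degenerate $C^k$ (PD) map $h\colon\overline{K}\to M$ can, after sufficiently fine subdivision, be replaced by the secant (simplicial) map $L_h$ with $\|L_h-h\|\leq\epsilon$ and $\|\dif L_h-\dif h\|\leq\rho$, with non-degeneracy preserved, and applying this to $h=f_2^{-1}\circ f_1$ is exactly your plan. Two caveats there: the secant map of $h$ is not automatically simplicial with respect to a ``matched'' subdivision of $K_2$ (it is merely p.l.; compatible subdivisions have to be produced afterwards), and the quantitative step you flag requires not only a small mesh but also a uniform lower bound on the thickness/fullness of the simplices of the successive subdivisions, since otherwise the derivative estimate does not force injectivity. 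These are repairable and are indeed the technical heart of Whitehead's argument, as you say.

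The genuine gap is in your existence construction. If $K$ triangulates the tubular neighborhood $T\subset\R^N$, then $\overline{K}$ is $N$-dimensional, and the map obtained by pushing vertices to $M$ with the nearest-point retraction $\pi$ and extending affinely on each simplex is a map from an $N$-dimensional polyhedron onto the $n$-manifold $M$; it can never be a homeomorphism, and its differential on an $N$-simplex cannot be injective into an $n$-dimensional tangent space, so the stated conclusion fails as written. One cannot fix this by simply taking the $n$-skeleton of $K$ either, since that skeleton does not map homeomorphically onto $M$. The standard repair (Whitney/Cairns, and in spirit Whitehead's chart-by-chart construction) is to put $M$ in general position relative to a fine ambient complex so that only the cells of complementary dimension meet $M$, transversally and in single points, and to build the $n$-complex triangulating $M$ out of these intersection data (or to patch triangulated coordinate charts); this is precisely the delicate part that your sketch dismisses as ``essentially mechanical.'' The non-compact case via an exhaustion is fine in principle but inherits this gap from the compact step.
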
 
Originally Whitehead worked in the $C^1$-category \cite{whitehead:1940} instead of smooth manifolds and PD maps. Yet in this case, $\overline{K}$ is not necessarily a p.l.-manifold and thus the triangulation is not unique e.g. $S^5$ allows triangulation that are not p.l. manifolds \cite{cannon1979}.

The above theorem can be proven by showing that any map $f:\overline{K}\to M$ of class $C^k$ can be approximated by a p.l. map. Lets assume for simplicity that $K$ is finite then for every $\epsilon,\rho>0$ one can find a simplicial subdivision $K'$ of $K$ and a simplicial map $L_f$ defined by the values $f(x_i)$ on the vertices $x_i$ of $K'$ such that 
\beq
\|L_f-f\|\leq\epsilon\quad\text{and}\quad \|\dif L_f-\dif f\|\leq\rho
\eeq
on every simplex of $K$. Furthermore, the subdivision of $K$ can be chosen fine enough such that $L_f$ is non-degenerate if $f$ is non-degenerate, i.e. the Jacobian matrix has full rank at each point of $f$. 

On the other hand every p.l. manifold of dimension less than seven has a unique differentiable structure, thus to every p.l. n-manifold $\overline{K}$ with $n<7$, corresponds a unique triangulation $(K,f.M)$ of a smooth manifold $M$ up to diffeomorphism (see \cite{Munkres1960,Smale1959, Hirsch1963}). In dimension lower than four even every topological manifold has a unique p.l. and differentiable structure \cite{Moise1977}.

\end{document}